\newtheorem{definition}{Definition}
\newtheorem{theorem}{Theorem}
\begin{document}

\preprint{AIP/123-QED}

\title[Nonclassical Light and Metrological Power: An Introductory Review ]{Nonclassical Light and Metrological Power: An Introductory Review}

\author{Kok Chuan, Tan}
 \email{bbtankc@gmail.com}

\author{Hyunseok, Jeong}
 \email{h.jeong37@gmail.com}
 
\affiliation{%
Quantum Information Science Group \& Institute of Applied Physics, Department of Physics and Astronomy, Seoul National University, Seoul, 08826, Korea
}%

\date{\today}

\begin{abstract}
In this review, we introduce the notion of quantum nonclassicality of light, and the role of nonclassicality in optical quantum metrology. The first part of the paper focuses on defining and characterizing the notion of nonclassicality and how it may be quantified in radiation fields. Several prominent examples of nonclassical light is also discussed. The second part of the paper looks at quantum metrology through the lens of nonclassicality. We introduce key concepts such as the Quantum Fisher information, the Cram{\'e}r-Rao bound, the standard quantum limit and the Heisenberg limit, and discuss how nonclassical light may be exploited to beat classical limitations in high precision measurements. The discussion here will be largely theoretical, with some references to specific experimental implementations.
\end{abstract}

\maketitle

\tableofcontents

\section{Introduction}

Being an empirical science, our ability to understand nature through physics is deeply tied to our ability to measure things. Needless to say, the study of measurements, or metrology, is a foundational aspect of physics and indeed, all of the other natural sciences. Classical physics imposes certain natural limits, not related to the skill or the ingenuity of the observer, on our ability to perform precise measurements on physical systems. One of the crowning achievements of modern day quantum mechanics is the realization, and ultimate verification, that such limitations that apply to classical systems do not in fact extend over to quantum ones. The quantum regime therefore supplies us with a new bag of tricks, thus allowing us to look ever deeper into the inner workings of nature. The ultimate hope is that by doing so, the next step  forward in our understanding will be revealed.

This review is intended to introduce the topic of nonclassicality in light fields, with an eye on their applications in ultra high precision measurements. In classical mechanics, the primary limitation imposed on our ability to make precision measurements comes from energy. From another point of view, energy may be considered as being converted to measurement precision. As we go through the arguments, we will see that quantum mechanics provides another avenue. With the same amount of energy, it is possible to achieve levels of precision in the quantum regime that is orders of magnitude higher than what is possible in the classical regime. This level of precision is contingent on our ability to produce highly nonclassical states. In other words, quantum nonclassicality itself can be converted to measurement precision, thus presenting us with an alternate path towards achieving higher precision. Producing a nonclassical state and extracting metrological usefulness from it is by no means a trivial task, but at least this is only limited by our current techniques and ingenuity, rather than by any natural constraint. The field of quantum metrology, in the broadest terms, essentially concerns itself with coming up with ever more inventive ideas to (i) produce useful nonclassical states, and (ii) extract useful metrological content from them. Many of these ideas have seen applications in areas such as quantum information\cite{Braunstein2005}, biology\cite{Taylor2016} and imaging\cite{Berchera2019}.

Research into nonclassicality and metrology spans nearly six decades of continuous scientific progress, and covering all aspects of these two topics will go far beyond the scope and ambitions of this paper. Instead, the contents of this paper is intended to be a curated view of the subject focussing on what the authors feel are key developments.

This paper is mainly split into two parts.  In Section~\ref{sec::part1}, we will mainly discuss the notion of nonclassicality, how it may be defined and how it may be characterized, and provide examples of such nonclassical states of light. A survey of various approaches of quantifying nonclassicality in light is performed. In Section~\ref{sec::part2} the concept of metrological power will be discussed, where we loosely interpret metrological power as any metrological advantage that can be attributed solely to the nonclassicality of the state. We introduce concepts such as the quantum Fisher information and the Cram{\'e}r-Rao bound, and discuss scenarios where nonclassicality may be leveraged to surpass classical limits. We also briefly touch upon methods of generating nonclassical light. 

We hope that through the course of the ensuing discussions,
the interested reader will be able to develop an overall feel
for the subject and be sufficiently equipped to initiate a research direction of their own. Let us start by discussing what classicality means within the context of quantum optics.

\section{Classical and nonclassical light} \label{sec::part1}

\subsection{Defining classicality in quantum mechanics} \label{sec::classicality}

A more traditional treatment of classical light will begin with a description of electromagnetic fields using classical electrodynamics, which is then compared to the quantum regime when the field is subsequently quantized\cite{Grynberg2010}. This approach, while chronologically respecting the way quantum mechanics was developed,  slightly misrepresents the relationship between the classical and quantum regimes by suggesting that quantum electrodynamics somehow emerges from classical electrodynamics. The actual relationship is in fact much closer to the opposite. There is in fact no such thing as a classical system, much less a classical system that is "quantized". As far as we can tell, the whole of nature is quantum mechanical, so it is far more appropriate to say that classical physics emerges from quantum mechanics rather than the other way round.

We will therefore begin with the quantum description of light, which is more in line with the modern approach. 

As with all quantum systems, the dynamics of light is governed by the Hamiltonian. We consider the simplest possible representation of the Hamiltonian for a single mode of light with frequency $\omega$. In this case, the Hamiltonian takes on the form $$H=\hbar \omega (a^\dag a+\frac{1}{2}).$$ The operators $a$ and $a^\dag$ are called annihilation and creation operators, and they satisfy the fundamental commutation relation $\comm{a}{a^\dag}=1$. For convenience, we assume that $\hbar = \omega = 1$, such that $H=a^\dag a+\frac{1}{2}$. 

One may recall that this Hamiltonian is identical to the one describing a quantum harmonic oscillator. Indeed, one may define analogous position position and momentum operators, also called the $x$ and $p$ quadratures, as $x \coloneqq \frac{1}{\sqrt{2}}(a+a^\dag)$ and $p \coloneqq \frac{1}{\sqrt{2} i}(a-a^\dag)$ such that $\comm{x}{p}=i$ and equivalently write $H=\frac{1}{2}(x^2+p^2)$. This makes the connection between the quantum description of light and the quantum Harmonic oscillator explicit. Note that despite the notation $x$ and $p$, they  do not correspond to the actual physical position and momentum of light fields. They do however, provide us with a definition of phase space coordinates $(x,p)$ which we can then use to study quantum light. Physically, they can be interpreted as coordinates on a phasor diagram\cite{Grynberg2010} (see Fig.~\ref{fig::phasor}). In this picture $p$ and $x$ coordinates respectively corresponds to the amplitude of  wave components that are in phase and $\pi/2$ out of phase with respect to a given reference. These coordinates can be sampled in the laboratory via homodyne measurements\cite{Yuen1980, Yuen1983,Schumaker1984, Yurke1987, Vogel2006}.

\begin{figure}
\includegraphics[width = 0.8\linewidth ]{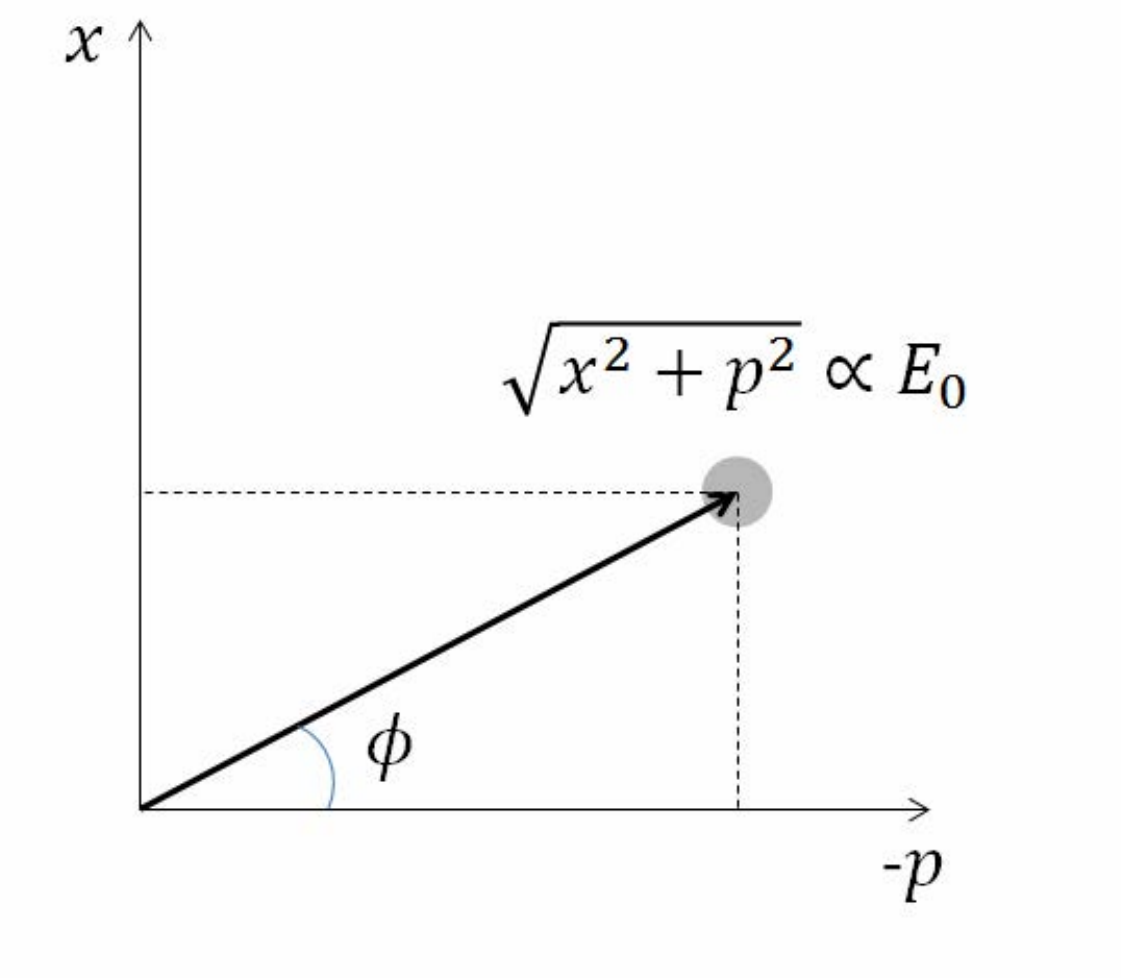}
\caption{\label{fig::phasor} Interpreting $(x,p)$ phase space coordinates as phasor diagrams. In classical mechanics, a point vector $(x,p)$ represents a plane wave with relative phase $\phi$ and electric field amplitude $E_0$ which is proportional to the magnitude $\sqrt{x^2+p^2}$. In quantum mechanics, it is not possible to represent a state with a single point, due to the uncertainty principle.}
\end{figure}


Given the Hamiltonian $H$, one may further show that its eigenstates are the well known Fock or number states $\ket{n}$. As far back as a century ago, Planck\cite{Planck1901} and Einstein\cite{Einstein1905} already demonstrated the existence of individual photons. The Fock states $\ket{n}$ are basically quantum descriptions of a single mode of light containing $n$ photons. One may  show that the Fock states, annihilation and creation operators satisfy the following elementary properties: \begin{align*}
a^\dag a \ket{n} &= n \ket{n} \\
a^\dag\ket{n} &= \sqrt{n+1}\ket{n+1} \\
a\ket{n} &= \sqrt{n} \ket{n-1} \\
\ket{n} &= \frac{a^\dag}{\sqrt{n!}}\ket{0}\\
\end{align*}

We have not yet arrived at our desired definition of classical light. Indeed, we see that the Fock states, despite emerging naturally as eigenstates of the Hamiltonian, clearly does not possess the requisite qualities of being classical, since individual photons were not suspected up until the advent of quantum mechanics. 

At the beginning of the section, it was mentioned that there is no such thing as a truly classical system. As far as we can tell, every physical system obeys quantum rules, so ``classical" light does not actually exist in the strictest sense of the word.

One can, however, make a compelling case that certain quantum states possess classical properties, at least more so that other quantum states. This can be done without ever leaving the quantum mechanical framework. The name "classical light" is therefore somewhat of a misnomer. They are actually quantum states of light that fully obeys quantum mechanical laws, but can be argued to possess properties that are closest in nature to what we traditionally see in a classical system.

One distinctive feature of classical physics is that classical systems can be described by a point in phase space. Within the context of light, this means that at any given point in time, one may specify both the electric field amplitude as well as the phase with perfect precision (See Fig.~\ref{fig::phasor}). We know that this is impossible in quantum physics, due to the well known Heisenberg uncertainty relation\cite{Heisenberg1927, Robertson1929} $$\Delta x \, \Delta p \geq 1/2,$$ where $\Delta O \coloneqq \sqrt{\langle O^2\rangle-\langle O \rangle^2}$ for any observable $O$. We know that every physical state must obey the uncertainty principle, so the question we should be asking is: among the states obeying the uncertainty principle, what kinds of states permits a description closest to a point particle in phase space? If we can find such a class of states, we will call these states ``classical" in the quantum mechanical sense of the word.

\subsection{Coherent states as classical states of light} \label{sec::cohStates}

If our starting point for the definition of classicality is how closely the quantum mechanical description resembles a point in phase space, then the answer to the previous question is clear. A classical state should be a minimal uncertainty state such that $$\Delta x \, \Delta p = 1/2.$$ Furthermore, classical dynamics treats both position and momentum variables on equal footing, so we should also have $$\Delta x = \Delta p=1/\sqrt{2}.$$

It turns out that only one class of quantum states satisfy the above constraints under quantum mechanics, and it is the class of coherent states\cite{Glauber1963}. These states were considered as far back as 1926 by Schr{\" o}dinger as special solutions to the harmonic oscillator problem\cite{Schrodinger1926}, but their relationship to quantum light was greatly expanded much later by Glauber\cite{Glauber1963} and Sudarshan\cite{Sudarshan1963}. They now play a foundational role in the field of quantum optics.

The set of coherent states may be defined as the set of eigenstates of the annihilation operator $a$, such that $$a \ket{\alpha} = \alpha \ket{\alpha}, $$ where $\alpha$ is in general a complex number. 

Another way to define coherent states is by first defining the displacement operator $$D(\alpha) \coloneqq e^{\alpha a^\dag - \alpha^* a}.$$ You then generate the set of coherent states by performing a displacement operation on vacuum: $$D(\alpha)\ket{0} = \ket{\alpha}.$$

They are called displacement operators because for any state $\ket{\psi}$, the map $\ket{\psi} \rightarrow D(\alpha)\ket{\psi}$ is equivalent to the maps $x \rightarrow x + \sqrt{2}\Re(\alpha)$ and $p \rightarrow p + \sqrt{2}\Im(\alpha)$. This is essentially a linear displacement on phase space coordinates $(x,p)$.

One may show that coherent states $\ket{\alpha}$ and the displacement operators $D(\alpha)$ obey the following set elementary properties:

\begin{gather*}
\ket{\alpha} = e^{-\abs{\alpha}^2/2} \sum^\infty_{n=0} \frac{\alpha^n}{\sqrt{n!}} \ket{n} \\ 
D^\dag(\alpha)D(\alpha) = \openone \\ 
D(\alpha+\beta) = D(\alpha)D(\beta)e^{-i\Im(\alpha \beta^*)} \\
D^\dag(\alpha) \, a \, D(\alpha) = a+\alpha \\
\langle n \rangle = \bra{\alpha}a^\dag a \ket{\alpha} = \abs{\alpha}^2 \\
\braket{\beta}{\alpha} = e^{-\abs{\beta}^2/2 -\abs{\alpha}^2/2 + \beta^*\alpha} \\
\bra{\alpha} x \ket{\alpha} = \sqrt{2}\Re(\alpha) \\ 
\bra{\alpha} p \ket{\alpha} = \sqrt{2}\Im(\alpha) \\
\frac{1}{\pi} \int \dd[2]\alpha \ket{\alpha}\bra{\alpha} = \openone \\
\end{gather*} 

Based on the above properties, one may then directly calculate that for coherent states, $\Delta x = \Delta p = 1/\sqrt{2}$ and that $\Delta x \, \Delta p = 1/2$ as required. We therefore established that the set of coherent states satisfies the requirements that were laid out at the beginning of this section. They can therefore be considered classical in this sense. 

Furthermore, one may also show that they are the only set of pure quantum states that can make this claim. In order to see this, we note that using the property $D^\dag(\alpha) \, a \, D(\alpha) = a+\alpha$, we can verify that $\Delta x$ and $\Delta p$ are invariant under displacement operations, regardless of the initial state $\ket{\psi}$. We can therefore displace any state such that it satisfies $\langle x \rangle = \langle p \rangle = 0$, which we will assume is satisfied without loss in generality. For such states, the variance is just given by $\Delta^2 x = \langle x^2 \rangle$ and $\Delta^2 p =\langle p^2 \rangle$. We recall that $H= a^\dag a+\frac{1}{2} = \frac{1}{2}(x^2+p^2)$, which leads to the following series of equations: 

\begin{align*}
\langle a^\dag a \rangle + \frac{1}{2} &= \frac{1}{2}(\langle x^2 \rangle+ \langle p^2 \rangle) \\
&= \frac{1}{2}(\Delta^2 x + \Delta^2 p) \\
&= \frac{1}{2},
\end{align*} where in the last line, we substituted in the classicality requirement $\Delta x= \Delta p = \frac{1}{\sqrt{2}}$. Clearly, this requires $\langle a^\dag a \rangle = 0$. Since $n \coloneqq a^\dag a$ is just the photon number operator, the only state with zero photons is the vacuum $\ket{0}$. As such, the vacuum state $\ket{0}$, up to a displacement operator, is the unique minimum uncertainty state satisfying $\Delta x= \Delta p = \frac{1}{\sqrt{2}}$. Since a displaced vacuum defines the set of coherent states, coherent states are the unique set of pure states that can be considered classical under our current definition.

That coherent states may be considered the most classical quantum states is further supported when we consider the dynamics of the system. The evolution of the state is completely described by the unitary evolution $U(t)=e^{-iHt}=e^{-i(a^\dag a +1/2)t}$. Under such dynamics, the coherent states evolves according to 
\begin{align*}
U(t)\ket{\alpha} &= e^{-\abs{\alpha}^2/2} \sum^\infty_{n=0} \frac{\alpha^n}{\sqrt{n!}} e^{-i(a^\dag a +1/2)t}\ket{n} \\
&= e^{-\abs{\alpha}^2/2} \sum^\infty_{n=0} \frac{\alpha^n}{\sqrt{n!}} e^{-i(n +1/2)t}\ket{n} \\
&= e^{-it/2}e^{-\abs{\alpha}^2/2} \sum^\infty_{n=0} \frac{(e^{-it}\alpha)^n}{\sqrt{n!}} \ket{n} \\
&= e^{-it/2} \ket{e^{-it}\alpha}.
\end{align*}
We see that coherent states are rotated in complex parameter space by the phase factor $e^{-it}$, but otherwise remain as coherent states under free time evolution. Since $\langle x \rangle = \Re(e^{-it}\alpha)$ and $\langle p \rangle = \Im(e^{-it}\alpha)$, we see that the time evolution in phase space is described by to an clockwise rotation along a circle with radius $\abs{\alpha}$. If we were to compute the wavefunction $\psi(x,t)$ by projecting the state onto the eigenstates $\ket{x}$ of the $x$ quadrature, we can verify that the probability density at each time $t$ is just a Gaussian wavepacket $$\abs{\psi(x,t)}^2 = \sqrt{\frac{1}{\pi}}e^{-(x - \sqrt{2}\Re[\exp(-it)\alpha] )^2},$$ where we assumed $m=\hbar = \omega =1$ for the parameters of the harmonic oscillators. We see that the probability density is just an oscillating Gaussian wavepacket at every time $t$. This dynamical behaviour is similar to what we would expect from a classical harmonic oscillator, except with a point particle replaced by a wavepacket, so the dynamics of coherent states are also similar to a classical system. 

Another strong argument that suggests that coherent states are classical comes from the physical systems that they represent. A coherent state $\ket{\alpha}$ is the quantum mechanical representation of coherent, monochromatic light source whose electric field amplitude is proportional to $\abs{\alpha}$, and relative phase is specified by $\arg(\alpha)$. Notwithstanding the fact that its working mechanism relies on quantum mechanics, the output of a laser source is typically considered to be close to an ideal classical light source: i.e. it is a source of strongly coherent, monochromatic light. The coherent state describes the output of a laser operating high above its threshold very well, although there had been some controversy on the theoretical side as to whether the output of laser can be safely assumed to be a coherent state\cite{Molmer1997, Rudolph2001, Enk2001, Wiseman2003}.

Thus far, we have only considered pure states. More generally, mixed quantum states can be represented via density operators which are statistical mixtures of pure states of the form $\rho = \int \dd x \, p(x) \ket{\psi(x)}\bra{\psi(x)}$ where $\int \dd x \, p(x) = 1$. Since we have already ascertained what states are the most classical among the pure quantum states, the generalization to mixed states is relatively straightforward. We consider any statistical mixture of pure classical states to also be classical. That is, if a density operator can be expressed in the form $$\rho = \int \dd[2]\alpha P_{\mathrm{cl}}(\alpha) \ket{\alpha}\bra{\alpha},$$ where $\int \dd[2]\alpha P_{\mathrm{cl}}(\alpha)=1$ is some positive probability density function, then we say that the quantum state is classical.

\subsection{Defining nonclassicality via the Glauber-Sudarshan $P$-function} \label{sec::nonclassicality}

So far, we have considered which states among the set of quantum states are considered the most classical. Based on this, nonclassical states may be defined almost immediately. By definition, any quantum state that is not classical, must be nonclassical. In terms of density operators, this means that nonclassical states are states which cannot be expressed in the form $\rho = \int \dd[2]\alpha P_{\mathrm{cl}}(\alpha) \ket{\alpha}\bra{\alpha}$ using some positive probability density function $ P_{\mathrm{cl}}(\alpha)$.

This definition of nonclassicality is however not necessarily the most natural one to adopt, as it does not suggest a method, analytical or otherwise, of determining whether a positive probability density function $P_{\mathrm{cl}}(\alpha)$ exists for an arbitrary mixed state $\rho$. 

A more natural definition of nonclassicality is possible if one moves away from the density operator representation of a quantum state. An alternative representation of a quantum state comes from the seminal work of Glauber\cite{Glauber1963} and Sudarshan\cite{Sudarshan1963}, who observed that any quantum state of light can be written in the form $$\rho = \int \dd[2]\alpha \, P(\alpha) \ket{\alpha}\bra{\alpha},$$ where $P(\alpha)$ is called the Glauber-Sudarshan $P$-function. Note the formal similarity to the definition of a classical state $\rho = \int \dd[2]\alpha P_{\mathrm{cl}}(\alpha) \ket{\alpha}\bra{\alpha}$. The key difference is that $P(\alpha)$ is a quasiprobability instead of a positive probability density function. This means that $P(\alpha)$ always is always normalized such that $\int \dd[2]\alpha \; P(\alpha) = 1$, but may permit negative values. When  $P(\alpha)$ does correspond to a positive probability density function however, we immediately see that the state must be classical. This leads to the following definition of nonclassicality.

\begin{definition}[Nonclassical states of light] \label{def::nonclassicality}
A quantum state of light is nonclassical iff its Glauber-Sudarshan $P$-function is not a positive probability density function. 
\end{definition}

In literature, it is sometimes stated that a state is nonclassical when the $P$-function is {\it negative or more singular than a delta function.} This does not contradict our definition as classical probability density functions do not contain singularities more exotic than delta functions. At the same time, the distinction between negativity and highly singular points is largely a point of technicality, as the existence of highly singular points always implies some notion of negativity\cite{Kiesel2010, Kuhn2018, Tan2019}. For the rest of this paper, we will treat "negative $P$-functions" and "nonclassicality" as basically interchangeable terms. 

The primary benefit of defining nonclassicality with respect to the $P$-function is that it points to a clear method, at least analytically, of determining whether a given state is nonclassical or not. Given some density operator $\rho$, the $P$-function may be computed in the following way. 

\begin{align}
P(\alpha) & = \int \dd[2]\beta P(\beta) \delta(\beta - \alpha) \\
&= \int \dd[2]\beta P(\beta) \frac{1}{\pi^2}\int \dd[2]\gamma e^{- (\gamma \alpha^* -\gamma^*\alpha )}e^{ (\gamma \beta^* - \gamma^* \beta)} \\ 
&= \int \dd[2]\beta P(\beta) \frac{1}{\pi^2}\int \dd[2]\gamma \; e^{\gamma^*\alpha - \gamma \alpha^*  } \bra{\beta} e^{\gamma a^\dag} e^{-\gamma^* a} \ket{\beta} \\ 
&=  \frac{1}{\pi^2}\int \dd[2]\gamma \; e^{\gamma^*\alpha - \gamma \alpha^*  } \Tr[e^{\gamma a^\dag} e^{-\gamma^* a} \int \dd[2]\beta P(\beta) \ket{\beta}\bra{\beta}] \\ 
&=  \frac{1}{\pi^2}\int \dd[2]\gamma \; e^{\gamma^*\alpha - \gamma \alpha^*  } \Tr[e^{\gamma a^\dag} e^{-\gamma^* a} \rho] \label{eq::computeP},
\end{align} where we used the identity $\delta(\beta - \alpha) = \frac{1}{\pi^2}\int \dd[2]\gamma e^{- (\gamma \alpha^* -\gamma^*\alpha )}e^{ (\gamma \beta^* - \gamma^* \beta)}$, which comes from property that the Fourier transform of a constant is proportional to the delta function. Since we can obtain the the $P$-function from the density operator, and the density operator can be retrieved via the identity $\rho = \int \dd[2]\alpha \, P(\alpha) \ket{\alpha}\bra{\alpha}$, they are equivalent representations of the quantum state. To decide whether or not a state is nonclassical however, one just has to determine whether $P(\alpha)$ displays any negativities.

At this juncture, it is also worth mentioning that $P$-functions are not the only quasiprobability distributions considered in quantum optics\cite{Cahill1969a, Cahill1969b}. Let us consider the previously derived expression $P(\alpha) =\frac{1}{\pi^2}\int \dd[2]\beta \; e^{\beta^*\alpha - \beta \alpha^*  } \Tr[e^{\beta a^\dag} e^{-\beta^* a} \rho]$. From the Baker-Campbell-Hausdorff formula, we have $e^{\beta a^\dag} e^{-\beta^* a} = e^{\beta a^\dag-\beta^* a+\abs{\beta}^2/2} = D(\beta)e^{\abs{\beta}^2/2}$. This leads to the simplified expression $$P(\alpha) =\frac{1}{\pi^2}\int \dd[2]\beta \; e^{\beta^*\alpha - \beta \alpha^*  } \Tr[D(\beta) \rho] e^{\abs{\beta}^2/2}.$$ From this, we observe that that the above expression is actually just the Fourier transform of the characteristic function $\Tr[D(\beta) \rho] e^{\abs{\beta}^2/2}$. One may generalize the characteristic function by adding a real parameter $s$ such that \begin{align} \label{eq::sCharFun}
\chi_s(\beta) \coloneqq \Tr[D(\beta) \rho] e^{s\abs{\beta}^2/2}.
\end{align}

The above is called the $s$-parametrized characteristic function. From the $s$-parametrized characteristic function, one may obtain the $s$-parametrized quasiprobability distribution function by considering the Fourier transform 
\begin{align} \label{eq::sQuasiProb} P_s(\alpha) \coloneqq \frac{1}{\pi^2}\int \dd[2]\beta \; e^{\beta^*\alpha - \beta \alpha^*  } \chi_s(\beta).
\end{align}
For every real value of $s$, we see that $\int \dd[2]\alpha P_s(\alpha) =  \chi_s(0) = 1$, so they are indeed quasiprobabilities. Typically, the range of values $s\in [-1,1]$ is considered. At $s=1$, we retrieve the $P$-function\cite{Glauber1963,Sudarshan1963}, at $s=0$, we obtain the Wigner function\cite{Wigner1932}, and at $s=-1$, we have the Husimi $Q$-function\cite{Husimi1940}. Negativities of the $s$-quasiprobabilties other than the $P$-function have also been previously considered within the context of nonclassicality\cite{Kenfack2004,Tan2019} (see Section~\ref{sec::negVolume}).

 Finally, to conclude this part of the discussion, we would like to mention that it is possible to consider different notions of nonclassicality apart from the one in Definition~\ref{def::nonclassicality}, so long as one justifies it with physical arguments. For instance, one may adopt anti-bunched light, or non-Gaussian light as their notion of nonclassicality. However, as further discussed in Section~\ref{sec::survey}, such definitions can often be viewed as special cases of Definition~\ref{def::nonclassicality}.

\subsection{Examples of $P$-functions} \label{sec::examplesPfun}

In this section, we will mainly discuss several important examples of states with known $P$-functions. These will include several classical states, but the main focus is on $P$-functions that are nonclassical.

\subsubsection{Coherent states}

The simplest $P$-functions are given by the coherent states, which are classical by definition. The $P$-function of a coherent state $\ket{\beta}$ can be written as $$P(\alpha) = \delta(\alpha - \beta),$$ so it is just the delta function. Classical states in general do not contain singularities more exotic than delta functions.

\subsubsection{Thermal states} \label{sec::thermalstates}

The thermal state describes a radiation field in thermal equilibrium with a heat bath at inverse temperature $\beta$. Its density operator has the form $$\rho = \frac{1}{Z}e^{-\beta a^\dag a}=(1-e^{-\beta})\sum_{n\geq 0} e^{-\beta n}\ket{n}\bra{n},$$ where $Z = \Tr(e^{-\beta a^\dag a})$ is the partition function and we assumed that $\hbar = \omega = 1$. The mean photon number of the thermal state is given by $\langle n \rangle = \langle a^\dag a \rangle = (1-e^{-\beta})^{-1}$. 

Using Eq.~\ref{eq::computeP}, one may directly compute the $P$-function of via the density operator, which gives the expression\cite{Schleich2001a} $$P(\alpha) = \frac{1}{\langle n \rangle \pi} e^{-\abs{\alpha}^2/\langle n \rangle}.$$ This is just an isotropic Gaussian distribution with variance $\sigma^2 = \langle n \rangle /2$. Every isotropic Gaussian distribution therefore corresponds to the $P$-function of a thermal state, up to some displacement operation. See Fig.~\ref{fig::SPAT} for a plot of the $P$-function of the thermal state.

\subsubsection{Fock states}

In Section~\ref{sec::classicality}, the Fock states $\ket{n}$ were introduced as the eigenstates of the Hamiltonian $H = a^\dag a +1/2 $, or alternatively, the number operator $n = a^\dag a$. They describe the quantum state of light containing a definite number $n$ of photons\cite{Hofheinz2008, Cirac1993, Varcoe2000,Bertet2002}.

One may show via direct calculation that the $P$-function of Fock states takes the form $$P(\alpha) = L_n\left [ -\frac{1}{4} \left (\pdv[2]{\Re(\alpha)}+ \pdv[2]{\Im(\alpha)}\right ) \right ] \delta(\alpha),$$ where $L_n$ is the $n$th Laguerre polynomial. In general, the $n$th Laguerre polynomial contains powers up to $n$, which suggests that the $P$-function contains derivatives of the delta function up to the $2n$th order. These are more singular than regular delta functions, so the state is nonclassical.

\subsubsection{Squeezed states} \label{sec::squeezedStates}

Together with Fock states, squeezed states\cite{Andrews2014, Lvovsky2016, Loudon1987, Slusher1985, Slusher1987, Kim1994} are perhaps the archetypal examples of nonclassical light. Just like the coherent states, it is a minimal uncertainty state so it satisfies $\Delta x \, \Delta p = 1/2$. Unlike coherent states however, squeezed states do not treat each quadrature equally, such that in general $\Delta x \neq \Delta p. $ This necessarily means that one of the quadratures is ``squeezed", such that, up to a rotation in phase space coordinates, $\Delta x$ or $\Delta p$ is less than $1/\sqrt{2}$. Squeezed states can be defined via the squeeze operator $$S(\epsilon) \coloneqq e^{ (\epsilon^* a^2 - \epsilon a^{\dag 2})/2}.$$ In general, $\epsilon = \abs{\epsilon} e^{i\theta}$ is a complex parameter. One may define the set of squeezed coherent states as $$\ket{\alpha,\epsilon} \coloneqq D(\alpha) S(\epsilon) \ket{0}.$$ Alternatively, one may also define the set of squeezed coherent states as the eigenstates of an operator such that \begin{align*}
&\cosh(\abs{\epsilon})a+e^{i\theta}\sinh(\abs{\epsilon})a^\dag] \ket{\alpha,\epsilon} \\ & \qquad = \left [ \alpha \cosh(\abs{\epsilon})+\alpha^* e^{i\theta}\sinh(\abs{\epsilon})\right ] \ket{\alpha, \epsilon}.
\end{align*}

The squeeze operator and squeezed states has the following elementary properties.

\begin{gather*}
\ket{0, \abs{\epsilon}} = \frac{1}{\sqrt{\cosh(\abs{\epsilon})}}\sum_{n \geq 0} [-\tanh (\abs{\epsilon})]^n \frac{\sqrt{(2n)!}}{2^n n!} \ket{2n} \\ S(\epsilon)^\dag S(\epsilon) = \openone \\
D(\alpha)S(\epsilon) = S(\epsilon)D\left [ \alpha \cosh(\abs{\epsilon})+\alpha^* e^{i\theta}\sinh(\abs{\epsilon})\right ] \\
S^\dag  (\epsilon) \, a \, S(\epsilon) = a \cosh(\abs{\epsilon}) -a^\dag e^{i\theta}\sinh(\abs{\epsilon}) \\
\langle n \rangle = \bra{\alpha, \epsilon} a^\dag a \ket{\alpha, \epsilon} = \abs{\alpha}^2 + \sinh^2(\abs{\epsilon}^2)\\
\bra{\alpha, \epsilon} x \ket{\alpha, \epsilon} = \sqrt{2}\Re(\alpha, \epsilon) \\ 
\bra{\alpha, \epsilon} p \ket{\alpha, \epsilon} = \sqrt{2}\Im(\alpha, \epsilon) 
\end{gather*}

Let us focus our attention on the squeezed vacuum state $\ket{0, \epsilon} = S(\epsilon) \ket{0}$. By performing a map $a \rightarrow e^{i\theta/2}a$, which corresponds to performing a rotation $R(\theta/2)$ in phase space, i.e. a $\theta/2$ rotation about the origin, we see that $S(\epsilon) \rightarrow S(\abs{\epsilon})$. This suggests that we can further write $\ket{\alpha, \epsilon} = D(\alpha)R(-\theta/2)S(\abs{\epsilon})\ket{0}$. In summary, this means that every squeezed coherent state is equivalent to a squeezed vacuum state $\ket{0, \abs{\epsilon}}$, up to a rotation followed by a displacement in phase space. Neither rotation nor linear displacements in phase space affects the nonclassicality properties of the state, so for the purpose studying nonclassicality, considering squeezed vacuum will suffice.

Consider the quadrature variances for the squeezed vacuum state $\ket{0, \abs{\epsilon}}$. It can be verified that they are $\Delta x = e^{-\abs{\epsilon}}/\sqrt{2}$ and $\Delta p = e^{\abs{\epsilon}}/\sqrt{2}$, so we see that the $x$ quadrature is indeed ``squeezed", while the $p$ quadrature is ``stretched" to compensate. Furthermore, $\Delta x \, \Delta p = 1/2$ so it is a minimum uncertainty state.

Finally, one may also verify that the $P$-function of the squeezed state\cite{Schleich2001a} $\ket{0, \abs{\epsilon}}$ reads $$P(\alpha) = \exp\left[\frac{1-s}{8s} \pdv[2]{\Re(\alpha)}  - \frac{1-s}{8} \pdv[2]{\Im(\alpha)}  \right] \delta(\alpha),$$ where $s \coloneqq 2e^{2\abs{\epsilon}}$. We therefore see that the $P$-function  contains infinitely high order derivatives of delta functions, which is a signature of nonclassicality.

\subsubsection{Cat states}

In quantum optics, cat states\cite{Yurke1986, Milburn1986,Milburn1986a,Schleich1991, Brune1992} often refer to equal superpositions of two coherent states of the form $$\ket{\mathrm{\psi_{\pm}}} \coloneqq \frac{1}{\sqrt{\mathcal{N}}} (\ket{\beta} \pm \ket{-\beta}), $$ where $\mathcal{N} \coloneqq 2(1\pm e^{-2\abs{\beta}^2})$ is the normalization constant. 
They are named after Schr\"odinger's  cat paradox \cite{Schrodinger1935} that illustrates a quantum superposition on a macroscopic scale.
Their properties as macroscopic superpositions are manifest when $\beta$ is sufficiently large \cite{Lee2011}.
Depending on the sign, we can write the states in the number basis as $$\ket{\psi_+} = \frac{2 e^{-2\abs{\beta}^2}}{\sqrt{
\mathcal{N}}} \left (\sum_{n\geq 0}\frac{\beta^{2n}}{\sqrt{(2n)!}}\ket{2n} \right) $$ or $$\ket{\psi_-} = \frac{2 e^{-2\abs{\beta}^2}}{\sqrt{\mathcal{N}}} \left (\sum_{n\geq 0}\frac{\beta^{2n+1}}{\sqrt{(2n+1)!}}\ket{2n+1} \right ). $$ We see that the former is a superposition of the even number Fock states, while the latter is a superposition of the odd number Fock states. For this reason, they are referred to as the even and odd cat states respectively.

They have a $P$-function of the form 
\begin{widetext}
\begin{align*}&P(\alpha) = \frac{1}{\mathcal{N}}\delta(\alpha - \beta) + \frac{1}{\mathcal{N}}\delta(\alpha + \beta) \\
&\pm \frac{2e^{\abs{\alpha}^2-\abs{\beta}^2}}{\mathcal{N}}\exp(-2\Re(\beta)\pdv{(2\alpha^* - 2i\Im(\beta))} )  \exp(2\Re(\beta)\pdv{(2\alpha-2i\Im(\beta))} ) \delta(2\alpha - 2i\Im(\beta)) \\ 
&\pm \frac{2e^{\abs{\alpha}^2-\abs{\beta}^2}}{\mathcal{N}}\exp(2\Re(\beta)\pdv{(2\alpha^*  - 2i\Im(\beta))} )  \exp(-2\Re(\beta)\pdv{(2\alpha-2i\Im(\beta))} ) \delta(2\alpha - 2i\Im(\beta)).
\end{align*}
\end{widetext} Again, we see that due to the exponential terms, the $P$-function contains infinitely high order derivatives of the delta function, thus indicating that the state is nonclassical.

\subsubsection{Nonclassical states with regular $P$-functions} \label{sec::smoothP}

The examples of nonclassical states discussed thus far has highly singular $P$-functions. While it is true that many of the states that are actively being studied displays such exotic singularities, there are in fact many states, especially when one considers mixed quantum states, where the $P$-function is a regular function but has negative values.

One example of this state is the single photon added thermal states. Single photon added thermal states have density operators of the form $$\rho = \coloneqq a^\dag e^{-\beta a^\dag a} a / \Tr(a^\dag e^{-\beta a^\dag a} a),$$ where we again assume that $\hbar = \omega = 1$. Their $P$-functions looks like\cite{Agarwal1992, Kiesel2008} $$P(\alpha) = \frac{1+n_{T}}{\pi n_{T}^3} \left (\abs{\alpha}^2-\frac{n_{T}}{1+ n_{T}} \right)e^{-\abs{\alpha}^2/n_{T}},$$ where $n_T = \Tr(a^\dag a  e^{-\beta a^\dag a} / Z)$ is the mean photon number of a thermal state at inverse temperature $\beta$. We see that when $\abs{\alpha}^2 < \frac{n_{T}}{1+ n_{T}}$ is sufficiently small, the $P$-function is negative, so the state is nonclassical. This is illustrated in Fig.~\ref{fig::SPAT}.

Another class of nonclassical states with regular $P$-functions are the set of so called punctured states\cite{Damanet2018}, which are $P$-functions that has the form $$P(\alpha) = \mathcal{N} \left [P_{\mathrm{cl}}(\alpha) - \sum_{i=1}^N w_i \pi_i(\alpha - \alpha_i) \right ],$$ where $\mathcal{N}$ is a normalization factor, $P_{cl}(\alpha)$ is some positive $P$-function, $w_i$ are positive real numbers, and $\pi(\alpha - \alpha_i)$ are positive distributions centred at $\alpha_i$. A punctured $P$ function is therefore a positive $P$-function which are ``punctured" with negative values at several points. For certain combinations of $P_{\mathrm{cl}}(\alpha)$, $w_i$ and $\pi(\alpha)$, one may show that they correspond to physical quantum states.

Finally, it was also observed that any of the $s$-parametrized quasiprobabilities introduced in Section~\ref{sec::nonclassicality} are themselves also valid $P$-functions. $s$-parametrized quasiprobabilities may be interpreted as the $P$-functions of states that are subject to varying degrees of interaction with thermal noise\cite{Lee1991, Kuhn2018a, Tan2019} (see also Section~\ref{sec::nonclassDepth}). Therefore, any $s$-parametrized quasiprobability which is a regular functions and has negative values is an example of a nonclassical state with a regular $P$-function. For instance, the $s$-parametrized quasiprobability of the Fock state, $\ket{n}$, which is given by $$P_s(\alpha) = \frac{2}{\pi(1+s)}\left (-\frac{1-s}{1+s} \right)^n \exp(-\frac{2\abs{\alpha^2}}{1+s})L_n\left ( \frac{4\abs{\alpha}^2}{1-s^2} \right ).$$ $L_n$ is the $n$th Laguerre polynomial. We see that for $s < 1$, there are no singularities, so it is just a regular function with negativities.

The nonclassicality that comes in the form of highly singular $P$-functions or negativities of regular $P$-functions are actually identical since the existence of negativities in highly singular $P$-functions is implied\cite{Kiesel2010, Kuhn2018, Tan2019}. This is also further discussed in Section~\ref{sec::negVolume}.

\begin{figure}
\includegraphics[width = 1\linewidth ]{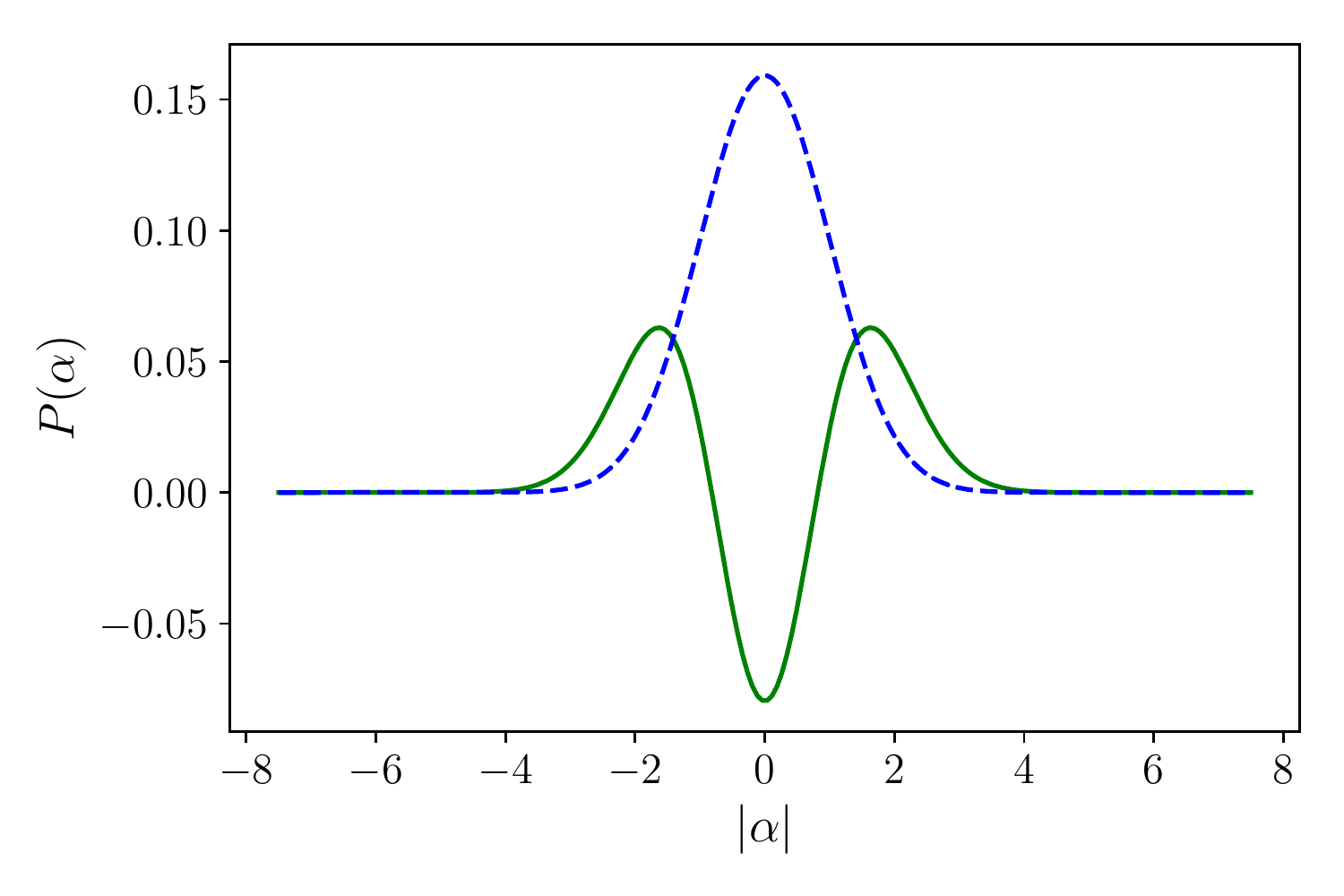}
\caption{\label{fig::SPAT} The $P$-function of a single photon added thermal state (solid line), overlaid with the $P$-function of a thermal state for $n_{T} = 2$. The $P$-functions are rotationally symmetric about the origin. The single photon added thermal state is an example of a nonclassical state whose $P$-function doesn't contain singularities.}
\end{figure}

\subsection{Survey of nonclassicality quantifiers} \label{sec::survey}

The discussion of nonclassicality presented thus far has been fairly binary in nature: either a state is classical or it is not. It is desirable however to be able to develop a more nuanced view of the subject and develop rigorous methods of discussing the extent of nonclassicality in a system. Indeed, recall that classical states are not truly classical systems (see Section~\ref{sec::classicality}), but rather the closest quantum description of one. We therefore see that right from the outset, the discussion about nonclassicality is necessarily a matter of degree. 

In the previous section, we discussed several examples of nonclassical states, many of which are highly singular, in the sense that they contain singularities more exotic that delta functions. This actually presents a real obstacle, both theoretically and experimentally, in the analysis of nonclassicality in quantum light. Theoretically, it is a problem because it is mathematically cumbersome to deal with such highly singular functions. In addition, states that \emph{require} a highly singular representation is nonclassical, but states that \emph{permit} a highly singular representation is not necessarily always nonclassical\cite{Sperling2016}. Experimentally, such highly singular points are not very well defined, which suggests that it is not feasible to sample the $P$-function directly in experiments. It will therefore be useful to be able to be able to capture the essential aspects of nonclassicality in a manner that is quantitatively informative, but also be able to avoid the technical difficulties of directly manipulating the $P$-function itself.

For the reasons above, an emerging topic in the field concerns the study of nonclassicality quantifiers. They form a set of proposals that allows us to consider nonclassicality in a more quantitative manner, and allows us to study nonclassical effects in a wider variety of systems. In this section, we will survey the variety of prominent nonclassicality quantifiers. While these nonclassicality quantifiers possess a variety of different attributes and interpretations, fundamentally speaking, they are capturing different aspects of the same notion of nonclassicality as Definition~\ref{def::nonclassicality}. 

\subsubsection{Mandel Q parameter} \label{sec::mandelQ}

One of the earliest attempts to quantify nonclassicality of light is the Mandel Q parameter\cite{Mandel1979}. We recall the representation of the coherent states in the Fock basis: $$\ket{\alpha} = e^{-\abs{\alpha}^2/2} \sum^\infty_{n=0} \frac{\alpha^n}{\sqrt{n!}} \ket{n}.$$ This gives rise to the number distribution $$P_n = e^{-\abs{\alpha}^2} \frac{\abs{\alpha}^2}{n!},$$ which has the form of a Poisson distribution\cite{Grynberg2010} $P_n = e^{-\lambda}\frac{\lambda^n}{n!}$ where $\lambda = \abs{\alpha}^2$. One distinctive feature of a Poisson distribution is that its mean and variance is equal. For the coherent states, this means that $\Delta^2n/ \expval{n}=1 $. For pure states, since coherent states are the only classical pure states\cite{Hillery1985}, this suggests that if $\Delta^2n/ \expval{n} - 1 \neq 0$, then the state must be nonclassical.

Let us consider a general classical state with density operator $\rho = \int \dd[2]\alpha P_{\mathrm{cl}}(\alpha)\ketbra{\alpha}$. The variance of the state is $\Delta^2 n = \Tr(n^2 \rho) - \Tr(n \rho)^2$. Since $(\cdot)^2$ is a convex function, we can use Jensen's inequality\cite{Rudin1987} to show that $\Tr(n \rho)^2 \leq \int \dd[2]\alpha P_{\mathrm{cl}}(\alpha)\Tr(n \ketbra{\alpha})^2$. This implies that 
\begin{align*}
\Delta^2 n &= \Tr(n^2 \rho) - \Tr(n \rho)^2 \\
&\geq \Tr(n^2 \rho) - \int \dd[2]\alpha P_{\mathrm{cl}}(\alpha)\Tr(n \ketbra{\alpha})^2 \\
&= \int \dd[2]\alpha P_{\mathrm{cl}}(\alpha) \left [\Tr(n^2 \ketbra{\alpha}) - \Tr(n \ketbra{\alpha})^2 \right ] \\ 
&= \int \dd[2]\alpha P_{\mathrm{cl}}(\alpha) \abs{\alpha}^2 \\
&= \expval{n}
\end{align*}

The above argument suggests that every classical state satisfies the inequality $\Delta^2n/\expval{n}-1 \geq 0$. This leads to the definition of the Q parameter $$Q \coloneqq \frac{\Delta^2n}{\expval{n}}-1.$$ 

The Q parameter quantifies the deviation from Poissianity. When $Q > 0$, we say that the state is super-Poissonian, when $Q < 0$ we say that it is sub-Poissonian, and when $Q=0$, the state is Poissonian. In terms of nonclassicality, only the sub-Poissonian regime matters, as sub-Poissonian states must be nonclassical. In the super-Poissonian regime, the $Q$ parameter alone is insufficient determine whether a state is classical or nonclassical. 

For instance, consider the squeezed vacuum $\ket{0, \abs{\epsilon}}$, which is a nonclassical state. One may verify that the squeezed vacuum has mean photon number $\expval{n} = \sinh^2(\abs{\epsilon})$ and variance $\Delta^2 n  = 2 \cosh^2(\abs{\epsilon})\sinh^2(\abs{\epsilon})$\cite{Andrews2014, Lvovsky2016}, hence $Q = 2\cosh^2 (\abs{\epsilon})-1 \geq 0 $ so it is in the super-Poissonian regime. More generally, $\ket{\alpha, \epsilon}$ may be either super-Poissonian or sub-Poissonian depending on the paramters. In contrast, a Fock state has $\Delta^2 n = 0$, so immediately we get $Q = -1$ and it is sub-Poissonian.

An important aspect of the $Q$ parameter is that it is related to the second order correlation function, also called the $g^{(2)}(0)$ correlation function\cite{Vogel2006}. The $g^{(2)}(0)$ correlation function is defined to be $$g^{(2)}(0)\coloneqq \frac{\expval{a^\dag a^\dag a a}}{\expval{a^\dag a}^2}.$$ It quantifies the observation of bunching/antibunching effects over an infinitesimally small detection window and can be measured in the laboratory in a Hanbury Brown-Twiss type experiment\cite{HanburyBrown1956} by measuring intensity correlations. Based on the definition of $g^{(2)}(0)$, it is not difficult to show that the $Q$ parameter is directly related to $g^{(2)}(0)$ via the relation $$Q = \expval{n}(g^{(2)}(0)-1).$$ When $g^{(2)}(0) < 1$, we are less likely to detect two photons over the detection window so we are in the anti-bunching regime.  We also see that $Q$ is negative when anti-bunching is observed. Therefore, negative $Q$, and hence nonclassicality, may be directly associated to anti-bunching. Observable anti-bunching effects is  a clear signature of a nonclassical light source\cite{Short1983, Hong1986, Kimble1977}. 

\subsubsection{Nonclassical distance} \label{sec::nonclassDist}

The nonclassical distance is a geometric based measure that was first proposed by Hillery\cite{Hillery1987}, who considered how one may distinguish between between a classical or a nonclassical state. He started with the trace norm, which is defined as $$\norm{A}_1 \coloneqq \Tr(\sqrt{A^\dag A}).$$ The trace distance between 2 operators $A$ and $B$ is then defined as the trace norm of the difference between the 2 operators, i.e. $$d_{\Tr}(A,B) \coloneqq \norm{A-B}_1.$$ Suppose we have two density operators $\rho_1$ and $\rho_2$. Then the trace distance acquires a particularly neat interpretation as the maximum probability of successfully distinguishing between $\rho_1$ and $\rho_2$ via a quantum measurement\cite{NielsenChuang}. 

Based on the trace distance, one may define the nonclassical distance of a state $\rho$ as \begin{align} \label{eq::nonclassicalityDist}
\delta_{\Tr}(\rho) \coloneqq  \inf_{\sigma_{\mathrm{cl}}}\norm{\rho-\sigma_\mathrm{cl}}_1 = \inf_{\sigma_{\mathrm{cl}}} \; d_{\Tr}(\rho, \sigma_{\mathrm{cl}}),
\end{align} where the minimization is over all classical states $\sigma_{\mathrm{cl}}$. The nonclassicality distance $\delta_{\Tr}(\rho)$ is therefore the distance between the state $\rho$ to the closest classical state. Of course, the geometric picture above does not depend on the particular choice of the distance measure. One may for instance also consider the Hilbert-Schmidt norm\cite{Dodonov2000} $$\norm{A}_{\mathrm{HS}} \coloneqq \sqrt{\Tr(A^\dag A)}$$ and the corresponding Hilbert-Schmidt distance $$d_{\mathrm{HS}}(A,B) \coloneqq \norm{A-B}_{\mathrm{HS}},$$ or the Bures fidelity\cite{Marian2002} between states $$F(\rho,\sigma) \coloneqq \Tr{\sqrt{\sqrt{\rho}\sigma \sqrt{\rho}}}$$ and the associated Bures distance $$d_{\mathrm{BU}}(\rho,\sigma) \coloneqq \sqrt{2-2F(\rho,\sigma)}.$$ We then obtain different nonclassical distances by appropriately substituting the distance measure in Eq.~\ref{eq::nonclassicalityDist}. 

In general, the definition in Eq.~\ref{eq::nonclassicalityDist} is difficult to compute, even if the state $\rho$ is completely known. This is because there is no known simple characterization of the geometry of classical and nonclassical states. However, the problem becomes much more tractable if one considers only the set of pure quantum states.

Suppose we limit ourselves to consider only the distances between two pure states $\ket{\psi}$ and $\ket{\phi}$. In this case, the respective distances between pure states are given by
\begin{subequations} \label{eq::distMeas}
\begin{align}
d_{\Tr}(\ket{\psi},\ket{\phi}) &= 2\sqrt{1 - \abs{\bra{\psi}\ket{\phi}}^2} \\
d_{\mathrm{HS}}(\ket{\psi},\ket{\phi}) &= \sqrt{2(1 - \abs{\bra{\psi}\ket{\phi}}^2)} \\
d_{\mathrm{BU}}(\ket{\psi},\ket{\phi}) &= \sqrt{2(1 - \abs{\bra{\psi}\ket{\phi}})}.
\end{align}
\end{subequations} We see that they share fairly similar expressions and  depend on the square overlap $\abs{\bra{\psi}\ket{\phi}}^2.$ Over the set of pure states, the only classical states are the coherent states, so we can consider the following alternative definition of nonclassical distance: 

\begin{align} \label{eq::nonclassDistPure}
\delta(\ket{\psi}) \coloneqq \inf_{\ket{\alpha}} \; d(\ket{\psi},\ket{\alpha}),
\end{align} 
where the minimization is over the set of coherent states $\ket{\alpha}$ and $d$ can be substituted with any of the above distance measures. For these measures, the minimization in Eq.~\ref{eq::nonclassDistPure} is achieved when $\abs{\bra{\psi}\ket{\alpha}}^2$  in Eq.~\ref{eq::distMeas} is maximal. We note that the overlap with a coherent state is related to the the Husimi $Q$-function\cite{Husimi1940} via $\abs{\bra{\psi}\ket{\alpha}}^2 = \pi Q(\alpha)$ (see Section~\ref{sec::nonclassicality}). In summary, when we consider only pure states, the nonclassical distance can be determined from the maximums of the Husimi $Q$-function\cite{Malbouisson2003}.

The Husimi $Q$-function can be probed directly in the laboratory using balanced homodyne measurements\cite{Vogel1989}, so in principle $\delta(\ket{\psi})$ can be directly measured, so long as one is reasonably confident that the output state is pure.

\subsubsection{Nonclassicality depth} \label{sec::nonclassDepth}

The nonclassicality depth was first introduced by Lee\cite{Lee1991} and then subsequently considered by L{\"u}tkenhaus and Barnett\cite{Lutkenhaus1995}. It originates from consideration of the $s$-parametrized quasiprobabilities introduced by Cahill and Glauber\cite{Cahill1969a,Cahill1969b} (see Section~\ref{sec::nonclassicality}, Eqs.~\ref{eq::sCharFun} and  \ref{eq::sQuasiProb}). We recall the $s$-parametrized characteristic function and perform a simple re-parametrization such that $\tau \rightarrow (1-s)/2$, $\chi_s(\alpha) \rightarrow \chi_\tau$ and $P_s(\alpha) \rightarrow P_\tau(\alpha)$. This leads to the characteristic function $$\chi_\tau(\beta) \coloneqq \Tr[D(\beta) \rho] e^{(1-2\tau)\abs{\beta}^2/2},$$ and the associated quasiprobability 
\begin{align*} P_\tau(\alpha) &\coloneqq \frac{1}{\pi^2}\int \dd[2]\beta \; e^{\beta^*\alpha - \beta \alpha^*  } \chi_\tau(\beta) \\
&= \frac{1}{\tau \pi} \int \dd[2]{\beta} e^{-\abs{\alpha - \beta}^2/\tau}P(\alpha)
\end{align*} where $P(\alpha)$ is the $P$-function of some given state $\rho.$ The last line indicates that the expression is just the convolution of $P(\alpha)$ with a Gaussian distribution with variance $\tau/2$. The Gaussian convolution applies a smoothing function to the $P$-function by averaging over the points around the point $\alpha$. In general, the larger the value of $\tau$, the stronger and more aggressive the smoothing.

For every $\tau \geq 0$, $P_\tau(\alpha)$ is a valid $P$-function of some quantum state\cite{Tan2019} (see Section~\ref{sec::smoothP}). One may then ask whether $P_\tau(\alpha)$ corresponds to the $P$-function of a classical state or not.  For a given state $\rho$ and corresponding set of quasiprobabilities $\{ P_\tau(\alpha) \}$, let $\mathcal{C}$ be the set of values $\tau$ such that $P_{\tau}(\alpha)$ corresponds to a positive probability distribution.

The nonclassical depth may then defined to be the quantity $$\tau_m(\rho) \coloneqq \inf_{\tau \in \mathcal{C}} \; \tau. $$ One may interpret this quantity as the amount of Gaussian smoothing required before $P(\alpha)$ becomes a classical positive distribution. We know that this is always possible because as $\tau = 1$, $P_{\tau}(\alpha)$ corresponds to the Husimi $Q$-function\cite{Husimi1940}, which is always a classical positive distribution for any state $\rho$. As result, we have that $ 0 \leq \tau_m(\rho) \leq 1$.

There is also a physical interpretation of the nonclassical depth $\tau_m(\rho)$ in terms of the amount of optical mixing with thermal noise~\cite{Lee1991,Kuhn2018a} that is necessary to make a state $\rho$ classical. This is illustrated in Fig.~\ref{fig::thermalMix}. If the input state $\rho$ interacts with a thermal state $\rho_{\mathrm{th}}$ via a highly transmissive beam splitter, then the $P$-function of the output state $\rho_{\mathrm{out}}$ is $P_\tau(\alpha)$ for some value of $\tau$. If the transmissivity $T\approx 1$ and the mean photon number of the thermal state is $\expval{n_{\mathrm{th}}}$, then it may be shown that $$\tau_m(\rho_{\mathrm{out}}) = \tau_m(\rho) - \frac{R^2}{T^2}\expval{n_{\mathrm{th}}},$$ where $R = \sqrt{1-T^2}$ is the reflectivity and $T^2+R^2 = 1$. When the temperature and hence $\expval{n_{\mathrm{th}}}$ is sufficiently large, $\tau_m(\rho_{\mathrm{out}})=0$ and the output $\rho_{\mathrm{out}}$ is classical, so we see that the nonclassicality depth may be understood as a measure of the ability of a nonclassical state to withstand thermal noise.

The nonclassicality depth may be worked out analytically for some states. For for the squeezed coherent states $\ket{\alpha, \epsilon}$, we have that $\tau_m(\ket{\alpha, \epsilon}) = (e^{2\abs{\epsilon}}-1)/(2e^{2\abs{\epsilon}})$ which suggests that $\tau_m(\ket{\alpha, \epsilon}) \in [0,1/2]$. This is monotonically increasing with $\abs{\epsilon}$ so the measure captures the nonclassicality from squeezing. 

However, it can also be shown that for the Fock states $\tau_m(\ket{n}) =  1$, regardless of the photon number $n$. We can also show that for the even and odd cat states $\tau_m(\ket{\psi_\pm}) =  1$. In fact, any non-Gaussian pure state has maximal nonclassicality depth\cite{Lutkenhaus1995}. For mixed states, one can also show that any state $\rho$ that has zero overlap with the vacuum, such that $\bra{0}\rho \ket{0} = 0$ is guaranteed to have maximal nonclassicality depth\cite{Lee1995} $\tau_m(\rho)=1$. We therefore see that the measure is unable to distinguish between many classes of nonclassical quantum states.

\begin{figure}  
\includegraphics[width = 0.8\linewidth ]{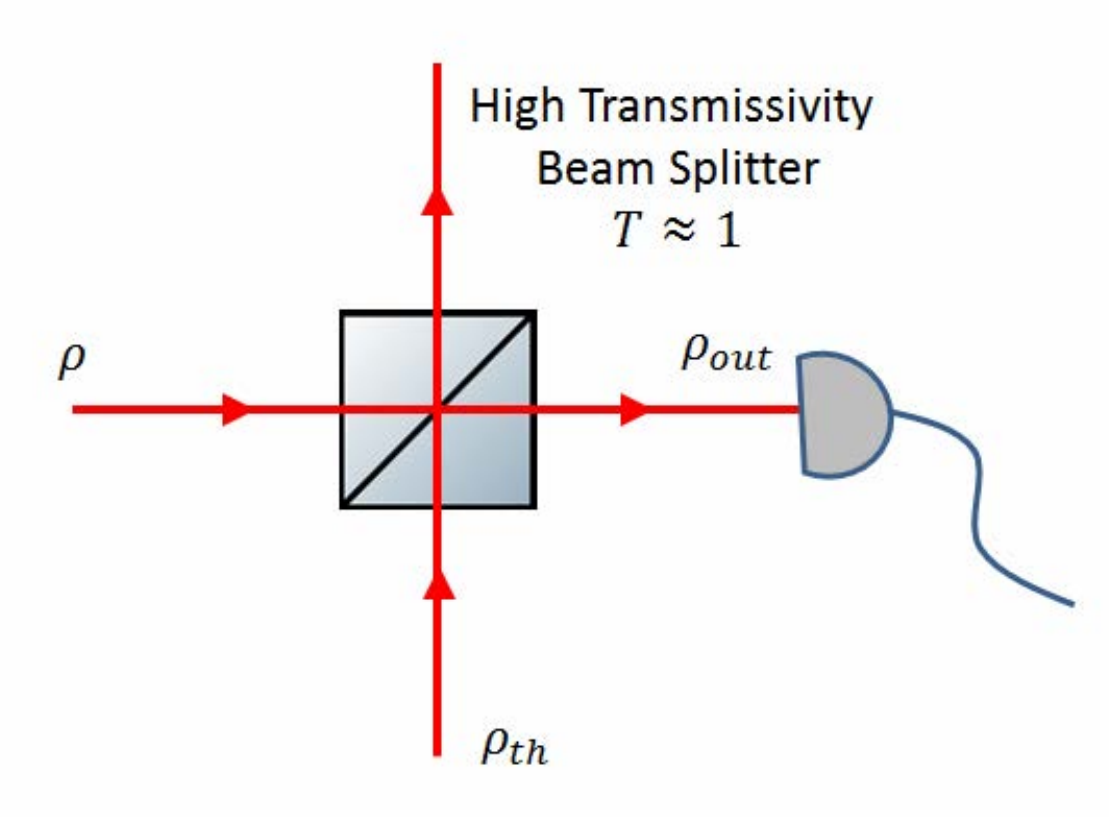}

\caption{ \label{fig::thermalMix} Optical mixing of $\rho$ with a thermal state $\rho_{\mathrm{th}}$. The $P$-function of the output state $\rho_{\mathrm{out}}$ going into the detector is a Gaussian convolution of the $P$-function of $\rho$ .}
\end{figure}

\subsubsection{Negative volume of quasiprobabilities} \label{sec::negVolume}

Recall the $s$-parametrized quasiprobabilities (Eq.~\ref{eq::sQuasiProb}, Section~\ref{sec::nonclassicality}) \begin{align} P_s(\alpha) \coloneqq \frac{1}{\pi^2}\int \dd[2]\beta \; e^{\beta^*\alpha - \beta \alpha^*  } \chi_s(\beta).
\end{align}

If any of the $s$-parametrized quasiprobabilities displays negativities, then the associated $P$-function must also be nonclassical. The main technical difficulty is that in general $P_s(\alpha)$ may display points more singular than a delta function, such as those we see in the examples discussed in Section~\ref{sec::examplesPfun}. The existence of such highly singular points leads to many technical difficulties.

It is known however, that for $s \leq 0$, the quasiprobabilities are always uniformly continuous functions\cite{Cahill1969a}, this allows us to sidestep the problem of highly singular points. As $s=0$ corresponds to the Wigner function, this prompted Kenfack and {\.Z}yczkowski\cite{Kenfack2004} to consider the negative volume of the Wigner function as a nonclassicality measure. The negative volume may be defined as $$N_w \coloneqq \frac{1}{2}[\int \dd[2]\alpha \abs{P_0(\alpha)}-1].$$

The primary benefit of this approach is that it is easy to compute for both pure and mixed quantum states, and that the Wigner function can be directly sampled in the laboratory\cite{Lvovsky2009}. However, it is also clear that there are nonclassical quantum states with positive Wigner functions, so the measure will fail to detect some nonclassical states. For instance, from Hudson's Theorem\cite{Hudson1974}, we know that the only pure quantum states with positive Wigner functions are coherent and squeezed states, but squeezed states are highly nonclassical. The negativity of the Wigner function also turns out to be a measure of non-Gaussianity, which is further discussed in Section~\ref{sec::nonGaussianity}.

More recently, inspired by the nonclassical filter approach of Ref.~\onlinecite{Kiesel2010}, Tan, Choi and Jeong\cite{Tan2019} considered expanding the notion of negative volume to every $P_s(\alpha)$. In order to sidestep the problem of singularities, they considered a filtered characteristic function of the form $$\chi_{s,w}(\alpha) \coloneqq \chi_s(\alpha)\Omega_w(\alpha),$$ and the corresponding filtered quasiprobability $$P_{s,w}(\alpha) \coloneqq \frac{1}{\pi^2}\int \dd[2]\beta \; e^{\beta^*\alpha - \beta \alpha^*  } \chi_{s,w}(\beta).$$ They showed that by applying an appropriate filtering function $\Omega_w(\alpha)$, they can ensure that $P_{s,w}(\alpha)$ contains no singularities for every $s$ and $w$, and that $P_{s,w}(\alpha) \rightarrow P_{s}(\alpha)$ as $w \rightarrow \infty$. This means that the positive and negative regions of $P_{s,w}(\alpha)$ are always well defined. This allows one to define the negative volume of every $s$-parametrized quasiprobability in the form of the limit $$N_s(\rho) \coloneqq \lim_{w \rightarrow \infty} \frac{1}{2}[\int \dd[2]\alpha \abs{P_{s,w}(\alpha)}-1].$$ We see that under this definition, if $P_s(\alpha)$ is a regular function with no singularities, we  retrieve the regular definition of negative volume $$N_s(\rho) \coloneqq \frac{1}{2}\left [\int \dd[2]\alpha \abs{P_s(\alpha)}-1 \right]. $$ $N_s(\rho)$ therefore forms a continuous hierarchy of well defined nonclassicality measures. In general, as $s$ decreases, $N_s(\rho)$ becomes a weaker measure in the sense that the number of nonclassical state it is able to identify decreases. At $s=1$, which is the negativity of the $P$-function itself, the measure identifies every nonclassical state, and has an operational interpretation as the robustness to classical noise. They also show that $N_s(\rho)$ belongs to a resource theory of nonclassicality, which is further discussed in Section~\ref{sec::resTheoryNonclass}.

\subsubsection{Entanglement potential}

Entanglement is another notion of nonclassicality that has gained considerable interest in the physics community\cite{Horodecki2009}. This interest is in part thanks to the advent of quantum information technologies, where many  quantum protocols are enabled by the esoteric properties of quantum entanglement\cite{NielsenChuang}. A state $\rho_{\mathrm{sep}}$ is said to be separable if it can be written as some convex combination of pure, product states such that $\rho_{\mathrm{sep}} = \sum_i p_i \ketbra{\psi_i} \otimes \ketbra{\phi_i},$ where $p_i$ is some probability distribution. If $\rho$ cannot be written in this form, then we say that the state is entangled.

Nonclassicality in light and quantum entanglement are in fact closely knit notions. It was first noted by Aharanov \textit{et al}\cite{Aharonov1966} that the only pure state of light that will produce a separable state after passing through a beam splitter is the coherent state. This observation subsequently extended to general mixed states\cite{kim2002, Wang2002}, and we now know that the nonclassicality of a light source is both necessary and sufficient to generate entanglement via a beam splitter\cite{Asboth2005}.

In order to see this, let us consider a 50:50 beam splitter. This can be described via a unitary $U_{50:50}$ that performs the transformation 
\begin{align*}
a_1 &\rightarrow \frac{1}{\sqrt{2}}(a_1-a_2) \\
a_2 &\rightarrow \frac{1}{\sqrt{2}}(a_1+a_2).
\end{align*} In particular, we see see that if the input states of the beam splitter are coherent states $\ket{\alpha}\ket{\beta}$, then the output state is just the product state $\ket{\frac{\alpha-\beta}{\sqrt{2}}}\ket{\frac{\alpha+\beta}{\sqrt{2}}}.$

Now, consider the case where the input state has the form $\rho \otimes \ketbra{0}$. Suppose the input state is classical, so $\rho = \int \dd[2]\alpha P_{\mathrm{cl}}(\alpha) \ketbra{\alpha}$ where $P_{\mathrm{cl}}(\alpha)$ is a positive probability distribution. It is clear that under the 50:50 beam splitter, the resulting output state is given by the transformation $$\rho \otimes \ketbra{0} \rightarrow \int \dd[2]\alpha P_{\mathrm{cl}}(\alpha) \ketbra{\frac{\alpha}{\sqrt{2}}} \otimes \ketbra{\frac{\alpha}{\sqrt{2}}}.$$ We see that it is a convex combination of product states, so it is always separable. This shows that in this setup, nonclassicality is a necessary condition to produce entanglement.

Now, suppose the output state is separable. This means that the output state can be written in the form $$\sum_i p_i \ketbra{\psi_i} \otimes \ketbra{\phi_i}.$$ This cannot contradict the fact that the initial state has the form $\rho \otimes \ketbra{0}$, which means that for every $i$, $U^{-1}_{50:50}\ket{\psi_i}\ket{\phi_i} = \ket{\psi'_i}\ket{0}$. As previously noted, the only pure input states that permits a separable state after passing through a beam splitter is the coherent state. As such, we must have that $\ket{\psi'_i}\ket{0} = \ket{\alpha_i} \ket{0}$ where $\ket{\alpha_i}$ is some coherent state. This in turn suggests that we can write $\rho = \sum_{i} p_i \ketbra{\alpha_i}$ where $p_i$ is some probability distribution, so $\rho$ has a classical $P$-function since the expression is a classical statistical mixture of coherent states. This means that if the output state is entangled, we can safely conclude that the input state is nonclassical. This proves the converse statement, so nonclassicality at the input is both a necessary and sufficient condition for the output fields of a beam splitter to be entangled. Note that the previous argument does not specifically rely on the fact that the beam splitter is $50:50$, so the same statement applies for all beam splitters. The primary motivation for choosing a 50:50 beam splitter relies on the fact that numerical evidence suggests that it generates greater amounts of entanglement as compared to unbalanced beam splitters\cite{Asboth2005}. 

Since nonclassicality always gives rise to entanglement at the output ports of the beam splitter, this motivates the following definition of the entanglement potential: $$\mathrm{EP}(\rho) = E \left [U_{50:50}\left (\rho\otimes \ketbra{0} \right )U^\dag_{50:50} \right ],$$ where $E$ can be any measure of entanglement. We have therefore have shifted the problem of quantifying nonclassicality to a problem of quantifying entanglement. Examples of entanglement measures include the logarithmic negativity\cite{Horodecki1998} and the relative entropy of entanglement\cite{Horodecki2000}.

However, even though entanglement is a very well understood phenomena, quantifying entanglement is not necessarily a simple task. For instance, while the logarithmic negativity is computable, it is not able to detect every entangled state, which in turn suggests it cannot quantify the nonclassicality of every state. In comparison, the relative entropy of entanglement can detect all entangled states, but is not computable in general. To date, there is no known measure of entanglement that is simultaneously easy to compute, and able to detect all entangled states.

\subsubsection{Negativity of normal ordered observables}

It is well known that it is easier to calculate the mean values of observables via the $P$-function when the observable is normal ordered than when it is not\cite{Schleich2001a}. A somewhat more surprising fact is that nonclassical $P$-functions treat normal ordered observables differently from non-normal ordered ones\cite{Shchukin2005}, and that this may be used as the basis of a nonclassicality quantifier\cite{Gehrke2012}. 

Recall that normal ordering operation $:(\cdot):$ simply means all the creation operators are moved to the left, while all the annihilation operators to the right. For instance, $:aa^\dag: = a^\dag a$.

Suppose we have some function of the creation and annihilation operators $f(a^\dag, a)$. By itself, this is not necessarily a Hermitian operator, so it may not be an observable. In contrast, the function $f^\dag(a^\dag, a)f(a^\dag, a)$ is always Hermitian, and in principle, always corresponds to some measurable observable. 

Consider the normal ordered observable ${:f^\dag(a^\dag, a)f(a^\dag, a):}$, which is also Hermitian. For a pure coherent state $\ket{\alpha}$, the expectation value of the observable is $\bra{\alpha}:f^\dag(a^\dag, a)f(a^\dag, a): \ket{\alpha} = f^*(\alpha^*, \alpha)f(\alpha^*, \alpha)= \abs{f(\alpha^*, \alpha)}^2 \geq 0.$ Note that the expectation value is  always positive. 

We now extend this observation to a general classical state $\rho = \int \dd[2]{\alpha} P_{\mathrm{cl}}(\alpha) \ketbra{\alpha}$, where $P_{\mathrm{cl}}(\alpha)$ is a positive classical distribution. Computing the expectation value again, we have the expression \begin{align*}
\expval{:f^\dag(a^\dag, a)f(a^\dag, a):} &= \Tr(:f^\dag(a^\dag, a)f(a^\dag, a): \rho )\\
&= \int \dd[2]{\alpha} P_{\mathrm{cl}}(\alpha) \abs{f(\alpha^*, \alpha)}^2.
\end{align*} Since $P_{\mathrm{cl}}(\alpha)$ and $\abs{f(\alpha^*, \alpha)}^2$ are always positive, we must have that $\expval{:f^\dag(a^\dag, a)f(a^\dag, a):} \geq 0$ for every classical state $\rho$. As a consequence, any observed negativity in the observed expectation value $\expval{:f^\dag(a^\dag, a)f(a^\dag, a):}$ must be a consequence of the negativity of $P(\alpha)$, which indicates nonclassicality. 

Motivated by the observation above, Gehrke, Sperling and Vogel\cite{Gehrke2012} defined the following quantity, which they call the operational relative nonclassicality: 
\begin{align*} 
R(\rho) \coloneqq 
\begin{cases}
\expval{:f^\dag f:}/\Delta, & \mathrm{if}\; \expval{:f^\dag f:}< 0\\
0, & \mathrm{otherwise}
\end{cases}
\end{align*}
where $\Delta \coloneqq \expval{:f^\dag f:} - \expval{f^\dag f} $ and $f \coloneqq f(a^\dag,a).$ The factor $\Delta$ is essentially a normalization factor, since $\expval{f^\dag f} \geq  0$ so $\expval{:f^\dag f:}/\Delta \leq 1$. Furthermore, one may also show that for any nonclassical state $\rho$, it is always possible to find some $f(a^\dag, a)$ such that $R(\rho) > 0$. 

Interestingly, one may also view this approach as a generalization of the Mandel $Q$ parameter\cite{Mandel1979} (see Section~\ref{sec::mandelQ}). For any given state $\rho$, we can choose $f \coloneqq  a^\dag a - \Tr(a^\dag a \rho) $. We then obtain $R(\rho) = - Q$, where $Q$ is exactly the Mandel $Q$ parameter.

Viewed as a generalization of the Mandel $Q$ parameter, the main technical complication is also similar. Just as the $Q$ parameter is unable to detect all nonclassical states, there is no guarantee that $R(\rho)$ will be able to detect every nonclassical state for any given choice of $f(a^\dag, a)$.

\subsubsection{Degree of nonclassicality} \label{sec::nonclassDegree}

In the theory of entanglement, it is a well known property of pure entangled states\cite{NielsenChuang} $\ket{\psi}_{ab}$ that, up to local unitary operations, they can always be written in the form $$\ket{\psi}_{ab} = \sum_{i=1}^r \lambda_i \ket{i,i},$$ where $\ket{i,i}$ is a product of local orthonormal basis states. The above is called the Schmidt decomposition\cite{Schmidt1906}, and the total number of superpositions involved is given by the Schmidt number $r$. The Schmidt number\cite{Terhal1999} is a well studied measure of entanglement.

We may also attempt to construct a similar construction for nonclassicality based on the number of superpositions. For pure states, the only nonclassical states are the coherent states. Furthermore, since the set of coherent states forms an overcomplete basis(see Section~\ref{sec::classicality}), any state can be written as a superposition of coherent states. Therefore, we can consider the minimum number of superpositions $r$ such that $$\ket{\psi}= \sum_{i=1}^r\lambda_i \ket{\alpha_i},$$ where $\{ \ket{\alpha_i} \}$ is some set of coherent states. One may then define the nonclassicality degree\cite{Gehrke2012} as $$\kappa(\ket{\psi}) \coloneqq r-1,$$ where it is clear that for any nonclassical pure state, we must have $\kappa \geq 0$.

In order to generalize this to mixed states we consider all possible pure state decompositions $\{ p_i, \ket{\psi_i} \}$ such that $\rho = \sum_i p_i \ketbra{\psi_i}.$ The nonclassicality degree of $\rho$ is defined as $$\kappa(\rho)\coloneqq \min_{\{ p_i, \ket{\psi_i} \}} \max_i \kappa(\ket{\psi_i}).$$ The above quantity is the largest nonclassicality degree $\kappa(\ket{\psi_i})$ found in every pure state decomposition $\{ p_i, \ket{\psi_i} \}$, minimized over all such decompositions. Such minimax constructions are typically called convex roof constructions\cite{Bennett1996,Uhlmann1998}.

There are several complications involved in applying this measure to nonclassical states. First, this is a discrete measure, which immediately means that some of the details and nuance of continuous nonclassicality measures are lost. For instance, the even cat states $\ket{\psi_+} \coloneqq \frac{1}{\sqrt{\mathcal{N}}}(\ket{\beta}+ \ket{-\beta})$ can be made arbitrarily close to the vacuum state $\ket{0}$ as $\abs{\beta} \rightarrow 0$, but $\kappa(\ket{\psi_+}) = 2$ even for very small $\ket{\beta}.$

Second, $\kappa(\rho)$ is generally not computable for an arbitrary mixed state $\rho$, due to the convex roof construction, which requires a minimization over all possible pure state decompositions. In general, this is a difficult proposition. 

Finally, there is no simple computational method to determine the minimum number of superpositions even for pure states. Determining the degree of nonclassicality will require mathematical analysis on a case by case basis, which may be quite complicated in general. For instance, the Fock states $\ket{n}$ requires an infinite number of superpositions of coherent states in the exact case\cite{Gehrke2012}, but can also be written as the limit of only $n$ superpositions\cite{Kuhn2018}, so it is not always apparent what the degree of nonclassicality is. 

\subsubsection{Operator ordering sensitivity}

We recall the $s$-parametrizes characteristic function (see also Eq.~\ref{eq::sCharFun}, Section~\ref{sec::nonclassicality}) \begin{align*} 
\chi_s(\beta) \coloneqq \Tr[D(\beta) \rho] e^{s\abs{\beta}^2/2}.
\end{align*} We can observe that the characteristic function is closely related to the displacement operator, which we recall has the form $$D(\beta) \coloneqq e^{\beta a^\dag - \beta^* a}.$$ The parameter $s$ can in fact be related to operator ordering, and is sometimes also called the order parameter\cite{Cahill1969a, Cahill1969b}. We can see this by directly applying the Baker-Campbell-Hausdorff formula\cite{Vogel1989} $e^Ae^B = e^C$ where $c = A+B+\comm{A}{B}/2+\comm{A}{\comm{A}{B}}/12-\comm{B}{\comm{A}{B}}/12\ldots.$ From the commutation relation $\comm{a}{a^\dag}=1$, we can show the following: 
\begin{align*} 
D(\beta) e^{\abs{\beta}^2/2} &= e^{\beta a^\dag}e^{- \beta^* a}  =\mathopen{:} D(\beta) \mathclose{:}  \\
D(\beta) e^{-\abs{\beta}^2/2} &= e^{- \beta^* a}e^{\beta a^\dag}  = \mathopen{\vdots} D(\beta) \mathclose{\vdots} , 
\end{align*} where $\mathopen{:} (\cdot) \mathclose{:}$ and $\mathopen{\vdots} (\cdot) \mathclose{\vdots}$ denotes the normal and antinormal ordering operations respectively. Therefore, when $s=1$, we have normal ordering as we can write $\chi_s(\beta) \coloneqq \Tr[\mathopen{:} D(\beta) \mathclose{:} \rho]$. When $s=-1$, we have antinormal ordering as we can write $\chi_s(\beta) \coloneqq \Tr[\mathopen{\vdots} D(\beta) \mathclose{\vdots} \rho]$. Furthermore, at $s=0$, we see from the Taylor expansion that $D(\beta) = e^{\beta a^\dag - \beta^* a} = \sum_{k\geq 0 } (\beta a^\dag - \beta^* a)^k/k! $ contains every possible permutation of the operators $a^\dag$ and $a$, which corresponds to symmetric ordering. Therefore, as we increase the value of $s$ from $s=-1$ to $s=1$, we are transitioning from antinormal ordering to symmetric ordering to normal ordering. Lee's\cite{Lee1991} nonclassicality depth (see Section~\ref{sec::nonclassDepth} is essentially based on a reparametrization of the order parameter $s$, so it also has an interpretation in terms of operator ordering.

In Ref~\onlinecite{Bievre2019}, Bi{\`e}vre \textit{et al.} introduced the $s$-ordered entropy of a state\[  \label{eq::ordSensClass} H(s,\rho) \coloneqq -\ln(\pi \norm{P_s}^2_2),\] where $P_s$ is the $s$ parametrized quasiprobability defined in Eq.~\ref{eq::sQuasiProb} and $\norm{P_s}^2_2 \coloneqq \int \dd[2]{\alpha} P_{s}^2(\alpha)$ is just the square integral. When $P_S(\alpha)$ is a classical probability distribution function, then $H(s,\rho)$ is an entropy measure belonging to the family of R{\'e}nyi\cite{Renyi1960} entropies. They were able to demonstrate that for any classical state $\rho_{\mathrm{cl}}$, the derivative $H'(s,\rho)\coloneqq\pdv{s}H(s,\rho)$  will always satisfy the following inequality: $$0 \leq -H'(0,\rho_{\mathrm{cl}}) \leq 1. $$ This implies the nonclassicality condition: \begin{align} \label{eq::ordSensCriterion} S_o(\rho) \coloneqq -H'(0,\rho_{\mathrm{cl}}) >  1 \Rightarrow \rho \text{ is nonclassical.} \end{align}Therefore, the sensitivity of the entropy at $s=0$ to a small increase in $s$ may be used to as a nonclassicality criterion. For this reason, $S_o$ is called the operator ordering sensitivity. 

Recall that at $s=0$, $P_{s}(\alpha)$ is the Wigner function. It is unclear from physical grounds why the sensitivity at $s=0$ proves particularly important. However, the above criterion can be given a geometric interpretation. Suppose instead of $\rho$, we consider the space of $\tilde{\rho} \coloneqq \rho/\sqrt{\Tr(\rho^2)}$, which is just the density operator space scaled by the purity of the state.  It can be shown that $S_o(\rho)$ satisfies all the properties of a norm on the space of $\tilde{\rho}$, so we can further write $S_o(\rho) \coloneqq \norm{\tilde{\rho}}_{o}$. Finally, following the same procedure as the nonclassicality distance\cite{Hillery1987, Dodonov2000,Marian2002}(see Section~\ref{sec::nonclassDist}), we can define the distance measure $d_o(\tilde{\rho}, \tilde{\sigma}) \coloneqq \norm{\tilde{\rho}-\tilde{\sigma}}_{o}$ and consider the geometric measure $$\delta_0(\rho) \coloneqq \inf_{\tilde{\sigma}_\text{cl}} d_o(\tilde{\rho}, \tilde{\sigma}_\text{cl}),$$ where the minimization is over $\tilde{\sigma}_\text{cl} = \sigma_{\text{cl}}/\Tr(\sigma^2))$ where $\sigma_\text{cl}$ is classical. This quantifies the distance to the closest classical state on the space of $\tilde{\rho}$. In this picture, Eq.~\ref{eq::ordSensCriterion} says that all classical states lie inside the unit ball on this scaled space. If a state $\tilde{\rho}$ is found outside of the unit ball, then it must be nonclassical. The main difference between this approach and the nonclassical distance is the rescaling of the geometry by purity.

The geometric picture also suggests the inequality $$ S_o(\rho) -1 \leq \delta_0(\rho) \leq S_o(\rho),$$ so the geometric nonclassicality measure is always bounded by the operator ordering sensitivity $S_o(\rho)$.

The authors\cite{Bievre2019} were careful to point out that the geometric measure $\delta_0(\rho)$ is the nonclassicality measure, while the operator ordering sensitivity $S_o(\rho)$ is just a bound. In general, $\delta_0(\rho)$ is not readily computable. Nonetheless, the bound becomes sufficiently tight when $\delta_0(\rho) \gg 1$ and $S_o(\rho)$ is computable given the eigendecomposition of $\rho$. For pure states $\ket{\psi}$, we have $S_o(\ket{\psi}) = \Delta^2x +\Delta^2p$ which is just the sum of quadrature variances. This coincides with some other previously considered measures for pure states\cite{Lee2011, Kwon2019}. It is clear that only coherent states satisfy $S_o(\ket{\psi}) = 1$, so the criterion in Eq.~\ref{eq::ordSensCriterion} is sufficient to detect every nonclassical pure state, but is not sufficient in general to detect every nonclassical mixed state.

\subsubsection{Measures of non-Gaussianity} \label{sec::nonGaussianity}

A Gaussian state is a special class of optical quantum states whose Wigner function ($P_0(\alpha)$ in Section~\ref{sec::nonclassicality}) is a Gaussian function\cite{Adesso2014}. For the (single mode) description of Gaussian states, it is more convenient to use  the cartesian coordinates  $\mathbf{r} \coloneqq (\sqrt{2}\Re(\alpha),\sqrt{2}\Im(\alpha) )$ in phase space over the complex variable $\alpha$. Let us denote the Wigner function as $W(\mathbf{r}) \coloneqq P_0(\alpha)$.

By definition, every Gaussian state $\rho_{\text{G}}$ has a Wigner function of the form 
\begin{align} \label{eq::gaussianState}
W(\mathbf{r}) = \frac{1}{2\pi\sqrt{\det V}} \exp[-\frac{1}{2}(\mathbf{r}- \bar{\mathbf{r}})^TV^{-1}(\mathbf{r}- \bar{\mathbf{r}})], 
\end{align}
where $\bar{\mathbf{r}} = (\expval{x}, \expval{p})$ and $V$ is the covariance matrix, which is given by 
$$\begin{bmatrix}
\Delta^2x &\expval{ \frac{1}{2}\anticommutator{x-\expval{x}}{p-\expval{p}}} \\  
\frac{1}{2}\expval{\anticommutator{x-\expval{x}}{p-\expval{p}}}  & \Delta^2p
\end{bmatrix}.$$ Examples of Gaussian states include coherent states, thermal states, and squeezed states.

Also relevant are the set of Gaussian operations. A Gaussian unitary $U_\text{G}$ is any combination of displacement operations, phase shifters, beam splitters and squeezing operations\cite{Ma1990,Cariolaro2016}. A general Gaussian operation $\Phi_\text{G}$ is any operation that can be written in the form $$\Phi_\text{G}(\rho_1) = \Tr_2[ U_\text{G} \rho_1 \otimes \ketbra{0}_2 U_\text{G}^\dag].$$ Such maps always maps a Gaussian state to another Gaussian state.   

From the above, we see that Gaussian states permit a particularly simple description only in terms of the first and second moments $\bar{\mathbf{r}}$  and $V$. By considering only an initial Gaussian state, and then performing Gaussian operations, we can stay completely within the Guassian regime and thereby work out every required property by considering only $ \bar{\mathbf{r}}$ and $V$. Many Gaussian states can also be produced under laboratory settings\cite{Braunstein2005}. As a result, the properties of Gaussian states are particularly well understood and confirmed by experiments, resulting in a whole subfield called Gaussian quantum information\cite{Weedbrook2012}. 

However, it should be clear that Gaussian states comprise only a small subset of the possible quantum states. It is therefore not a surprise that many quantum protocols are not possible if one stays strictly within the Gaussian regime\cite{Lloyd1999, Eisert2002, Giedke2002, Fiurasek2002, Bartlett2002, Cerf2005, Menicucci2006, Niset2009, Zhang2010, Ohliger2010}. This has prompted the study of non-Gaussian states as a possible supplement to Gaussian resources in order to fill this gap and hence led to the development of a family of measures of non-Gaussianity.

In the strict definition of the non-Gaussianity, every quantum state whose Wigner function is not a Gaussian distribution is considered non-Gaussian. Several non-Gaussianity measures have been proposed according to this strict definition, most of which are geometric based measures similar to the nonclassicality distance\cite{Genoni2007, Genoni2008, Genoni2010, Ivan2012, Marian2013, Ghiu2013, Park2017} (see Section~\ref{sec::nonclassDist}) which tries to measure the distance of a given state $\rho$ to the closest Gaussian state $\rho_\text{G}$. In this strict definition, there is no clear relationship between non-Gaussianity and nonclassicality, as many non-Gaussian states are classical. A simple example of this is the equal mixture of two coherent states $\rho = (\ketbra{\alpha_1}+\ketbra{\alpha_2})/2$, which is clearly classical. Such states have two peaks and clearly cannot be written in the form of Eq.~\ref{eq::gaussianState}, so they must be non-Gaussian. This points to yet another issue, which is that the set of Gaussian states is not a convex set. This means that it is possible to mix two Gaussian state to form a non-Gaussian state, thereby producing non-Gaussianity. It is not clear why the non-Gaussianity of such states would lead to any interesting quantum effects.

More recently, there have been proposals to formulate a quantum resource theory of non-Gaussianity\cite{Albarelli2018, Lami2018, Takagi2018, Zhuang2018, Park2019} where the definition of non-Gaussianity  is modified to include any quantum state that is not inside the convex hull of Gaussian states. (See Section~\ref{sec::resTheoryNonclass} for a more in depth description of quantum resource theories.) According to this definition, only states that cannot be written in the form $$\rho = \sum_i p_i \rho^i_G,$$ where $p_i$ is a probability distribution and $\rho^i_G$ is some Gaussian state, is a genuine non-Gaussian resource. Since coherent states are Gaussian, this means that every state with a classical $P$-function lies within the convex hull of Gaussian states. As a consequence, this newly redefined, genuine non-Gaussian resource states must also have nonclassical $P$-functions. Non-Gaussianity of this type are therefore genuinely quantum in nature. Indeed, the negativity of the Wigner function was one of the proposed measures of non-Gaussianity\cite{Albarelli2018, Takagi2018}. Given the Wigner function $W(\mathbf{r})$ of some given state $\rho$, the logarithmic negativity of $\rho$ is defined as $$L_w(\rho) \coloneqq \log \int \dd[2]\mathbf{r} \abs{W(\mathbf{r})},$$ which quantifies the (logged) negative volume of the Wigner function. We already know that the negativity of the Wigner function implies a nonclassical $P$-function (see Section~\ref{sec::nonclassicality} as well as Section~\ref{sec::negVolume}).

However, even with such a redefinition, it remains debatable whether measures of non-Gaussianity can be considered a measure of nonclassicality. The convex hull of Gaussian states necessarily contain many nonclassical states, with the most prominent being the squeezed coherent states (see Section~\ref{sec::squeezedStates}). Any non-Gaussianity measure will therefore  exclude such states. For instance, the Wigner function of a squeezed state is always positive, so the corresponding Wigner negativity will always be zero.

Furthermore, under the resource theoretical approach, there is a strict requirement that measures of non-Gaussianity do not increase under Gaussian operations, which includes squeezing operations. Such non-Gaussianity measures therefore cannot capture any increase in nonclassicality due to squeezing. There is no apparent way to resolve the aforementioned issues because they are a feature of the definition of non-Gaussianity itself. As such, since the starting point of non-Gaussianity is  qualitatively different from nonclassicality, it is perhaps more appropriate for it to be considered a concept with significant overlap with the notion of nonclassicality, rather than a measure of nonclassicality itself.

\subsubsection{Resource theory of nonclassicality} \label{sec::resTheoryNonclass}

In the previous section, the resource theoretical approach towards quantifying non-Gaussianity was briefly discussed. While the notion of non-Guassianity has significant overlap with nonclassicality, it does not completely address the nonclassicality of light per se(see discussion in Section~\ref{sec::nonGaussianity}). As such, there have been recent proposals to adopt the resource theoretical approach to directly quantify the nonclassicality of light. This section will discuss the recent developments in this space.

A quantum resource theory\cite{Chitambar2019} is a framework for quantifying various notions of quantumness. In general, there are many different kinds of quantum resource theories. Examples include the resource theories of entanglement\cite{Horodecki2009}, coherence\cite{Streltsov2017}, and the aforementioned resource theory of non-Gaussianity (see Section~\ref{sec::nonGaussianity}). While many different resource theories are currently being studied, the underlying approach remains broadly the same across all such theories. The essential idea is to cast different notions of quantumness as resources that are not freely available.

Let us define this concept more precisely. Suppose we have a well defined set of classical states $\mathcal{C}$, which is a strict subset of the Hilbert space. Any state that does not belong to $\mathcal{C}$ is  nonclassical by definition. Associated with the set of classical states $\mathcal{C}$, let us also define some set of operations $\mathcal{O}$, which is a strict subset of the set of all possible quantum operations, with the only requirement being that if $\Phi \in \mathcal{O}$ and $\rho \in \mathcal{C}$, then $\Phi(\rho) \in \mathcal{C}$. In other words, we require that any quantum operation belonging to $\mathcal{O}$ be unable to produce nonclassical states from classical ones.

For a given resource theory, we then require that any measure of nonclassicality $N(\rho)$ to be a nonnegative quantity that satisfies the following properties:

\begin{enumerate}
\item \label{item::classState} $N(\rho) = 0$ if  $\rho \in \mathcal{C}$.

\item \label{item::monotonicity}  (Monotonicity)  $N(\rho) \geq N(\Phi(\rho))$ if $\Phi \in \mathcal{O}$.

\item \label{item::convexity} (Convexity), i.e. $N(\sum_i p_i \rho_i) \leq \sum_i p_i N( \rho_i)$ .

\end{enumerate}

Property~\ref{item::classState} simply requires that the measure $N(\rho)$ returns positive values only when $\rho$ is nonclassical. Property~\ref{item::convexity} requires that $N(\rho)$ be a convex function of state. This is to ensure that you cannot increase nonclassicality by creating a simple statistical mixture of states $p\rho+(1-p)\sigma$. Such statistical mixing processes clearly does not involve quantum processes, and so cannot be expected to increase quantum nonclassicality in any reasonable measure $N$.

Property~\ref{item::monotonicity} requires that $N(\rho)$ always monotonically decreases if an operation $\Phi$ is an operation of $\mathcal{O}$. The monotonicity property is perhaps the defining property of all resource theoretical measures. It encapsulates the idea that one can neither freely produce nor increase quantum nonclassicality by performing any operation in $\mathcal{O}$. In this sense, nonclassical states $\rho$, and the nonclassicality of the state $N(\rho)$ are both resource that are not freely available. Under the resource theoretical framework, nonclassical quantum states acquire an interpretation as resources that overcomes the limitations of classical states $\mathcal{C}$ and operations $\mathcal{O}$.

For the quantification of nonclassicality in light, the set of classical states is unambiguous: $\mathcal{C}$ must be the set of states with classical $P$-functions (Section~\ref{sec::nonclassicality}). The set of operations $\mathcal{O}$ therefore needs to be defined in order to formulate a resource theory of nonclassicality. The earliest known proposal to formulate a resource theory of nonclassicality for light is by Gehrke \emph{et al.}\cite{Gehrke2012,Sperling2015}. There, it was proposed that $\mathcal{O}$ be the maximal set of quantum operations $\Phi$ that always maps every classical state $\rho_{\text{cl}} \in \mathcal{C}$ into another classical state $\Phi(\rho_{\text{cl}}) \in \mathcal{C}$. It was subsequently shown that under this proposal, the nonclassicality degree (see Section~\ref{sec::nonclassDegree}) satisfies Properties~\ref{item::classState},\ref{item::monotonicity} and\ref{item::convexity}. Other examples of measures belonging to this resource theory are the nonclassicality distance (see Section~\ref{sec::nonclassDist}, in particular for the trace\cite{Hillery1987} and Bures\cite{Marian2002} distance based measures. This is because both trace and Bures distances are known to monotonically decrease under general quantum maps\cite{Gilchrist2005}, which guarantees that the nonclassical distance also monotonically decreases under $\mathcal{O}$. 

The primary difficulty with Gehrke \emph{et al.}'s approach is that while $\mathcal{O}$ is simple to define, there is no known characterization of the set of operations $O$ and what kind of operations they represent physically. We recall that in a resource theory, one of the motivation is to cast nonclassicality as a resource that overcomes the limitations of the set of operations $\mathcal{O}$. In this case, there is no clear argument from physical grounds why one should be interested to overcome the limitations inherent to this definition of $\mathcal{O}$.

Subsequently, Tan \textit{et al.}\cite{Tan2017}, noting that nonlinear operations are required in order to produce nonclassical states, proposed a resource theory of nonclassicality based on the set of linear optical operations. A unitary linear optical operation $U_L$ is defined to be the set of passive linear optical elements (i.e. any combination of beam splitters, mirrors, and phase shifters) supplemented by displacement operations. Let $a^\dag_k$ be the creation operator of the $k$th mode, then $U_L$ represents any transformation of the type $$a_i^\dagger \rightarrow \sum_{k=1}^K \mu_k a_k^\dagger +  \bigoplus_{k=1}^K \alpha_k \openone_k$$ where $\mu_k$ are any complex values satisfying $\sum_{k=1}^K \abs{\mu_k}^2=1$ and $\alpha_k$ are arbitrary complex numbers. More generally, a linear optical map is defined to be any map $\Phi_L$ that can be expressed in the form  $$\Phi_L(\rho_A) = \mathrm{Tr}_E(U_L \rho_A \otimes \sigma_{\text{cl}} U_L^\dag),$$ 
where $\sigma_{\text{cl}}$ is some classical state. By defining $\mathcal{O}$ to be the set of linear optical maps, we can see that the set $\mathcal{O}$ is not only simple to define, it is also well characterized with a clear physical interpretation. Under this approach, nonclassicality may be interpreted as resources that overcome the limitations of classical states and linear optical operations. 

One example of a measure under this resource theory is the amount of coherent superposition between the coherent states\cite{Tan2017}. The amount of coherent superposition can be quantified via a family of coherence measures from the resource theory of coherence\cite{Streltsov2017}. These measures essentially capture quantum effects contributed by the off-diagonal elements of the density matrix. For example, in the basis $\{ \ket{0}, \ket{1} \}$, the qubit state $\rho = (\ketbra{0}+\ketbra{1})/2$ does not contain any quantum coherences because its off diagonal elements are zero while the state
$ \sigma = (\ketbra{0}+\ket{0}\bra{1}+\ket{1}\bra{0}+\ketbra{1})/2$ is said to be maximally coherent because its off-diagonal element is maximally large. By decomposing a state $\ket{\psi}$ as a superposition of a carefully chosen set of coherent states $\ket{\alpha_i}$, such that $\ket{\psi} = \sum_i c_i \ket{\alpha_i}$, one can take any continuous coherence measure $C$ from the resource theory of coherence to form a nonclassicality measure $N_C(\ket{\psi})$ by quantifying the amount of coherent superposition specified by the coefficients $c_i$. One may then show that $N_C(\ket{\psi})$ satisfies the required Properties~\ref{item::classState},\ref{item::monotonicity} and\ref{item::convexity} under the resource theory of Tan \textit{et al.}. The measure $N_C$ may be interpreted as a continuous extension of the discrete nonclassical degree Ref.~\onlinecite{Gehrke2012}, which quantifies the number of superposition rather than the amount of superposition. This also provides a bridge between the resource theory of coherence\cite{Streltsov2017} and the resource theory of nonclassicality. In fact, it was noted\cite{Tan2017} that nonclassicality in light shares many interesting characteristics with coherence, such as the close relationship and interconvertibility with entanglement\cite{Streltsov2015, Tan2018a, Tan2016, Tan2018}. The resource theories of entanglement, coherence, and nonclassicality of light therefore appear to be deeply connected, which is worth further exploring.

More recently, Ref.~\onlinecite{Tan2019} considered the extension of negativity to cover the set of all $s$-parametrized quasiprobabilities $P_s(\alpha)$ (see also Section~\ref{sec::negVolume}). They were able to show that the negativity of all such distributions $$N_s(\rho) \coloneqq \frac{1}{2}\left [\int \dd[2]\alpha \abs{P_s(\alpha)}-1 \right] $$ also belong to the resource theory of Tan \textit{et al.}\cite{Tan2017}. As $s$ decreases, $N_s(\rho)$ becomes increasingly weaker as a nonclassicality measure in the sense that the negativity decreases and fewer nonclassical states are identified by the measure. Recall that at $s=0$, we recover the Wigner negativity, which was also considered as a measure of non-Gaussianity (Section~\ref{sec::nonGaussianity}).

In Ref.~\onlinecite{Yadin2018}, Yadin {\textit et al.} also considered a resource theory where $\mathcal{O}$ is expanded to include the set of linear optical operations, plus operations allowing for the feed forward of measurement outcomes. We note that, by definition, linear optical operations belong to this expanded set of operations. As such any measure of nonclassicality under the resource theory of Yadin {\textit et al.}\cite{Yadin2018} will monotonically decrease under linear optical operations and also falls under the resource theory of Tan \textit{et al.}\cite{Tan2017}. Similar arguments can also be made for the resource theory of Gehrke \emph{et al.}\cite{Gehrke2012,Sperling2015}, as well as the recently proposed convex resource theories of non-Gaussianity\cite{Albarelli2018, Takagi2018}. We see that measures from all such resource theories necessarily falls under the resource theory of Tan \textit{et al.}\cite{Tan2017}, so this resource theory encompasses the widest range of nonclassicality measures among the resource theories discussed.

We also mention that Refs.~\onlinecite{Yadin2018,Kwon2019} recently proposed nonclassicality measures using metrological quantities. As this has to do with the concept of extracting metrological power from nonclassical states, we will discuss them later in Section~\ref{sec::estDisplacements}.

\section{Metrological power from nonclassicality} \label{sec::part2}
The second half of this paper will mainly review some elements of metrology, and how nonclassical light sources may be exploited in order to improve metrological performance beyond classical limits. In this paper, we will refer to any quantum enhancement that can be attributed solely to nonclassicality of the probe $\rho$ as metrological power. 

We begin by first introducing several key aspects of parameter estimation. 

\subsection{Elements of parameter estimation} \label{sec::paraEst}

In metrology, the most elementary problem is to perform some estimate of some unknown physical parameter. This can be treated in a very general way. Let us begin with a classical parameter estimation problem\cite{Kay1993,Lehmann1998}. Suppose we have a single unknown parameter $\theta$, which we are trying to estimate. In order to do this, we perform a measurement and obtain a set of measurement outcomes. For a given value of $\theta$, let us suppose the measurement outcomes follow a probability distribution function $f(x \mid \theta)$ that depends on $\theta$, such that $\int \dd{x} f(x \mid \theta)  = 1.$ Suppose we perform a single experiment, and the outcome is $x_1$. Based on this measurement outcome, we need to guess the value of $\theta$, which is represented by a function $t(x_1)$. The function $t(x)$ is called the estimator. Since we are trying to estimate the value of $\theta$, if $\theta$ is fixed, our guess should be correct on average if we repeat the experiment enough times. This means that we should have $\expval{t}_\theta \coloneqq \int \dd{x} f(x \mid \theta) t(x) = \theta $. An estimator $t(x)$ which satisfies  $\expval{t}_\theta = \theta$ is called an unbiased estimator.

Let us define the following quantity: 
\begin{definition}[Fisher Information] \label{def::classFishInfo}
The Fisher information is defined to be 
\begin{align*}
I(\theta) &\coloneqq \expval{\left[\pdv{\theta}\log f(x\mid \theta)\right]^2}_\theta \\
&= \int \dd{x} f(x \mid \theta) \left(\pdv{\theta}\log f(x\mid \theta)\right)^2
\end{align*}

\end{definition}

Note that the Fisher information depends on the parameter $\theta.$ We shall see that the our ultimate ability to determine what the value of the parameter $\theta$ is is largely determined by the Fisher information $I(\theta)$. This is a consequence of the famous Cram{\'e}r-Rao bound.

\begin{theorem} [Cram{\'e}r-Rao bound] \label{thm::CramerRao}
Let $t(x)$ be any unbiased estimator satisfying $\expval{t}_\theta=\theta$. Then the variance of your estimate $\Delta^2t$ satisfies the Cram{\'e}r-Rao bound $$\Delta^2t \geq \frac{1}{nI(\theta)},$$ where $n$ is the number of independent samples/experiments performed.
\end{theorem}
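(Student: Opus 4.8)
The plan is to prove the Cramér-Rao bound via the Cauchy-Schwarz inequality applied to the score function and the estimator, first for the single-sample case ($n=1$) and then extending to $n$ independent samples. The key objects are the \emph{score}, $S(x,\theta) \coloneqq \pdv{\theta}\log f(x\mid\theta)$, whose squared expectation is precisely the Fisher information $I(\theta)$ from Definition~\ref{def::classFishInfo}, and the unbiased estimator $t(x)$.

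First I would establish that the score has zero mean. Differentiating the normalization condition $\int \dd{x} f(x\mid\theta) = 1$ with respect to $\theta$ and using the identity $\pdv{\theta}f = f \cdot \pdv{\theta}\log f$, I obtain $\expval{S}_\theta = \int \dd{x}\, f(x\mid\theta)\, \pdv{\theta}\log f(x\mid\theta) = 0$. Next, differentiating the unbiasedness condition $\expval{t}_\theta = \int \dd{x}\, f(x\mid\theta)\, t(x) = \theta$ with respect to $\theta$ gives $\int \dd{x}\, t(x)\, \pdv{\theta}f = 1$, which via the same identity becomes $\expval{t\,S}_\theta = 1$. Because $\expval{S}_\theta = 0$, this is equivalently the covariance $\expval{(t-\theta)S}_\theta = 1$.

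The heart of the argument is then the Cauchy-Schwarz inequality applied to the random variables $t(x)-\theta$ and $S(x,\theta)$ under the measure $f(x\mid\theta)\dd{x}$:
\begin{align*}
1 = \expval{(t-\theta)S}_\theta^2 \leq \expval{(t-\theta)^2}_\theta \, \expval{S^2}_\theta = \Delta^2 t \cdot I(\theta).
\end{align*}
Rearranging yields $\Delta^2 t \geq 1/I(\theta)$, which is the $n=1$ case.

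Finally, I would extend to $n$ independent samples. For independent, identically distributed outcomes the joint distribution factorizes as $f_n(\mathbf{x}\mid\theta) = \prod_{i=1}^n f(x_i\mid\theta)$, so the log-likelihood is additive and the total score is a sum of $n$ independent, zero-mean copies of $S$; hence the Fisher information is additive, $I_n(\theta) = n\,I(\theta)$. Substituting into the single-sample bound gives $\Delta^2 t \geq 1/(n\,I(\theta))$, as claimed. \textbf{The main obstacle} is purely technical rather than conceptual: justifying the interchange of differentiation and integration (differentiating under the integral sign in both $\int f\,\dd{x}$ and $\int t f\,\dd{x}$), which requires standard regularity conditions on $f(x\mid\theta)$ such as a common support independent of $\theta$ and suitable dominated-convergence hypotheses. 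I would simply invoke these regularity assumptions rather than verify them in detail, since they are the standard hypotheses under which the Cramér-Rao bound is stated.
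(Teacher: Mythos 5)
Your proposal is correct and follows essentially the same route as the paper's proof: zero-mean score, the covariance identity $\expval{(t-\theta)\,\pdv*{\theta}\log f}_\theta = 1$ from unbiasedness, Cauchy--Schwarz to get $\Delta^2 t \geq 1/I(\theta)$, and additivity of the Fisher information over $n$ independent samples. The only (welcome) addition is your explicit acknowledgment of the regularity conditions needed to differentiate under the integral, which the paper leaves implicit.
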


\begin{proof}
For compactness, let us define denote the function $L(x \mid \theta) \coloneqq \log f(x \mid \theta).$ Readers who are already somewhat familiar with parameter estimation will identify $L(x \mid \theta)$ as nothing more than the log likelihood. We will discuss more about the significance of the log likelihood later, but for now, it is just for convenience. Using this notation, we can write $I(\theta) = \ \expval{\left [ \pdv{\theta}L(x \mid \theta) \right ]^2}_\theta.$

We begin with the case where $n=1$, and we are interested to find out the minimum uncertainty of our estimate $t(x)$ based on a single experiment. Let us consider the covariance between the estimator $t(x)$ and $\pdv{\theta}L(x \mid \theta)$. Recall that the covariance between $g(x)$ and $h(x)$ is defined as $\text{Cov}[g, h] \coloneqq \expval{(g-\expval{g})(h-\expval{h})}$. Evaluating $\text{Cov}[t(x), \pdv{\theta}L(x \mid \theta)]$, we get 
\begin{align}
&\text{Cov}[t(x), \pdv{\theta}L(x \mid \theta)]  \\
&= \int \dd{x} f(x \mid \theta) \left [t(x)-\theta \right ]\left[\pdv{\theta}L(x \mid \theta) \right ]  \label{eq::crb1}\\
&= \int \dd{x} f(x \mid \theta) t(x)\pdv{\theta}L(x \mid \theta)  \label{eq::crb2}\\
&= \int \dd{x} t(x)\pdv{\theta}f(x \mid \theta) \label{eq::crb3}\\
&=  \pdv{\theta} \int \dd{x} t(x)f(x \mid \theta) \\
&= \pdv{\theta} (\theta) \label{eq::crb4} \\
&= 1.
\end{align} In Eq.~\ref{eq::crb1}, we used the assumption that $t(x)$ is an unbiased estimator $\expval{t}_\theta=\theta$, together with the fact that $\pdv{\theta} L(x \mid \theta) = \pdv{\theta}f(x\mid\theta)/f(x\mid\theta)$ and $\expval{\pdv{\theta}L(x \mid \theta)}_\theta = \int \dd{x} \pdv{\theta}f(x\mid\theta) = \pdv{\theta} (1) = 0. $. In Eq.~\ref{eq::crb2}, we again used the property $\expval{\pdv{\theta}L(x \mid \theta)}_\theta = 0$. In Eq.~\ref{eq::crb3}, we substituted in $\pdv{\theta} L(x \mid \theta) = \pdv{\theta}f(x\mid\theta)/f(x\mid\theta)$ again. In Eq.~\ref{eq::crb4}, we again used the assumption $\expval{t}_\theta=\theta$. 

The next step of the proof is the direct application of the Cauchy-Schwarz inequality. We recall that for any  probability density function $f(x)$, $\int \dd{x} f(x)g(x)h(x) \coloneqq \langle g,h \rangle$  defines an inner product. We then see that the covariance is actually an inner product of the form $\text{Cov}[g, h] = \langle g-\expval{g},h -\expval{h}\rangle$. The Cauchy-Schwarz inequality implies that  $\text{Cov}[g, h]^2 = \abs{\langle g-\expval{g},h -\expval{h}\rangle}^2 \leq \expval{\langle g-\expval{g}}^2\expval{\langle h-\expval{h}}^2 = \Delta^2g \;\Delta^2h$. Directly applying this inequality gives us 
\begin{align}
\Delta^2t\; \Delta^2[\pdv{\theta}L(x \mid \theta)] \geq \text{Cov}[t(x), \pdv{\theta}L(x \mid \theta)] = 1.
\end{align} Finally, observing that since $\expval{\pdv{\theta}L(x \mid \theta)}_\theta = 0$, we have $\Delta^2[\pdv{\theta}L(x \mid \theta)] = I(\theta)$, which leads to the required inequality for a single experiment \begin{align} \label{eq::crSingleSample}
\Delta^2t \geq \frac{1}{I(\theta)}.
\end{align} 

Finally, for the general case $n\geq 1$, consider a vector $\mathbf{x} = (x_1,\ldots, x_n)$ where the outcomes follow the probability distribution $f(\mathbf{x} \mid \theta) = f(x_1 \mid \theta)\ldots f(x_n \mid \theta),$ which is the distribution for $n$ independent samples each following the distribution $f(x_i \mid \theta)$. We can then treat the vector $\mathbf{x}$ as a single sample of the distribution $f(\mathbf{x} \mid \theta)$. Calculating the Fisher information of $f(\mathbf{x} \mid \theta)$, we get 
\begin{align}
&\expval{\left[\pdv{\theta}\log f(\mathbf{x}\mid \theta)\right]^2}_\theta \\
&=\int \dd{x_1}\ldots\dd{x_n} f(x_1 \mid \theta)\ldots f(x_n \mid \theta) \left[ \sum_i \pdv{\theta}\log f(x_i \mid \theta)\right]^2  \label{eq::crb5}\\
\begin{split} \label{eq::crb6}
&= \int \dd{x_1}\ldots\dd{x_n} f(x_1 \mid \theta)\ldots f(x_n \mid \theta)\\
& \hspace{0.17\textwidth} \left[ \sum_{i,j}  \pdv{\theta}\log f(x_i \mid \theta) \pdv{\theta}\log f(x_j \mid \theta) \right] 
\end{split} \\
&= \int \dd{x_i} f(x_i \mid \theta)\left [ \pdv{\theta}\log f(x_i \mid \theta) \right ]^2  \label{eq::crb7}\\
&= \expval{\left [ \pdv{\theta}\log f(x_i \mid \theta) \right ]^2}_\theta\\
&= nI(\theta),
\end{align} In Eq.~\ref{eq::crb5}, we used the expansion $\log f(\mathbf{x} \mid \theta) = \sum_{i=1}^n\log f(x_i \mid \theta)$. In line Eq.~\ref{eq::crb7}, we again used the property $\expval{\pdv{\theta}\log f(x_i \mid \theta)}_\theta= \int \dd{x_i} f(x_i \mid \theta) \pdv{\theta}\log f(x_i \mid \theta)) = 0$ to eliminate every term in the summation except where $i=j$. In summary, the Fisher information of $n$ independent samples is just $n I(\theta)$ where $I(\theta)$ is the Fisher information for the single sample case $n=1$. Substituting this back into Eq.~\ref{eq::crSingleSample}, we get for the general $n$ sample case $$\Delta^2t \geq \frac{1}{nI(\theta)}.$$
\end{proof}

The Cram{\'e}r-Rao bound therefore sets fundamental limits our ability to extract information about the unknown parameter $\theta$, for every possible unbiased estimator $t(x)$. The next natural question to ask is if this lower bound can be saturated.

Recall that in the proof of Theorem~\ref{thm::CramerRao}, we introduced the quantity $L(x \mid \theta) \coloneqq \log f(x \mid \theta)$. This quantity is called the logged likelihood and holds the key to a method of saturating the Cram{\'e}r-Rao bound. Suppose we perform an experiment, and we get only one sampled outcome $x_1$, what value of $t(x)$ should we choose so that we are as close to $\theta$ as possible? Intuitively, one should expect, based on what we know about  a single sample, that $x_1$ is unlikely to be a rare event. Based on this intuition, one reasonable strategy is to choose $t(x_1)$ to be the value of $\theta$ that maximizes the probability of obtaining $x_1$, i.e. we find the maximum of $f(x_1 \mid \theta)$, or equivalently, $L(x_1 \mid \theta) \coloneqq \log f(x_1 \mid \theta)$. An estimator $t(x)$ which satisfies $L(x \mid t(x)) = \sup_{\theta} L(x \mid \theta) $ for every $x$ is called the maximum likelihood estimator. Of course, intuition alone does not make this a good strategy. We can show that this estimator is in fact optimal in the asymptotic regime.

\begin{theorem} [Asymptotic reachability of Cram{\'e}r-Rao bound] \label{thm::crbAsymp}

Let $t(\mathbf{x})$ be a maximum likelihood estimator satisfying $L(\mathbf{x} \mid t(\mathbf{x})) = \sup_{\theta} L(\mathbf{x} \mid \theta) $ where $\mathbf{x} = (x_1, \ldots, x_n)$ is a vector of independent samples of size $n$. Then in the limit of sample size $n \rightarrow \infty$, the asymptotic distribution of $t$ follows a normal distribution $$ t \sim \mathbf{N}\left [\theta, \frac{1}{nI(\theta)} \right ]$$ where $\theta$ and $1/[nI(\theta)]$ are the mean and variance of the normal distribution respectively.

\end{theorem}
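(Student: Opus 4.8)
The plan is to linearize the likelihood equation about the true parameter value $\theta$ and extract the leading stochastic behaviour via the central limit theorem. Introduce the score $U(x\mid\theta) \coloneqq \pdv{\theta}L(x\mid\theta)$, so that the total log likelihood $L(\mathbf{x}\mid\theta) = \sum_{i=1}^n L(x_i\mid\theta)$ has derivative $\sum_{i=1}^n U(x_i\mid\theta)$. Assuming the maximum defining $\hat\theta \coloneqq t(\mathbf{x})$ is attained at an interior point and that $L$ is smooth in $\theta$, the estimator solves the stationarity condition $\sum_{i=1}^n U(x_i\mid\hat\theta) = 0$.

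First I would Taylor expand this condition about the true value $\theta$,
\begin{align*}
0 = \sum_{i=1}^n U(x_i\mid\hat\theta) = \sum_{i=1}^n U(x_i\mid\theta) + (\hat\theta-\theta)\sum_{i=1}^n \pdv{\theta}U(x_i\mid\theta) + R_n,
\end{align*}
where $R_n$ gathers the higher-order remainder, and then rearrange to isolate the scaled error,
\begin{align*}
\sqrt{n}\,(\hat\theta-\theta) = \frac{\frac{1}{\sqrt{n}}\sum_{i=1}^n U(x_i\mid\theta)}{-\frac{1}{n}\sum_{i=1}^n \pdv{\theta}U(x_i\mid\theta) - R_n / [n(\hat\theta-\theta)]}.
\end{align*}
The task then reduces to identifying the limits of the numerator and denominator separately.

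For the numerator I would invoke two properties of the score. As shown in the proof of Theorem~\ref{thm::CramerRao}, $\expval{U(x\mid\theta)}_\theta = 0$, while Definition~\ref{def::classFishInfo} identifies its variance as $\Delta^2 U = I(\theta)$. The summands $U(x_i\mid\theta)$ are therefore independent, identically distributed, mean-zero random variables of finite variance $I(\theta)$, so the central limit theorem gives $\frac{1}{\sqrt{n}}\sum_{i=1}^n U(x_i\mid\theta) \to \mathbf{N}[0, I(\theta)]$ in distribution. For the denominator I would use the information identity $\expval{\pdv{\theta}U(x\mid\theta)}_\theta = -I(\theta)$, which follows by differentiating $\int \dd{x} f(x\mid\theta)\,U(x\mid\theta) = 0$ once more in $\theta$ and using $\pdv{\theta}f = f\,U$; the weak law of large numbers then gives $-\frac{1}{n}\sum_{i=1}^n \pdv{\theta}U(x_i\mid\theta) \to I(\theta)$ in probability.

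Combining these through Slutsky's theorem yields $\sqrt{n}\,(\hat\theta-\theta) \to \mathbf{N}[0, I(\theta)/I(\theta)^2] = \mathbf{N}[0, 1/I(\theta)]$, which is exactly the asserted law $t \sim \mathbf{N}[\theta, 1/(nI(\theta))]$ in the limit $n\to\infty$. The hard part is not this final assembly but the two approximations concealed within it. One must first establish that the maximum likelihood estimator is \emph{consistent}, i.e.\ $\hat\theta \to \theta$ in probability, so that expansion about the true value is legitimate; and one must then show that the remainder term $R_n / [n(\hat\theta-\theta)]$ is asymptotically negligible. Both steps require regularity conditions---smoothness of $f(x\mid\theta)$ in $\theta$, permissibility of interchanging differentiation and integration, and local control of the third $\theta$-derivative of $L$---and discharging them rigorously is the main technical obstacle, the central-limit and large-number steps being routine once consistency is secured.
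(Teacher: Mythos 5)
Your proposal is correct and follows essentially the same route as the paper: Taylor-expand the likelihood (stationarity) equation about the true parameter, apply the law of large numbers to the second-derivative term to obtain $-nI(\theta)$, and the central limit theorem to the score sum to obtain $\mathbf{N}[0,nI(\theta)]$. Your version is somewhat more careful than the paper's --- you invoke Slutsky's theorem explicitly and flag the consistency and remainder-control issues that the paper's heuristic argument silently assumes --- but the underlying decomposition and key steps are identical.
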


\begin{proof}
Suppose we have $n$ samples, which are collected in a vector $\mathbf{x} = (x_1,\ldots,x_n)$. Since all the samples $x_i$ are assumed to be independent, this means that the vector $\mathbf{x}$ follows a probability distribution of the form $f(\mathbf{x} \mid \theta) = f(x_1\mid \theta)\ldots f(x_n \mid \theta)$. We can then write the log likelihood as the sum $L(\mathbf{x} \mid \theta) = \sum_{i=1}^n L(x_i \mid \theta).$

We first perform a Taylor expansion at a point $\theta_0$ close to $\theta$, which gives us 
\begin{align} 
\begin{split} \label{eq::cbAsymp1}
&L(\mathbf{x} \mid \theta) \approx L(\mathbf{x} \mid \theta_0) + \left.\sum_{i=1}^n \pdv{L(x_i \mid \theta)}{\theta}\right \rvert_{\theta = \theta_0}(\theta - \theta_0) \\
&\hspace{0.12\textwidth} +\frac{1}{2}\sum_{i=1}^n \left. \pdv[2]{L(x_i \mid \theta)}{\theta} \right \rvert_{\theta = \theta_0} (\theta - \theta_0)^2
\end{split}
\end{align}

Maximizing the log likelihood, we seek solutions to $\pdv{L(\mathbf{x} \mid \theta)}{\theta} = 0$. Differentiating Eq.~\ref{eq::cbAsymp1}, we get 
\begin{align}
\begin{split} \label{eq::cbAsymp2}
&0 = \pdv{L(\mathbf{x} \mid \theta)}{\theta} \approx   \left.\sum_{i=1}^n \pdv{L(x_i \mid \theta)}{\theta}\right \rvert_{\theta = \theta_0}  \\
&\hspace{0.12\textwidth} +\sum_{i=1}^n \left. \pdv[2]{L(x_i \mid \theta)}{\theta} \right \rvert_{\theta = \theta_0} (\theta - \theta_0)
\end{split}
\end{align}

Since $t(\mathbf{x})$ is the maximum likelihood estimator, it should satisfy $\left. \pdv{L(\mathbf{x} \mid \theta)}{\theta} \right \rvert_{\theta = t(x)} = 0 $. In other words $t(x)$ is a solution to Eq.~\ref{eq::cbAsymp2}, so 
\begin{align} \label{eq::cbAsymp3}
0 \approx   \left.\sum_{i=1}^n \pdv{L(x_i \mid \theta)}{\theta}\right \rvert_{\theta = \theta_0} +\sum_{i=1}^n \left. \pdv[2]{L(x_i \mid \theta)}{\theta} \right \rvert_{\theta = \theta_0} (t(\mathbf{x}) - \theta_0)
\end{align}

Rearranging, we get 
\begin{align} \label{eq::cbAsymp4}
 \left [ \sum_{i=1}^n \left. \pdv[2]{L(x_i \mid \theta)}{\theta} \right \rvert_{\theta = \theta_0} \right ] (t(\mathbf{x}) - \theta_0) \approx   -\left.\sum_{i=1}^n \pdv{L(x_i \mid \theta)}{\theta}\right \rvert_{\theta = \theta_0} 
\end{align}

Let us consider the term $\sum_{i=1}^n \left. \pdv[2]{L(x_i \mid \theta)}{\theta} \right \rvert_{\theta = \theta_0}$ on the left hand side of the equation. Assuming $n \gg 1$, then we can expect, using the law of large numbers, that $n f(x \mid \theta) \dd{x}$ of the elements in the list $(x_1,\dots, x_n)$ to lie within the region $x+\dd{x}$ for every $x$. This means that 
\begin{align*}\sum_{i=1}^n \left. \pdv[2]{L(x_i \mid \theta)}{\theta} \right \rvert_{\theta = \theta_0} & \approx n \int \dd{x} \left. f(x \mid \theta) \pdv[2]{L(x \mid \theta)}{\theta} \right \rvert_{\theta = \theta_0} \\
&= n\left. \expval{\pdv[2]{L(x \mid \theta)}{\theta}}_\theta \right \rvert_{\theta = \theta_0} \\
&=-nI(\theta_0),
\end{align*} where the final equality above can be directly computed using the identities $L(x \mid \theta) = \log f(x \mid \theta)$ and $\expval{\pdv{\theta}L(x \mid \theta)}_\theta = 0.$

For the term $ \left.\sum_{i=1}^n \pdv{L(x_i \mid \theta)}{\theta}\right \rvert_{\theta = \theta_0} $ on the right hand side, we will use the central limit theorem, which says that for sufficiently large $n$,  $\sum_{i=1}^n \pdv{L(x_i \mid \theta)}{\theta}$ will approximately follow a normal distribution with mean $ \left. \expval{\sum_{i=1}^n \pdv{L(x_i \mid \theta)}{\theta}}_\theta \right \rvert_{\theta = \theta_0} $ and variance $\Delta^2 \left [ \left. \sum_{i=1}^n \pdv{L(x_i \mid \theta)}{\theta}\right ] \right \rvert_{\theta = \theta_0}$.  Direct calculation will verify that the mean is zero, while the variance is $nI(\theta_0)$. Putting this back into Eq.~\ref{eq::cbAsymp4}, we get $$nI(\theta_0) (t - \theta_0) \sim \mathbf{N}(0, nI(\theta_0))$$ which we can further simplify to get $$t \sim \mathbf{N}(\theta_0, \frac{1}{nI(\theta_0)}).$$ So we see that for large enough $n$, $t(\mathbf{x})$ follows a Gaussian distribution and  has variance $\Delta^2 t = 1/[nI(\theta_0)]$, which saturates the Cram{\'e}r-Rao bound. This means that in the asymptotic limit of $n \rightarrow \infty$, the Cram{\'e}r-Rao bound can always be saturated, and the optimal strategy is a maximum likelihood estimator.
\end{proof}

Theorem~\ref{thm::crbAsymp} illustrates how the Cram{\'e}r-Rao bound is in fact reachable, so long as a sufficient number of independent experiments are performed, and a sufficient number of data points are gathered. The fact that the bound can be saturated allows us to directly quantify how useful a given statistical distribution $f(x\mid \theta)$ is for the estimation of an unknown parameter $\theta$ via the Fisher information $I(\theta)$. We just have to keep in mind that we need to make many repeated measurements in order to make this connection.

\subsection{Elements of quantum metrology} \label{sec::qMetrology}

Thus far, the problem of parameter estimation has revolved around around what is essentially a classical information processing problem -- there is some probability distribution that depends on $\theta$, and we figure out what are the best ways to extract information about $\theta$ from the classical statistics.

This section will introduce quantum mechanical elements to the parameter estimation problem. The most fundamental element of quantum metrology is the {\it probe} which is represented by some density operator $\rho$. The parameter $\theta$ which we are interested to measure is encoded onto some quantum channel $\Phi_\theta$. Information about $\theta$ is extracted by passing the state $\rho$ through the quantum channel $\Phi_\theta$, resulting in the transformation of state $\Phi_\theta(\rho) \coloneqq \rho_\theta.$

Information about $\theta$ is therefore imprinted onto the probe $\rho_\theta$. In order to perform our estimate of $\theta$, we perform a measurement on $\rho_\theta$, which is represented by some set of positive operator value measures (POVM)\cite{NielsenChuang} $ M \coloneqq \{ \Pi_x \}$ satisfying $\Pi_x \geq 0$ and $\int \dd{x}\Pi_x = \openone.$ By performing a measurement, we obtain the statistical distribution $\Tr(\Pi_x \rho_\theta) = f(x \mid \theta).$ In principle, this is the end of the quantum aspect of quantum metrology. After performing the measurement and obtaining the statistics, what remains is to perform your best estimate of $\theta$ given $f(x \mid \theta)$, which is the standard parameter estimation problem described in the previous section. 

There is an infinite repertoire of possible POVMs that we can consider in quantum mechanics. It is therefore natural to ask what is the optimal measurement $M$ that we should perform on the state $\rho_\theta$. It is somewhat of a small miracle that this question can actually be answered using only fairly elementary arguments.

In order to address the previous question properly, we introduce an operator call the symmetric logarithmic derivative\cite{Helstrom1967, Helstrom1968}. 
\begin{definition} [Symmetric logarithmic derivative]
Consider the eigendecomposition of $\rho_\theta$ such that $\rho_\theta = \sum_i \lambda_i \ketbra{i}$ and $\bra{i}\ket{j} = \delta_{ij}$. The symmetric logarithmic derivative of $\rho_\theta$ is the operator $D_\theta$ that satisfies the equation $$\pdv{\theta} \rho_\theta = \acomm{\rho_\theta}{D_\theta}/2,$$ where $\acomm{\cdot}{\cdot}$ is the anticommutator $\acomm{A}{B}\coloneqq AB+BA.$
\end{definition} We are guaranteed that the solution $D_\theta$ will always exist as for any matrix $A = \acomm{\rho_\theta}{D_\theta}/2$, we can verify by direct substitution that $D_\theta = 2\sum_{i,j}[A_{ij}/(p_i+p_j)]\ketbra{i}{j}$ is a solution. Furthermore, we see that as $\rho$ and hence $\pdv{\theta} \rho_\theta$ are both Hermitian, $D_\theta$ is also Hermitian.

Based on the symmetric logarithmic derivative, we can then introduce the quantum  Fisher information.
 
\begin{definition} [Quantum Fisher information] \label{def::genQFI}
For a given symmetric logarithmic derivative $D_\theta$, the quantum Fisher information is defined as the quantity
$$I_Q(\rho, \theta) \coloneqq \Tr(\rho_\theta D_\theta^2),$$ and $\Phi_\theta(\rho) \coloneqq \rho_\theta.$
\end{definition}

The quantum Fisher information $I_Q(\rho, \theta)$ is given a physical significance via the following theorem, which is quantum version of the Cram{\'e}r-Rao bound\cite{Helstrom1976, Holevo1982, Braunstein1994}.

\begin{theorem} [Quantum Cram{\'e}r-Rao bound] \label{thm::qCramerRao}
Let $ M \coloneqq \{ \Pi_x \}$ be any measurement satisfying $\int \dd{x}\Pi_x = \openone,$ and $f(x \mid \theta) \coloneqq \Tr(\rho_\theta \Pi_x)$ where $\rho_\theta \coloneqq \Phi_\theta(\rho)$. For a given probe $\rho$ and measurement $M$  we denote the Fisher information of the probability  distribution $f(x \mid \theta)$ as $I(\theta \mid \rho, M)$.

Then for any $M$, we have $$I(\theta \mid \rho, M) \leq I_Q(\rho,\theta) \coloneqq \Tr(\rho_\theta D_\theta^2). $$

This directly implies that for any unbiased estimator $t(x)$ we have $$\Delta^2t \geq \frac{1}{nI_Q(\rho, \theta)},$$ where $n$ is the number of independent samples/experiments performed. 
\end{theorem}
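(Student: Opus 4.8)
The plan is to first establish the information inequality $I(\theta \mid \rho, M) \leq I_Q(\rho, \theta)$ for an arbitrary POVM, after which the variance statement follows immediately by chaining with the classical Cram\'er-Rao bound of Theorem~\ref{thm::CramerRao}. So the real content is the operator inequality.

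First I would differentiate the measurement statistics. Writing $f(x\mid\theta) = \Tr(\rho_\theta \Pi_x)$ and differentiating under the trace gives $\pdv{\theta} f = \Tr(\pdv{\theta}\rho_\theta \, \Pi_x)$. Substituting the defining relation of the symmetric logarithmic derivative, $\pdv{\theta}\rho_\theta = \tfrac{1}{2}\acomm{\rho_\theta}{D_\theta}$, and using Hermiticity of $\rho_\theta$, $D_\theta$ and $\Pi_x$ together with $\Tr(A) = \overline{\Tr(A^\dagger)}$ and cyclicity, one finds the compact form $\pdv{\theta} f = \Re\Tr(\rho_\theta D_\theta \Pi_x)$. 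The classical Fisher information of the measurement then reads $I(\theta\mid\rho, M) = \int \dd{x}\, [\Re\Tr(\rho_\theta D_\theta \Pi_x)]^2 / f(x\mid\theta)$, and the goal is to bound each integrand.

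The central step is a Cauchy-Schwarz estimate. I would first drop the real part using $[\Re z]^2 \leq \abs{z}^2$, then exploit positivity to factor $\Pi_x = \sqrt{\Pi_x}\sqrt{\Pi_x}$ and $\rho_\theta = \sqrt{\rho_\theta}\sqrt{\rho_\theta}$, rewriting $\Tr(\rho_\theta D_\theta \Pi_x) = \Tr(AB)$ with $A = \sqrt{\Pi_x}\sqrt{\rho_\theta}$ and $B = \sqrt{\rho_\theta}D_\theta\sqrt{\Pi_x}$. Cauchy-Schwarz for the Hilbert-Schmidt inner product gives $\abs{\Tr(AB)}^2 \leq \Tr(AA^\dagger)\Tr(B^\dagger B) = f(x\mid\theta)\,\Tr(\Pi_x D_\theta\rho_\theta D_\theta)$, where the first factor collapses to $\Tr(\Pi_x\rho_\theta) = f$ and the second to $\Tr(\Pi_x D_\theta\rho_\theta D_\theta)$. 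Dividing by $f(x\mid\theta)$ and integrating, the factor $f$ cancels and the completeness relation $\int\dd{x}\Pi_x = \openone$ collapses the remaining integral, yielding $\int\dd{x}\Tr(\Pi_x D_\theta\rho_\theta D_\theta) = \Tr(D_\theta\rho_\theta D_\theta) = \Tr(\rho_\theta D_\theta^2) = I_Q(\rho,\theta)$. This proves $I(\theta\mid\rho,M) \leq I_Q(\rho,\theta)$.

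Finally, the variance bound follows at once: Theorem~\ref{thm::CramerRao} applied to the statistics $f(x\mid\theta)$ of the fixed measurement $M$ gives $\Delta^2 t \geq 1/[n I(\theta\mid\rho,M)]$, and since $I(\theta\mid\rho,M) \leq I_Q(\rho,\theta)$ we conclude $\Delta^2 t \geq 1/[n I_Q(\rho,\theta)]$. The main obstacle is the central step: one must choose the factorization so that the two Hilbert-Schmidt factors reproduce exactly $f(x\mid\theta)$ and $\Tr(\Pi_x D_\theta\rho_\theta D_\theta)$, and one should note that it is precisely the two successive inequalities (real part versus modulus, and Cauchy-Schwarz) that generically prevent the bound from being saturated for an arbitrary $M$ -- equality requires a measurement adapted to $D_\theta$, though this is not needed for the inequality asserted here.
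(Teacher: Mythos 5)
Your proposal is correct and follows essentially the same route as the paper's proof: differentiate $f(x\mid\theta)=\Tr(\rho_\theta\Pi_x)$ via the symmetric logarithmic derivative, bound the real part by the modulus, apply Hilbert--Schmidt Cauchy--Schwarz with the factorization $\sqrt{\Pi_x}\sqrt{\rho_\theta}$ and $\sqrt{\rho_\theta}D_\theta\sqrt{\Pi_x}$, cancel $f$, and collapse the integral with $\int\dd{x}\,\Pi_x=\openone$ before chaining with the classical Cram\'er--Rao bound. Your closing remark on where saturation can fail matches the paper's subsequent discussion of the optimal measurement being a projection onto the eigenbasis of $D_\theta$.
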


\begin{proof}
Recall that for a given $M$, the Fisher information is $I(\theta \mid \rho, M) \coloneqq \expval{[\pdv{\theta}L(x \mid \theta)]^2}_\theta$ where $L(x \mid \theta)$ is the likelihood function $\log f(x \mid \theta).$

We see that based on the definition of $D_\theta$, we have
\begin{align}
I(\theta \mid \rho, M) &= \expval{[\pdv{\theta}\log f(x \mid  \theta) ]^2}_\theta  \\ 
&=\expval{[\pdv{\theta}\log \Tr(\rho_\theta \Pi_x) ]^2}_\theta  \label{eq::qcrb1}\\
&=\expval{[\Tr(\pdv{\theta}\rho_\theta \Pi_x)/\Tr(\rho_\theta \Pi_x) ]^2}_\theta \label{eq::qcrb2}\\
&\leq \int \dd{x} [\abs{\Tr(\rho_\theta D_\theta \Pi_x)}^2/\Tr(\rho_\theta \Pi_x) ] \label{eq::qcrb3} \\
&= \int \dd{x} \abs{\Tr(\sqrt{\Pi_x} \sqrt{\rho_\theta}  \sqrt{\rho_\theta} D_\theta \sqrt{\Pi_x})}^2/\Tr(\rho_\theta \Pi_x)  \label{eq::qcrb4}\\
&\leq \int \dd{x} [\Tr(\Pi_x \rho_\theta)  \Tr( \Pi_x D_\theta  \rho_\theta D_\theta )/\Tr(\rho_\theta \Pi_x) ] \label{eq::qcrb5}\\
&= \int \dd{x}  \Tr( \Pi_x D_\theta  \rho_\theta D_\theta ) \\
&=  \Tr(  \rho_\theta D_\theta^2 ). \label{eq::qcrb6}
\end{align} In Eq~\ref{eq::qcrb1}, we used the identity $f(x \mid \theta) \coloneqq \Tr(\rho_\theta \Pi_x)$. Eq~\ref{eq::qcrb2}, directly results from computing the partial derivative. In Eq~\ref{eq::qcrb3}, we substituted the expression $\pdv{\theta} \rho_\theta = \acomm{\rho_\theta}{D_\theta}/2$. We then set $A = \rho_\theta D_\theta$ and observe that since $\Pi_x$ is positive, $\Tr(A \Pi_x) + \Tr(A^\dag \Pi_x) = 2\Re[\Tr(A \Pi_x)] \leq 2 \abs{\Tr(A \Pi_x)}.$ In Eq~\ref{eq::qcrb4}, we used the cyclic property of the trace to write $\Tr(\rho_\theta D_\theta \Pi_x) = \Tr(\sqrt{\Pi_x} \sqrt{\rho_\theta}  \sqrt{\rho_\theta} D_\theta \sqrt{\Pi_x}).$ In Eq~\ref{eq::qcrb5}, we used the Cauchy-Schwarz inequality for the Hilbert-Schmidt norm $\abs{\Tr(A^\dag B)} \leq \Tr(A^\dag A)\Tr(B^\dag B)$ and set $\sqrt{\Pi_x} \sqrt{\rho_\theta}$ and $B= \sqrt{\rho_\theta} D_\theta \sqrt{\Pi_x}).$. Finally, in Eq~\ref{eq::qcrb6}, we used the identity $\int \dd{x}\Pi_x = \openone,$ which gives us the required inequality and proves the first part of the theorem. The inequality $\Delta^2t \geq 1/[nI_Q(\rho, \theta)]$ then follows directly from Theorem~\ref{thm::CramerRao}. 
\end{proof}

The quantum Cram{\'e}r-Rao bound extends the result of Theorem~\ref{thm::CramerRao} to the quantum regime. It sets ultimate limits on our ability to extract information about an unknown variable $\theta$ via a quantum measurement. We see that this bound does not depend on the measurement $M$ being performed, but does depend on the probe $\rho$, as well as the unknown parameter $\theta$.

We can show that the quantum Cram{\'e}r-Rao bound may always
 be saturated by some measurement\cite{Helstrom1968, Braunstein1994}, at least in principle. Recall that the symmetric logarithmic derivative $D_\theta$ is a Hermitian matrix, which is diagonalizable. As such, we can consider its eigendecomposition $D_\theta = \sum_x \lambda_{\theta, x} \ketbra{\phi_{\theta,x}}$ and choose the measurement $M = \{ \Pi_x = \ketbra{\phi_{\theta,x}} \}$, which is a projective measurement onto the eigenbasis of $D_\theta$. One may then directly verify by substitution into Eq.~\ref{eq::qcrb2} that $I(\theta \mid \rho, M) = \Tr(\rho_\theta D_\theta^2)$. Since the bound can be saturated, this suggests that the quantum Fisher information $I_Q(\rho, \theta)$ precisely quantifies the usefulness of a probe $\rho$ for the measurement of a given $\theta$.

There are however, several important caveats to keep in mind. First, while the bound may be saturated via a projection onto the eigenbasis of $D_\theta$, this by itself does not inform us of a way to physically implement the measurement in a laboratory. The optimal measurement is also in general not unique, and more technologically feasible measurements may exist.

Second, note that in general both $I_Q(\rho,\theta)$ and $D_\theta$ depends on the value of $\theta$. There is therefore no guarantee that a single fixed measurement $M$ will be able to saturate the  quantum Cram{\'e}r-Rao bound for every value of $\theta.$\cite{Cochran1973, Barndorff2000} In some sense, this suggests that we need to somehow know the value of $\theta$ before we can decide what measurement to perform, which clearly goes against our initial objective of measuring some unknown but fixed quantity $\theta$. This issue is surmountable, however, by considering adaptive schemes\cite{Wiseman1995,Berry2000, Berry2002,Armen2002} performed over multiple measurements. Conditioned on prior measurement outcomes, the measurement $M$ can be made to eventually converge to the optimal case over a sufficiently large number of correlated experiments\cite{Fujiwara2006, Fujiwara2011}. Recall from Theorem~\ref{thm::crbAsymp} that the Fisher information can be saturated under the assumption that a large number $n$ of independent experiments are performed. In the quantum case, the situation is more complicated because it may be necessary to perform some adaptive scheme over a large number of correlated experiments to allow $M$ to converge first. One may then subsequently obtain independent samples using the optimal $M$ to saturate the quantum Cram{\'e}r-Rao bound.

For similar reasons, the dependence on $\theta$ implies there is no guarantee that a given probe $\rho$ will equally useful for every value of $\theta$ except in special cases. For unitary evolutions however, this turns out to not be an issue as the Fisher information can be shown to be the same along any point in the probe's unitary orbit. We therefore see that the interpretation of quantum Fisher information as a measure of a probe $\rho$'s usefulness for metrology is especially well suited for unitary encodings. This will be further discussed in the subsequent section.

One may also remove potential issues arising from the dependence of $I_Q(\rho,\theta)$ on $\theta$ by assuming that $\theta$ is unknown but varies over only a very small region in the vicinity of some value $\theta_0$. This is the local estimation approach, where one effectively only considers $I_Q(\rho,\theta_0)$ since $\theta \approx \theta_0$. The Fisher information then becomes solely a function of the probe $\rho$. Physically, it corresponds to the high precision measurement regime, where we are only interested in measuring very small differences in physical parameters. This allows us to generally interpret $I_Q(\rho,\theta_0)$  as a measure of the usefulness of the probe $\rho$ for high precision measurements. However, this presupposes strong \textit{a priori} knowledge about the distribution of $\theta$ before hand, and such an assumption may not always be valid.

\subsection{Unitary quantum metrology} \label{sec::unitaryQMet}

In the previous section, we discussed quantum metrology in very general terms, where the the quantum channel $\Phi_\theta$ may in general be any quantum map. The corresponding quantum Fisher information $I(\rho, \theta)$ is difficult to compute under such general scenarios. 

We can however, greatly simplify the problem by considering only unitary encodings. Suppose that $\Phi_\theta(\rho) = \rho_\theta = U_\theta \rho U^\dag_\theta$, where $U = e^{-i\theta G}$. $G$ is a Hermitian operator, and is sometimes called the generator of the unitary transformation. Writing $\rho = \sum_i p_i \ketbra{i}$ in its diagonal form, we can directly evaluate the symmetric logarithmic derivative: 

\begin{align*}
D_\theta &= 2\sum_{i,j}[\mel{i}{\pdv{\rho_\theta} {\theta}}{j}/(p_i+p_j)]\ketbra{i}{j} \\
&= 2i\sum_{i,j}[\mel{i}{\comm{\rho}{G}}{j}/(p_i+p_j)]\ketbra{i}{j} \\
&= 2i\sum_{i,j}[(p_i-p_j)/(p_i+p_j)]\mel{i}{G}{j}\ketbra{i}{j},
\end{align*} where we used the von Neumann equation $i \pdv{\rho_\theta} {\theta} = [G, \rho].$

We can use this to evaluate the quantum Fisher information, resulting in the following series of inequalities:
\begin{align*}
\Tr(\rho D_\theta^2) &= 4\sum_{i,j}p_i\frac{(p_i-p_j)^2}{(p_i+p_j)^2}\abs{\mel{i}{G}{j}}^2 \\
&= 4\sum_{i,j}\frac{(p_i-p_j)^2}{p_i+p_j}\abs{\mel{i}{G}{j}}^2\frac{p_i}{p_i+p_j} \\
&= 4\sum_{i,j}\frac{(p_i-p_j)^2}{(p_i+p_j)}\abs{\mel{i}{G}{j}}^2 \left (1- \frac{p_j}{p_i+p_j} \right ) \\
&= 4\sum_{i,j}\frac{(p_i-p_j)^2}{p_i+p_j}\abs{\mel{i}{G}{j}}^2 - \Tr(\rho D_\theta^2).
\end{align*}

This leads to the following definition of Fisher information for unitary processes\cite{Helstrom1976, Holevo1982, Braunstein1994, Braunstein1996}.

\begin{definition} [Quantum Fisher information, unitary encoding] \label{def::unitQFI}
For any unitary $U_\theta = e^{-i\theta G}$ with generator $G$, the quantum Fisher information is $$I_Q(\rho, G) \coloneqq 2\sum_{i,j}\frac{(p_i-p_j)^2}{p_i+p_j}\abs{\mel{i}{G}{j}}^2,$$ where $p_i$ and $\ket{i}$ are the eigenvalues and eigenvectors of $\rho$.
\end{definition}

Notice that we have dropped the dependence on $\theta$, compared to the more general version of the quantum Fisher information in Definition~\ref{def::genQFI}. This is because the Fisher information is actually invariant under the unitary $U_\theta$. It is not difficult to verify that this is true. If $\ket{i}$ is the eigenvector of $\rho$, then $U_\theta\ket{i}$ is the eigenvector of $\rho_\theta$. However, since $U_\theta = e^{-i\theta G}$ and  $[U_\theta, G] = 0$, we have $\mel{i}{U_\theta^\dag GU_\theta}{j}=\mel{i}{G}{j}$. This shows that the Fisher information is always constant along for every $\rho_\theta$. For general quantum channels, this property does not necessarily hold (see discussion at end of Section~\ref{sec::qMetrology}). 

Below is a collection of some elementary properties\cite{Toth2014} of $I_Q(\rho, G)$:

\begin{gather*}
I_Q(\ket{\psi}, G) = 4\Delta^2 G, \; \text{where $\ket{\psi}$ is a pure state} \\ 
I_Q(\rho, G) \leq 4\Delta^2 G, \; \text{for general mixed state $\rho$} \\ 
I_Q(e^{-\theta G}\rho e^{\theta G}, G ) =  I_Q(\rho , G )\\
I_Q(U\rho U^\dag, G ) =  I_Q(\rho,U^\dag G U ), \; \text{where $U$ is unitary}  \\
 I_Q(\sum_i p_i\rho_i, G) \leq \sum_i p_i I_Q(\rho_i, G) , \; \text{where } \sum_i p_i =1 \\
I_Q(\rho_1 \otimes \sigma_2, G_1\otimes \openone_2 + \openone_1 \otimes H_2) =  I_Q(\rho_1 , G_1 ) +I_Q(\sigma_2 , H_2 ) \\ 
I_Q(\oplus_i p_i \rho_i , \oplus_i G_i) = \sum_i p_i I_Q(\rho_i , G_i), \; \text{where  $\Tr(\rho_i)=1$}  \\
I_Q(\rho_{12} , G \otimes \openone_2  )  \geq I_Q(\Tr_2(\rho_{12}) , G )\\
\end{gather*} 

For unitary dynamics where the eigenvalues of $G$ is bounded, one may also additionaly identify the optimal quantum states maximizing the quantum Fisher information\cite{Braunstein1996,Giovannetti2006}. The optimal probe in this case is an equal superposition of the form $\ket{\psi} = (\ket{\lambda_{\max}} + \ket{\lambda_{\min}})/\sqrt{2}$, where $\ket{\lambda_{\max}}$ and $\ket{\lambda_{\min}})$ are the eigenvectors corresponding to maximum and minimum eigenvalues respectively.

Finally, a recent result\cite{Yu2013, Toth2013} proved that for unitary encodings, the quantum Fisher information is the convex roof of the variance of $G$: 
\begin{align} \label{eq::convexRoofFisher}
I_Q(\rho, G)= 4 \min_{\{ p_i, \ket{\psi_i} \}} \sum_i p_i \Delta^2_{\ket{\psi_i}} G, 
\end{align} 
where the minimization is over all possible pure states decompositions $\{ p_i, \ket{\psi_i} \}$ satisfying $\rho =\sum_i p_i \ketbra{\psi_i}$ and $\Delta^2_{\ket{\psi_i}} G$ is the variance of $G$ for the state $\ket{\psi_i}$. In this case, one may interpret the convex roof as the useful "quantum" part of the variance that is left over after statistical mixing.

The extraction of an unknown parameter from a unitary quantum channel of this type is probably the most well studied and understood of all the problems in quantum metrology. In the next section, we will discuss several physically relevant examples of such unitary channels, which provides strong evidence that nonclassical light is a useful resource in making precision measurements.

\subsection{Extracting metrological power from nonclassical states}

In this section, we will discuss the role that nonclassical states play in quantum enhanced metrology. 

Our general strategy to demonstrate that metrological power may be extracted is quite simple. Suppose we have a generator $G$ with corresponding Fisher information $I_Q(\rho, G)$. If we are able to demonstrate the existence of some state $\rho$ satisfying $I_Q(\rho, G) \geq \sup_{\rho_\text{cl}} I_Q(\rho_\text{cl}, G),$ where the optimization is over the set of classical states $\rho_\text{cl}$, then clearly $\rho$ must be nonclassical and nonclassicality can be exploited to improve measurement precision. A similar strategy was also employed in Ref~\onlinecite{Rivas2010}, where the quantum Fisher information was used to test whether a state is nonclassical. We begin by demonstrating this possibility for a parameter estimation problem called phase estimation. 

\subsubsection{Single mode phase estimation} \label{sec::phaseEst}

Let us consider a very simple choice for the generator $G$. Recall the number operator $n= a^\dag a$, which is the Hermitian observable measuring the number of photons in a system. Let us choose $G = n/2$. From this, we can construct the unitary encoding $U_\theta = e^{-i\theta n/2 },$ which causes a clockwise rotation of angle $\theta/2$ in phase space (see also Section~\ref{sec::cohStates}). One may also verify that $U_\theta^\dag a U_\theta = e^{-i\theta/2}a $. $U_\theta \rho U^\dag_\theta$ therefore induces a change in the phase of $\rho$ relative to some reference clock. For this reason, we can call the problem of measuring the parameter $\theta$ quantum phase estimation. 

Let us consider the quantum Fisher information for a coherent state $\ket{\alpha}$. For pure states, this is just four times the variance of the observable $G = n/2$ (see Section~\ref{sec::unitaryQMet}). Since the number distribution of the coherent state is a Poisson distribution, the variance and mean of the number distribution is the same, so we have $4\Delta^2_{\ket{\alpha}}( n/2) = \Delta^2_{\ket{\alpha}}\, n = \expval{n}{\alpha}$. The astute reader may have wondered about the $1/2$ factor in $G = n/2$. For the moment, it is just for convenience as it removes the constant factor $4$ from the quantum Fisher information, but we shall see that a similar factor will also appear in the problem of interferometry, which we will discuss in the subsequent section. 

Now, let us consider a classical mixed state $\rho_\text{cl} = \int \dd[2]{\alpha} P_\text{cl}(\alpha) \ketbra{\alpha}$, where $P_\text{cl}(\alpha)$ is a positive probability distribution function. Note that since $P_\text{cl}(\alpha)$ is a proper probability distribution, $\rho_\text{cl} = \int \dd[2]{\alpha} P_\text{cl}(\alpha) \ketbra{\alpha}$ is an example of a pure state decomposition of the state $\rho_\text{cl}$. Recall from Eq.~\ref{eq::convexRoofFisher} that the quantum Fisher information of a mixed state is actually four times the convex roof of the variance, i.e. the minimum average variance over all possible pure state decompositions. As such we may write $$I_Q(\rho_\text{cl}, n/2) \leq \int \dd[2]{\alpha}P_\text{cl}(\alpha) \expval{n}{\alpha} = \Tr(\rho_\text{cl} n ) = \expval{n}_{\rho_\text{cl}}.$$ If we apply the Quantum Cram{\'e}r-Rao bound (Theorem~\ref{thm::qCramerRao}), we then get the following lower bound on the standard deviation for our unbiased estimate $t$:
\begin{align} \label{eq::shotNoiseLimit}
\Delta t \geq \frac{1}{\sqrt{\expval{n}_{\rho_\text{cl}}}}
\end{align}

The above expression is called the standard quantum limit, also sometimes called the shot noise limit. It essentially states that for classical light sources the measurement precision scales with the inverse square root of the mean photon number at best. Since the mean photon number reflects the energy content of your light source, one may reinterpret this to mean that for classical light sources, energy can be traded for measurement precision. Importantly, since for classical states $I_Q(\rho_\text{cl}, n/2)$ scales with the mean photon number $\expval{n}_{\rho_\text{cl}}$ at best, $I_Q(\rho, n/2) > \expval{n}_{\rho}$ implies that $\rho$ must be nonclassical and useful metrological power may be extracted from it.

We recall from the list of elementary properties in Section~\ref{sec::unitaryQMet} that the quantum Fisher information is bounded by the variance of $G$, i.e. $I_Q(\rho, G) \leq 4 \Delta^2 G$. For $G=n/2$, this translates to  $I_Q(\rho, n/2) \leq \Delta_{\rho}^2 \, n $. We can combine this with the condition for nonclassicality $I_Q(\rho, n/2) > \expval{n}_{\rho}$ to obtain the following necessary, but insufficient, condition to beat the standard quantum limit: $$ \frac{\Delta_{\rho}^2 \, n}{\expval{n}_{\rho}}-1 \geq 0.$$

Curiously, this is exactly the set of nonclassical states which are \emph{not} identified by the Mandel Q parameter (see Section~\ref{sec::mandelQ}. Any nonclassical state that is potentially useful for this phase estimation problem must be super-Poissonian. 

Let us consider the subspace spanned by the Fock states $\{ \ket{0}, \ldots, \ket{n_{\max}}\}$ for some finite $n_{\max}.$ Within this subspace, $n$ is a bounded operator, so the state $\ket{\psi} = \frac{1}{\sqrt{2}}(\ket{0}+\ket{n_{\max}} )$ must maximize $I_Q(\ket{\psi}, n)$\cite{Braunstein1996,Giovannetti2006}. One may verify that $\expval{n}_{\ket{\psi}} = n_{\max}/2$ and that $\Delta^2 n = n_{\max}^2/4 = \expval{n}_{\ket{\psi}}^2 = I_Q(\ket{\psi}, n/2)$. This leads to the following bound on the unbiased estimate:
\begin{align} \label{eq::heisenbergLimit}
\Delta t \geq \frac{1}{\expval{n}_{\rho}}.
\end{align} 

Eq.~\ref{eq::heisenbergLimit} is referred to the Heisenberg limit. The name is somewhat of a misnomer as it is not really a fundamental quantum limit. For $N$ distinguishable particles where every particle may be addressed, one may indeed make general arguments to demonstrate that the quantum limit\cite{Giovannetti2006} is $~1/N$, but for systems of identical particles such arguments do not apply. Indeed, quantum states beating the Heisenberg limit in Eq.~\ref{eq::heisenbergLimit} have been studied. Somewhat confusingly, strategies beating the Heisenberg limit are sometimes said to have achieved sub-Heisenberg sensitivity\cite{Anisimov2010, Rivas2012, Zhang2012}. There are doubts as to whether such sub-Heisenberg strategies are truly useful, as the Heisenberg limit is retrieved once the performance is averaged over all possible values of $\theta$\cite{Berry2012}. There are also arguments suggesting that without prior knowledge of $\theta$, the advantages of sub-Heisenberg strategies, while possible, are limited\cite{Giovannetti2012}. In any case, a more general quantum limit\cite{Hofmann2009, Zhang2012} is $$\Delta t \geq \frac{1}{\sqrt{\expval{n^2}_\rho}}.$$ Note that the square is on the number operator $n$, and not the expectation value. In general $\expval{n^2}_\rho \neq \expval{n}_\rho^2.$ This comes from the observation that $I_Q(\rho, n/2) \leq \Delta_\rho^2 \, n \leq \expval{n^2}_\rho$. While the expression appears superficially similar to Eq.~\ref{eq::heisenbergLimit}, it cannot be interpreted directly as the energy of the system. Quantum advantages in phase estimation problems are typically compared in an energy adjusted scenario.

One doesn't have to look far to find examples of sub-Heisenberg sensitivity. Consider the squeezed vacuum state $\ket{0, \abs{\epsilon}}.$ Its mean photon number and variance can be verified to be $\expval{n}_{\ket{0, \abs{\epsilon}}} = \sinh^2(\abs{\epsilon})$ and $\Delta^2 \, n = 2\cosh^2(\abs{\epsilon})\sinh^2(\abs{\epsilon})= 2(\expval{n}_{\ket{0, \abs{\epsilon}}}^2+\expval{n}_{\ket{0, \abs{\epsilon}}}) = I_Q(\ket{0, \abs{\epsilon}}, n/2)$. Clearly, we have $I_Q(\ket{0, \abs{\epsilon}}, n/2) > \expval{n}_{\ket{0, \abs{\epsilon}}}^2$, so the Heisenberg limit has been exceeded.

Table~\ref{tab:comparisonPhaseEst} compares the achievable Fisher information for coherent states, Fock states, squeezed states and cat states that were introduced in Section~\ref{sec::examplesPfun}. Among the states compared, we see that the squeezed vacuum and even cat states are able to beat the standard quantum limit. The Fock and odd cat states are unable to do so despite being nonclassical. This is because they are sub-Poissonian. See Section~\ref{sec::mandelQ}. 

\begin{table}
\caption{\label{tab:comparisonPhaseEst} A comparison of the achievable quantum Fisher information for coherent states, Fock states, squeezed vacuum, even cat states and odd cat states in single mode phase estimation. See Section~\ref{sec::examplesPfun} for more detailed discussion of such states. }
\begin{ruledtabular}
\begin{tabular}{lcr}
State $\ket{\psi}$ & Fisher Information $I_Q(\ket{\psi}, n/2)$ \\
\hline
Coherent states, $\ket{\alpha}$ & $\expval{n}_{\ket{\alpha}}$ \\
Fock States, $\ket{n}$ & $0$\\
Squeezed vacuum, $\ket{0, \abs{\epsilon}}$  & $2(\expval{n}_{\ket{0, \abs{\epsilon}}}^2+\expval{n}_{\ket{0, \abs{\epsilon}}})$ \\
Even cat states, $ \ket{\mathrm{\psi_{+}}}$ & $\expval{n}_{\ket{\mathrm{\psi_{+}}}} + \abs{\beta}^4\sech^2\abs{\beta^2} $\\
Odd cat states, $\ket{\mathrm{\psi_{-}}}$ & $ \expval{n}_{\ket{\mathrm{\psi_{-}}}} - \abs{\beta}^4\csch^{2}\abs{\beta^2} $\\

\end{tabular}
\end{ruledtabular}
\end{table}

\subsubsection{Optical interferometry}

In Section~\ref{sec::phaseEst}), we introduced the single mode phase estimation problem, where the goal is to perform precise measurements of the change in phase $\theta/2$ in a single mode of light. We now extend the problem slightly and consider a two mode setup. Let the corresponding creation and annihilation operators of the first and second modes be $a^\dagger,a$ and $b^\dagger,b$. Their respective number operators are then $n_a \coloneqq a^\dag a$ and $n_b \coloneqq b^\dag b$. The total number operator is then $n_\text{total} = n_a+n_b$
We then choose the generator to be $G= (n_a - n_b)/2$, resulting in the unitary evolution $U_\theta = e^{-i\theta n_a / 2 }e^{i\theta n_b / 2 }$.

We see that this unitary dynamic corresponds to an clockwise rotation in the phase space of mode $a$ by an angle of $\theta/2$, together with another rotation of angle $\theta/2$ in the opposite, anti-clockwise direction in mode $b$. The angle $\theta$ then measures the \emph{relative} phase difference between the two modes. This mirrors the situation modelled by a Mach-Zehnder interferometer, which is shown in Fig.~\ref{fig::MZI}, where the interaction in the middle box is funtionally equivalent to $U_\theta = e^{-i\theta n_a / 2 }e^{i\theta n_b / 2 }$. We now consider the quantum Fisher information for several classes of states under the unitary dynamic $U_\theta$. This corresponds to calculating the Fisher information of the state \emph{after} exiting the first beam splitter of the interferometer, which is represented by $\ket{\psi_\text{mid}}$ in Fig.~\ref{fig::MZI}.

\begin{figure}
\includegraphics[width = 0.9\linewidth ]{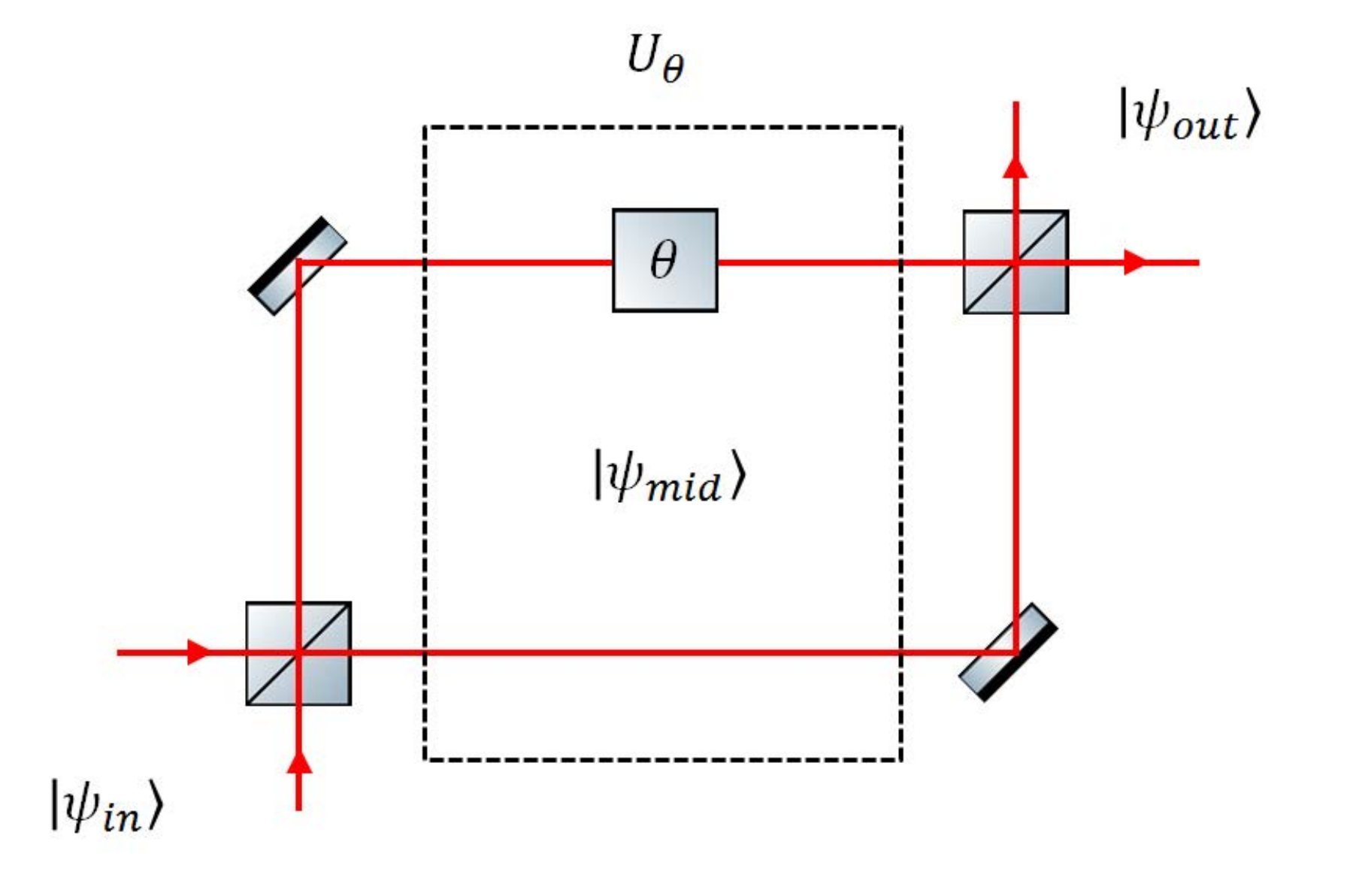}
\caption{\label{fig::MZI} A Mach-Zehnder interferometer. A two mode input state $\ket{\psi_\text{in}}$ enters a beam splitter. Within the interferometer, the beam $\ket{\psi_\text{mid}}$ experiences a phase shift of $\theta$ in the upper path relative to the lower path. The beams exit the interferometer after passing through a second beam splitter. The goal is to perform a measurement on $\ket{\psi_\text{out}}$ to estimate the value of $\theta$.}
\end{figure}

We first perform a similar analysis that was performed for the single mode case over the set of classical states $\rho_\text{cl} = \int \dd[2]{\alpha}\dd[2]{\beta} P_\text{cl}(\alpha,\beta) \ketbra{\alpha} \otimes \ketbra{\beta}$, where $P_\text{cl}(\alpha,\beta)$ is a positive probability distribution function. For a product of coherent states $\ket{\alpha}\ket{\beta}$, we can verify that $I_Q(\ket{\alpha}\ket{\beta}, G) = \Delta^2_{\ket{\alpha}\ket{\beta}} G = (\Delta^2_{\ket{\alpha}} n_a +\Delta_{\ket{\beta}}^2 n_b)/4 = \expval{n_\text{total}}_{\ket{\alpha}\ket{\beta}}/4$. We then observe that since $P_\text{cl}(\alpha,\beta)$ is just a positive classical distribution, the expression $\rho_\text{cl} = \int \dd[2]{\alpha}\dd[2]{\beta} P_\text{cl}(\alpha,\beta) \ketbra{\alpha} \otimes \ketbra{\beta}$ is just a pure state decomposition in terms of products of coherent states. We then combine this with Eq.~\ref{eq::convexRoofFisher} to obtain: 
\begin{align*}
I_Q(\rho_\text{cl}, G) &\leq 4 \int \dd[2]{\alpha}\dd[2]{\beta} P_\text{cl}(\alpha,\beta)  \expval{n_\text{total}}_{\ket{\alpha}\ket{\beta}}/4 \\
&=\int \dd[2]{\alpha}\dd[2]{\beta} P_\text{cl}(\alpha,\beta)  \bra{\alpha}\bra{\beta} n_\text{total}\ket{\alpha}\ket{\beta} \\
&= \Tr(\rho_\text{cl}n_\text{total}) \\
&= \expval{n_\text{total}}_{\rho_\text{cl}}.
\end{align*} From the Quantum Cram{\'e}r-Rao bound (Theorem~\ref{thm::qCramerRao}), we get the standard quantum limit \begin{align*} 
\Delta t \geq \frac{1}{\sqrt{\expval{n_\text{total}}_{\rho_\text{cl}}}},
\end{align*} 
which we see is basically identical to the single mode case discussed in Section~\ref{sec::phaseEst}.

Let us consider the subspace spanned by the product of Fock states $\{\ket{i_a}\ket{i_b} \}$ where $i_a,i_b \leq n_\text{max}$. Within this subspace, the eigenvector of $G$ with the maximum eigenvalue is $G \ket{n_\text{max}}\ket{0} = n_\text{max}/2 \ket{n_\text{max}}\ket{0}$, and the eigenvector with the minimum eigenvalue is $G \ket{0}\ket{n_\text{max}} = -n_\text{max}/2 \ket{0}\ket{n_\text{max}}$. The state achieving the largest quantum Fisher information within this subspace is then $\ket{\psi} = (\ket{n_\text{max}}\ket{0} + \ket{0}\ket{n_\text{max}})/\sqrt{2}.$ This is the famous NOON state\cite{Dowling2008}. 

One may further verify that for the NOON state $\Delta^2_{\ket{\psi}}G = n_\text{max}^2/4 = \expval{n_\text{total}}_{\ket{\psi}}^2/4$ such that we have $$I_Q(\ket{\psi}, G) = \expval{n_\text{total}}_{\ket{\psi}}^2,$$ which gives us the measurement sensitivity 
\begin{align*} 
\Delta t \geq \frac{1}{\expval{n_\text{total}}_{\ket{\psi}}}.
\end{align*} We see that the NOON state achieves Heisenberg limited sensitivity. Again, we are careful to note that the Heisenberg limit is not a truly fundamental quantum limit for optical quantum systems (see discussion in Section~\ref{sec::phaseEst}).

Thus far, we have discussed what happens after the first beam splitter in the interferometer, which isolates the effect of the unitary encoding of $
\theta$ onto the quantum probe. Another more traditional convention is to consider the state $\rho_\text{in}$ that is being fed into the input ports of the interferometer and the state $\rho_\text{out}$ emerging from the output ports. These correspond to the state before the first beam splitter and after the second beam splitter in Fig.~\ref{fig::MZI}. In principle, one may perform any quantum measurement on the output state $\rho_\text{out}$, but traditional interferometry typically measures the intensity difference $n_a-n_b$ at the output, corresponding to the visibility of the interference fringes.

Under such settings, there exists a useful formalism\cite{ Yurke1986a} for analysing the relationship between the input state and the visibility at the output port of the interferometer. Let us define the operators 
\begin{align*}
J_x \coloneqq \frac{1}{2}(a^\dag b + b^\dag a), \quad J_y \coloneqq \frac{i}{2}(b^\dag a - a^\dag b), \quad J_z \coloneqq \frac{1}{2}(a^\dag a + b^\dag b).
\end{align*} One may verify that these operators satisfy the commutation relations $\comm{J_i}{J_j} = i \varepsilon_{ijk}J_k$ so they analogous to angular momentum operators\cite{Schwinger1965}. Based on these definitions, we can obtain the following convenient input-output relations: $$\expval{J_z}_{\rho_\text{out}} = \cos\theta \expval{J_z}_{\rho_\text{in}} - \sin\theta \expval{J_x}_{\rho_\text{in}} $$ and 
\begin{align*}
\Delta^2_{\rho_\text{out}}J_z &= \cos^2\theta\Delta^2_{\rho_\text{in}}J_z + \cos^2\theta\Delta^2_{\rho_\text{in}}J_x \\
& \hspace{0.15\textwidth} -2\sin\theta\cos \theta \text{cov}(J_x,J_z)_{\rho_\text{in}},
\end{align*} where $\text{cov}(J_x,J_z)_{\rho_\text{in}} \coloneqq \expval{J_x J_z + J_z J_x}_{\rho_\text{in}} - \expval{J_x}_{\rho_\text{in}}\expval{J_z}_{\rho_\text{in}}$ is the covariance between $J_x$ and $J_z.$ Notice that $J_z = (n_a - n_b)/2$, so $\expval{J_z}_{\rho_\text{out}}$ and  $\Delta^2_{\rho_\text{out}}J_z$ are, up constant factors, just the mean and variance of a visibility measurement at the output. Assuming that $\Delta_{\rho_\text{out}}J_z $ is sufficiently small, the following error propagation formula gives a good approximation of the measurement sensitivity one may expect from visibility measurements: $$\Delta\theta \vert_{\rho_\text{in}} = \frac{\Delta_{\rho_\text{out}}J_z}{\abs{\pdv{\expval{J_z}_{\rho_\text{out}}}{\theta}}}.$$

Now, suppose the input state is a single laser beam with the other input port empty, i.e. $\ket{\psi_\text{in}} = \ket{\alpha}\ket{0}$. Using the above input-output relations, we can show that $$\Delta\theta \vert_{\ket{\psi_\text{in}}} = \frac{1}{\sqrt{\expval{n_\text{total}}_{\ket{\psi_\text{in}}}}\abs{\sin\theta}}, $$ which is in line with what we expect from the standard quantum limit. It was the contribution of Caves\cite{Caves1981} who realized that one may beat this by replacing the vacuum with a nonclassical state such as squeezed vacuum. More specifically, he showed that if we choose $\ket{\psi_\text{in}} = \ket{\alpha}\ket{0, \abs{\epsilon}}$, then in the limit of $\expval{n_\text{total}}_{\ket{\psi_\text{in}}} \rightarrow \infty$ it is possible to achieve $$\Delta\theta \vert_{\ket{\psi_\text{in}}} \approx \frac{1}{\expval{n_\text{total}}_{\ket{\psi_\text{in}}}^{3/4}}$$ for certain combinations of $\alpha$ and $\abs{\epsilon}$. Note that this is below the Heisenberg limit in Eq.~\ref{eq::heisenbergLimit}, but this is only a limitation of a visibility measurement and not a fundamental limit. In principle, the input state $\ket{\psi_\text{in}} = \ket{\alpha}\ket{0, \abs{\epsilon}}$ is able to do better, potentially reaching the Heisenberg limit, if one considers more general measurements such as photon number based\cite{Pezze2008,Seshadreesan2011} or homodyne measurements\cite{dAriano1995,Oh2017, Oh2019}.

Curiously, it turns out that the coherent state $\ket{\alpha}$, despite being classical, plays an important role in determining whether sub shot noise sensitivities can be achieve. More specifically, if one of the input ports is a vacuum, i.e. the input state has the form $\rho_\text{in} = \rho\otimes \ketbra{0}$, then no matter what state you choose for $\rho$, it is not possible to beat the standard quantum limit.

One may demonstrate this using the Glauber-Sudarshar $P$-representation $\rho= \int\dd[2]{\alpha}P(\alpha)\ketbra{\alpha}$. If the input state has the form $\rho_\text{in} = \rho\otimes \ketbra{0}$, then after the first beam splitter, the state is $\rho_\text{mid} = U_\text{BS}\rho_\text{in} U_\text{BS}^\dag = \int\dd[2]{\alpha}P(\alpha)\ketbra{\frac{\alpha}{\sqrt{2}}} \otimes \ketbra{\frac{\alpha}{\sqrt{2}}}$. Let us consider the variance $\Delta^2_{\rho_\text{mid}} G$, which can be expanded in the following way: 
\begin{align*}
&\Delta^2_{\rho_\text{mid}} G \\
&= \expval{G^2}_ {\rho_\text{mid}} - \expval{G}^2_{\rho_\text{mid}} \\
&=  \frac{1}{4}\left [\expval{(n_a-n_b)^2}_ {\rho_\text{mid}} - \expval{n_a-n_b}^2_{\rho_\text{mid}} \right ] \\
&=\frac{1}{4}\left(\Delta^2_ {\rho_\text{mid}}n_a+ \Delta^2_ {\rho_\text{mid}}n_b -2\expval{n_a n_b}_ {\rho_\text{mid}}+2\expval{n_a}_ {\rho_\text{mid}}\expval{n_b}_ {\rho_\text{mid}}\right) \\
&=\frac{1}{2}\left(\Delta^2_ {\rho_\text{mid}}n_a-\expval{n_a n_b}_ {\rho_\text{mid}}+\expval{n_a}_ {\rho_\text{mid}}^2\right).
\end{align*} In the last line, we used the fact that $\Delta^2_ {\rho_\text{mid}}n_a = \Delta^2_ {\rho_\text{mid}}n_b$ and $\Delta^2_ {\rho_\text{mid}}n_a = \Delta^2_ {\rho_\text{mid}}n_b$ since $\rho_\text{mid}$ is symmetric on both modes. We now compute the individual terms in the sum. We can verify that 
\begin{align*}
&\expval{n_a n_b}_ {\rho_\text{mid}} \\
&= \Tr(\rho_\text{mid}n_an_b) \\
&= \Tr[\int\dd[2]{\alpha}P(\alpha)\ketbra{\frac{\alpha}{\sqrt{2}}} \otimes \ketbra{\frac{\alpha}{\sqrt{2}}}n_a n_b] \\
&=\int\dd[2]{\alpha}P(\alpha) \abs{\frac{\alpha}{\sqrt{2}}}^4.
\end{align*} 
We then substitute this into $\Delta^2_ {\rho_\text{mid}}n_a$ and  use the identity $n_a^2 = (a^\dag)^2a^2 + n_a$ to get 
\begin{align*}
&\Delta^2_ {\rho_\text{mid}}n_a \\
&= \expval{n_a^2}_{\rho_\text{mid}} - \expval{n_a}_ {\rho_\text{mid}}^2 \\
&= \Tr(\rho_\text{mid}n_a^2)  - \expval{n_a}_ {\rho_\text{mid}}^2\\
&= \Tr[\rho_\text{mid} (a^\dag)^2a^2 + n_a]  - \expval{n_a}_ {\rho_\text{mid}}^2\\
&= \Tr[\int\dd[2]{\alpha}P(\alpha)\ketbra{\frac{\alpha}{\sqrt{2}}} \otimes \ketbra{\frac{\alpha}{\sqrt{2}}}(a^\dag)^2a^2 ] \\
&\hspace{0.25\textwidth} +\expval{n_a}_ {\rho_\text{mid}} -\expval{n_a}_ {\rho_\text{mid}}^2\\
&=\int\dd[2]{\alpha}P(\alpha) \abs{\frac{\alpha}{\sqrt{2}}}^4 +\expval{n_a}_ {\rho_\text{mid}} -\expval{n_a}_ {\rho_\text{mid}}^2\\
&=\expval{n_a n_b}_ {\rho_\text{mid}}+\expval{n_a}_ {\rho_\text{mid}} -\expval{n_a}_ {\rho_\text{mid}}^2.
\end{align*} Finally, we substitute this expression back into $\Delta^2_{\rho_\text{mid}} G$ to get 
\begin{align*}
&\Delta^2_{\rho_\text{mid}} G \\
&=\frac{1}{2}\left(\Delta^2_ {\rho_\text{mid}}n_a-\expval{n_a n_b}_ {\rho_\text{mid}}+\expval{n_a}_ {\rho_\text{mid}}^2\right) \\
&=\frac{1}{2}\expval{n_a}_ {\rho_\text{mid}} \\ 
&= \frac{1}{4}\expval{n_a+n_b}_ {\rho_\text{mid}} \\ 
&= \frac{1}{4}\expval{n_\text{total}}_ {\rho_\text{mid}}.
\end{align*}

Using the fact that the quantum Fisher information is bounded by 4 times the variance, we get $I_Q(\rho_\text{mid}, G) \geq 4 \Delta^2_{\rho_\text{mid}} G = \expval{n_\text{total}}_ {\rho_\text{mid}}$. For any unbiased estimator $t$, the quantum Cram{\'e}r-Rao bound says $$\Delta t \geq \frac{1}{\sqrt{\expval{n_\text{total}}_ {\rho_\text{mid}}}}.$$ Caves\cite{Caves1980,Caves1981} argued that zero point fluctuations, i.e. the fluctuations of the vacuum energy entering the the interferometer, leads to shot noise limited sensitivity. Random energy fluctuations of the vacuum can cause random photons to enter the interferometer and diminish the visibility. From the previous arguments, we see that so long as one of the input ports is empty, the vacuum noise is sufficient to dominate any attempts to improve the situation by injecting nonclassical light into the other port\cite{Lang2013, Takeoka2017}. This conclusion is not limited to just visibility measurements, but applies to all possible measurements performed on the output state.

The proposal by Caves\cite{Caves1981} to replace the vacuum with a squeezed state was the first of such such proposals to use nonclassical states of light to improve interferometry. Broadly speaking, the modern interpretation of quantum metrology can be said to have started from this work. Cave's proposal is now being adopted in gravitational wave detectors\cite{Schnabel2010,LIGO2011, LIGO2013}. Other than the aforementioned NOON and squeezed states, many other nonclassical quantum states have also been considered as potential inputs. These include highly nonclassical states such as such as two-mode squeezed states\cite{Bondurant1984}, entangled coherent states\cite{Joo2011, Joo2012} and definite photon number states\cite{Yurke1986a, Holland1993,Sanders1995, Berry2000}.

\subsubsection{Estimating phase space displacements} \label{sec::estDisplacements}

Previously, we have considered both single mode and two mode phase estimation problems. Such problems are equivalent to measuring changes in angular rotation in phase space. The natural counterpart to angular rotations are the set of linear displacements. It turns out that nonclassical quantum states also demonstrate intrinsic superiority over classical states when the task is to measure the extent of the linear displacement\cite{Munro2002}. This was recently proposed as a nonclassicality test in Ref.~\onlinecite{Yadin2018} and  Ref.~\onlinecite{Kwon2019}. Both were able to show that for pure states, more Fisher information can be extracted from nonclassical states compared to classical states. However in Ref.~\onlinecite{Yadin2018}, the extension to mixed quantum states was achieved using a convex roof approach which does not have a direct operational interpretation. In contrast, Ref.~\onlinecite{Kwon2019} focused more on the amount of Fisher information that is extractable from mixed quantum states.

Consider a system consisting of $N$ optical modes. The corresponding creation and annihilation operators $a_i^\dag$ and $a_i$ where $i = 1, \ldots , N$. An $N$ mode annihilation operator can be defined as $a_{\boldsymbol{\mu}} \coloneqq \sum_{i=1}^N \mu_i a_i$ where $\boldsymbol{\mu} = [\Re(\mu_1), \Im(\mu_1) \ldots, \Re(\mu_N), \Im(\mu_N)]$ is a $2N$ dimensional real vector of unit length, i.e. $\abs{\boldsymbol{\mu}}^2 = \sum_{i=1}^N \abs{\mu_i}^2 = 1$. We can also define the $N$ mode field quadrature $$X_{\boldsymbol{\mu}} \coloneqq\frac{a_{\boldsymbol{\mu}} + a^\dag_{\boldsymbol{\mu}}}{\sqrt{2}}$$ as well as the $N$ mode displacement operator $$D(\theta, \boldsymbol{\mu}) \coloneqq e^{-i\theta X_{\boldsymbol{\mu}}}. $$ For a single mode, this reduces to $D(\theta, \mu=e^{i\phi} ) \coloneqq \exp[-i\theta (e^{i\phi} a + e^{-i\phi} a^\dag)/\sqrt{2}] = \exp[\alpha a^\dag - \alpha^* a],$ where $\alpha = i\theta e^{-i \phi }/\sqrt{2}.$ We therefore see that other than a re-parametrization, the single mode displacement operator defined in Section~\ref{sec::cohStates} is retrieved when $N=1$. The parameter $\theta$ determines the magnitude of the displacement, while $\boldsymbol{\mu}$ determines the direction.

Suppose we are interested to estimate the magnitude of the displacement $\theta$. This is equivalent to choosing the generator $G = X_{\boldsymbol{\mu}}$ (see Section~\ref{sec::unitaryQMet}). The fundamental limits of this parameter estimation problem is given by the quantum Fisher information $I_Q(\rho, X_{\boldsymbol{\mu}})$. Writing $\rho = \sum_i p_i \ketbra{i}$ in terms of its eigenbasis $\{ \ket{i} \}$, we can compute the quantum Fisher information and verify that it simplifies to the following:
\begin{align*}
I_Q(\rho, X_{\boldsymbol\mu}) 
&= 2 \sum_{i,j} \frac{(p_i - p_j)^2}{p_i + p_j}|\bra{i} X_{\boldsymbol\mu} \ket{j}|^2 \\
&= \boldsymbol{\mu}^T \boldsymbol{F} \boldsymbol{\mu},
\end{align*}
where ${\boldsymbol F}$ is called the quantum Fisher information matrix. It is a real symmetric $2N \times 2N$ matrix  with elements 
\begin{align} \label{eq::QFIMatrix}
F_{kl} = 2 \sum_{i,j} \frac{(p_i - p_j)^2}{p_i + p_j} \bra{i}X^{(k)}\ket{j} \bra{j}X^{(l)}\ket{i},
\end{align}
and $X^{(2i-1)} = (a_n + a^\dagger_n) / \sqrt{2}$ and $X^{(2i)} = (a_n - a^\dagger_n) / (\sqrt{2}i)$ are the local canonical quadrature operators for the $i$th mode.

The resulting quantum Cram{\'e}r Rao bound may therefore be written as $$\Delta t \geq \frac{1}{\sqrt{\boldsymbol \mu^T \boldsymbol F \boldsymbol \mu}}.$$ We see that the relevant quantities can be computed from the quantum Fisher information matrix $\boldsymbol{F}.$

For any state $\rho$, let us consider the average Fisher information over all possible quadrature directions ${\boldsymbol\mu}$: 
\begin{align} \label{eq::aveMetPow}
M_\text{ave}(\rho) := \frac{1}{2A} \int_{S} \dd[2N]{\boldsymbol\mu} I_F(\ket{\psi}, X_{\boldsymbol\mu} )
\end{align}
 where $S = \{ \boldsymbol{\mu} : \abs{\boldsymbol{\mu}} = 1 \}$ is the surface of the unit sphere, and $A = \int_S d^{2N}\boldsymbol\mu = 2\pi^N/(N-1)! $.

It is possible to simplify the above expression for $M_\text{ave}(\rho)$. Since $I_Q(\rho, X_{\boldsymbol\mu}) = \boldsymbol{\mu}^T \boldsymbol{F} \boldsymbol{\mu}$, we can write $I_Q(\rho, X_{\boldsymbol\mu}) = \boldsymbol{e}_i^\dag O_{\boldsymbol\mu,i}^\dag F O_{\boldsymbol\mu,i} \boldsymbol{e}_i$ for any complete set of basis vectors $\{ \boldsymbol{e}_i \}_{i=1}^{2N}$.  $O_{\boldsymbol\mu,i}$ is some orthorgonal matrix in $2N$ dimensional vector space satisfying $O_{\boldsymbol\mu,i} \boldsymbol{e}_i = \boldsymbol{\mu}$. However, because the integration is over every direction $\boldsymbol{\mu}$, it is equivalent to integrating over every possible orthogonal matrix $O_\mu$, so we can drop the index $i$ and write $\int_S d^{2N} \boldsymbol{\mu}^T \boldsymbol{F} \boldsymbol{\mu} = \int_S \dd[2]{\boldsymbol{\mu}} \boldsymbol{e}_i^\dag O_{\boldsymbol\mu,i}^\dag F O_{\boldsymbol\mu,i} \boldsymbol{e}_i = \int_S \dd[2]{\boldsymbol{\mu}} \boldsymbol{e}_i^\dag O_{\boldsymbol\mu}^\dag F O_{\boldsymbol\mu} \boldsymbol{e}_i$ for every $i$, so we get 
\begin{align*}
2N \int_S d^{2N} \boldsymbol{\mu}^T \boldsymbol{F} \boldsymbol{\mu} &= \sum_{i=1}^{2N} \int_S \dd[2N]{\boldsymbol{\mu}} \boldsymbol{e}_i^\dag O_{\boldsymbol\mu}^\dag F O_{\boldsymbol\mu} \boldsymbol{e}_i \\ 
&= \int_S \dd[2N]{\boldsymbol{\mu}} \Tr(O_{\boldsymbol\mu}^\dag F O_{\boldsymbol\mu} ) \\
 &= A \Tr \boldsymbol{F}.
\end{align*} Substituting back into Eq.~\ref{eq::aveMetPow}, we get the expression 
\begin{align} \label{eq::aveMetPowTrace}
M_\text{ave}(\rho) = \frac{\Tr \boldsymbol{F}}{4N}.
\end{align}

It is instructive to consider the $N = 1$ case. One may verify using Eq.~\ref{eq::QFIMatrix} that for single mode states, $M_\text{ave}(\rho) = \frac{\Tr \boldsymbol{F}}{4N} = [I_Q(\rho, x)+ I_Q(\rho, p)]/4$ for general mixed states and that  $M_\text{ave}(\ket{\psi}) = \Delta_{\ket{\psi}}^2 x +\Delta_{\ket{\psi}}^2 p$ for pure states. For a coherent state $\ket{\alpha}$, we see that $M_\text{ave}(\ket{\alpha}) = 1$ (see also Section~\ref{sec::cohStates}). As such, for any classical state $\rho_\text{cl} = \int \dd[2]{\alpha}P_\text{cl}(\alpha) \ketbra{\alpha}$ where $P_\text{cl}(\alpha)$ is a positive probability distribution, due to the convexity of the quantum Fisher information, we have the classical bound $$ M_\text{ave}(\rho_\text{cl}) \leq 1 $$ Any state $\rho$ surpassing this limit must clearly be nonclassical. Over the set of pure states, only the coherent states can satisfy $\Delta_{\ket{\psi}}^2 x +\Delta_{\ket{\psi}}^2 p =1$ , so $M_\text{ave}(\rho)$ is able to identify every nonclassical pure states. Identical arguments also apply for $N > 1$, so the classical bound also applies for multimode systems. 

Let us now consider a different quantity. Suppose instead of the average, we compute the maximum Fisher information over all possible quadrature directions $\boldsymbol\mu$.
\begin{equation}
\label{QF1}
M_\text{opt} (\rho) := \frac{1}{2} \max_{\boldsymbol\mu \in S} I_Q(\rho, X_{\boldsymbol\mu}) = \frac{ \lambda_{\max} ({\boldsymbol F}) }{2},
\end{equation}
where $\lambda_{\max} ({\boldsymbol F})$ is the maximum eigenvalue of $\boldsymbol F$. The last equality comes from the direct observation that $I_Q(\rho, X_{\boldsymbol\mu}) = \boldsymbol{\mu}^T \boldsymbol{F} \boldsymbol{\mu}$ is maximized when $\boldsymbol{\mu}$ is the eigenvector corresponding to $\lambda_{\max} ({\boldsymbol F})$. $M_\text{opt} (\rho)$ is perhaps a more operational quantity, because it directly quantifies the maximum metrological power that you can extract from the state in some parameter estimation problem, rather than some hypothetical average performance.

Again, it is instructive to consider the single mode case $N=1$. For the coherent state $\ket{\alpha}$, the general field quadrature $x_\phi \coloneqq (e^{-i\phi}a + e^{i\phi}a^\dag)\sqrt{2}$ has the same variance in every direction so $\Delta^2_{\ket{\alpha}} x_\phi = 1/2$. From the convexity of the quantum Fisher information, we obtain the classical bound $$M_\text{opt} (\rho_\text{cl}) \leq 1,$$ so any quantum state $\rho$ that exceeds this bound has to be nonclassical. For pure states $\ket{\psi}$, since every nonclassical state satisfies $\Delta^2_{\ket{\psi}}x + \Delta^2_{\ket{\psi}}p > 1$, there must be at least one quadrature direction where $\Delta^2_{\ket{\alpha}} x_\phi > 1/2$, so $M_\text{opt} (\rho)$ is also able to able to identify every nonclassical pure state. 

As every nonclassical pure state will beat both classical limits discussed in this section, examples are plentiful. An example of states of states exceeding the classical limits are the Cat states $\ket{\mathrm{\psi_{\pm}}} \coloneqq \frac{1}{\sqrt{\mathcal{N}}} (\ket{\beta} \pm \ket{-\beta})$, which achieves $M_{\text{ave}} (\ket{\mathrm{\psi_{\pm}}}) = 2\expval{n}_{\ket{\mathrm{\psi_{\pm}}}}$ and $M_{\text{opt}} (\ket{\mathrm{\psi_{\pm}}}) = 2(\expval{n}_{\ket{\mathrm{\psi_{\pm}}}} +\abs{\beta}^2)$. Interestingly, entanglement does not necessarily help the sensitivity in this case since both Fock states $\ket{n}$ and NOON states $(\ket{n}\ket{0}+\ket{0}\ket{n})/\sqrt{2}$ achieves $M_{\text{ave}} (\ket{\psi}) = 2n/N $ and $M_{\text{opt}} (\ket{\psi}) = 2n$.

It is also worth mentioning that in Refs.~\onlinecite{Yadin2018,Kwon2019} , the authors were also motivated to construct a nonclassicality measure in the resource theory of nonclassicality (see Section~\ref{sec::resTheoryNonclass}). One may also show that $M_{\text{ave}} (\ket{\psi})$ for pure states $\ket{\psi}$, and $M_{\text{opt}} (\rho)$ for general mixed states $\rho$, are both nonclassicality measures under the resource theoretical approach.

\subsubsection{Quantum illumination and reflectivity measurements}

Quantum illumination is a target detection scheme first proposed by Lloyd\cite{Lloyd2008}. The idea is to be able to detect the presence of a weakly reflective target by exploiting the properties of entanglement. By sending out a probe beam that is entangled to the receiver, we can potentially discriminate between receiving a random photon from a noisy environment, or a photon that was genuinely reflected back from the unknown object. One may show that a two mode squeezed state is able to beat strategies using only classical light sources\cite{Tan2008,Shapiro2009}. Experimental realizations of the quantum illumination protocol have recently been performed \cite{Lopaeva2013, Lopaeva2014, Zhang2015}. There have also been proposals to perform quantum illumination in the microwave regime where a radar typically operate\cite{Barzanjeh2015}, as well as repurpose the protocol for quantum communication\cite{Shapiro2009a}. A somewhat surprising fact is that even if the entanglement of the initial state is broken after the signal is sent out, the quantum advantage may still survive\cite{Zhang2013}.

At first glance, quantum illumination appears closer to a remote sensing problem rather than a parameter estimation problem. Indeed, in quantum illumination, the figure of merit is typically the error probability of discriminating between a prepared photon versus a photon from background noise. This does not appear at first directly related to the metrological problems we have considered thus far. However, the problem of quantum illumination has recently been rephrased as a parameter estimation problem in order to provide upper bounds to the general quantum illumination problem\cite{Sanz2017}. This section will mainly discuss this approach.

A quantum illumination strategy consists of the preparation of a two mode state, called the signal-idler system $\ket{\Psi}_\text{SI}$. One half of the system (signal) is sent out as a probe, while the other half (idler) is kept in the laboratory. In addition to the signal-idler system, we also need to consider a source of noisy photons, modelled as a thermal bath $\rho_B$, which is in the thermal state $\rho_B = (1-e^{-\beta}) \sum_{n\geq 0} e^{-\beta n}\ket{n}_B\bra{n},$ where $\beta$ is the inverse temperature. The inverse temperature is related to the mean photon number via the relation $\expval{n_b}_{\rho_B} = (1-e^{-\beta})^{-1}$. 

The weakly reflective object can be modelled as a beam splitter with low reflectivity. Let $s,s^\dagger$ and $b,b^\dag$ be the annihilation and creation operators for the signal and the bath modes respectively. The beam splitter interaction is $U_\theta = \exp[\theta(s^\dag b - s b^\dag)]$. Defining our generator as $G = i(s b^\dag - s^\dag b)$, we can consider the problem of estimating the physical parameter $\theta$, which corresponds to the relectivity of beam splitter. This setup is shown in Fig.~\ref{fig::QIModel}. For an object with low reflectivity, $\theta \approx 0$ so the objective is to send a quantum probe that is able to measure very small changes in $\theta$. One figure of merit here is therefore $I_Q( \rho_\text{SI}\otimes \rho_B, G).$ Note that the input state has three modes, but the interaction is only between the signal and the bath modes. Also note that because $\rho_B$ is a thermal state, the combined state is never a pure state.

\begin{figure}
\includegraphics[width = 0.9\linewidth ]{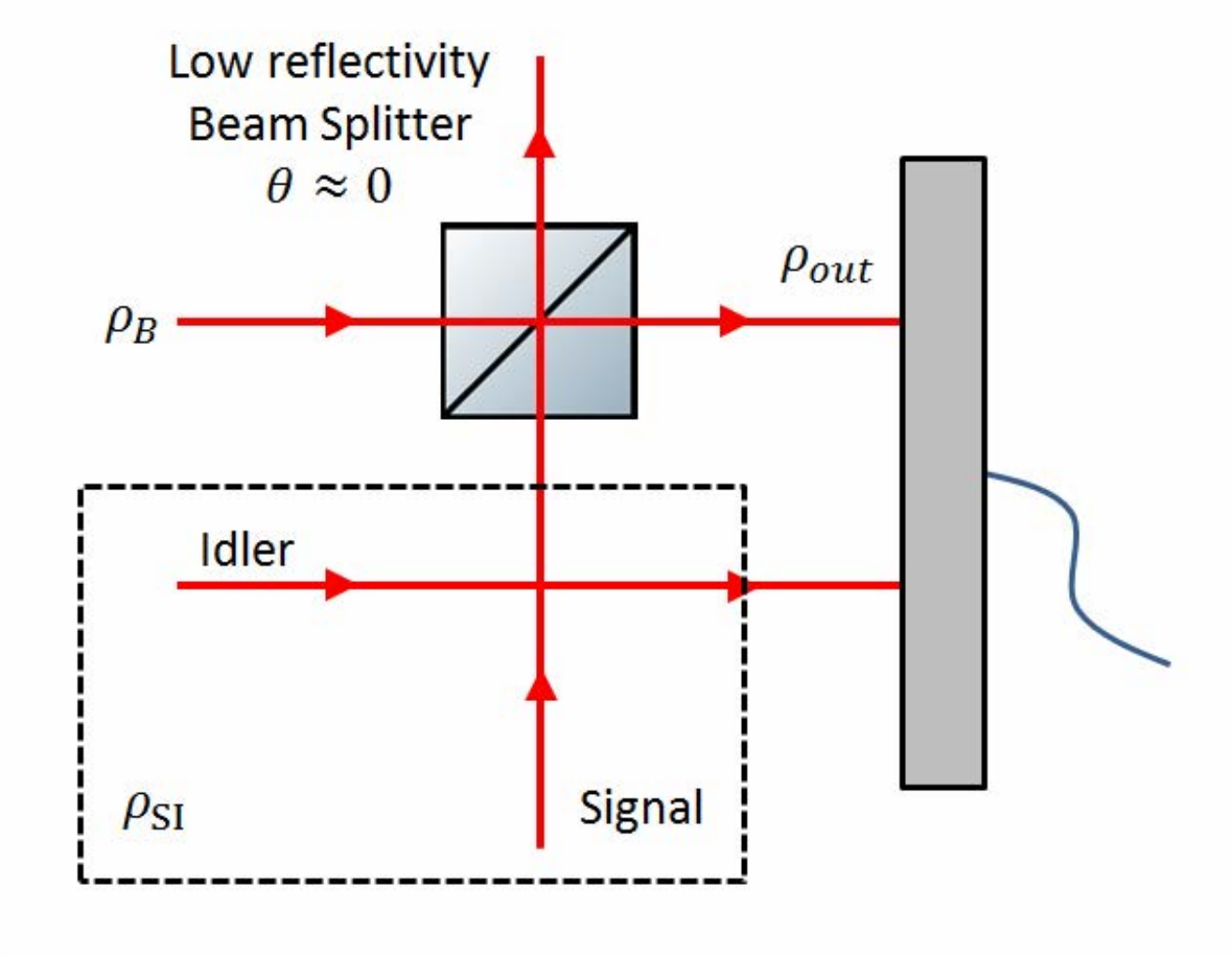}
\caption{\label{fig::QIModel} A simplified model of quantum illumination. A signal idler state $\rho_\text{SI}$ is prepared. The signal state is sent out and interacts with a thermal state $\rho_B$ via a beam splitter with low reflectivity $\theta \approx 0$. A final measurement is performed on the output state $\rho_\text{out}$. }
\end{figure}

Nevertheless, we can still consider a product of coherent states $\ket{\alpha}_S\ket{\beta}_I\ket{\gamma}_B$ for the input. Evaluating the quantum Fisher information for the generator $G$, we can verify that $I_Q(\ket{\alpha}_S\ket{\beta}_I\ket{\gamma}_B, G) = 4( \expval{n_\text{S}}_{\ket{\alpha}} + \expval{n_B}_{\ket{\gamma}})$. Using Eq.~\ref{eq::convexRoofFisher} again, we have for any classical state $\rho_\text{cl} = \int \dd[2]{\alpha}\dd[2]{\beta}\dd[2]{\gamma} P_\text{cl}(\alpha,\beta,\gamma) \ketbra{\alpha}\otimes \ketbra{\beta} \otimes \ketbra{\gamma}$, the classical bound $$I_Q(\rho_\text{cl}, G) = 4( \expval{n_\text{S}}_{\rho_\text{cl}} + \expval{n_B}_{\rho_\text{cl}}).$$ The maximum measurement sensitivity from the quantum Cram{\'e}r-Rao bound is then $$\Delta t \geq \frac{1}{2\sqrt{\expval{n_\text{S}}_{\rho_\text{cl}} + \expval{n_B}_{\rho_\text{cl}}}}.$$ This is, up to a constant factor, broadly similar to the standard quantum limit from Eq.~\ref{eq::shotNoiseLimit}. 

We note that the above is actually the classical limit for a general parameter estimation of the reflectivity $\theta$, where we allow for any input state. For the quantum illumination problem, we have to impose the condition that the input state has the form $\rho_\text{SI}\otimes \rho_B$, where $\rho_B$ is a thermal state. It is also generally assumed that the transmitted part of the signal is lost, and that one only receives the reflected signal. This extra assumption complicates the problem and can lead to very different bounds.

To simplify the problem, we assume the prepared state is a pure state. From the Schmidt decomposition~\cite{NielsenChuang}, we can always write it in the form $$\ket{\psi}_\text{SI} =  \sum_i \sqrt{\lambda_i} \ket{w_i}_S\ket{v_i}_I.$$ For the input state $ \ket{\psi}_{\text{SI}}\bra{\psi}\bra\otimes \rho_B$, we compute the Fisher information that can be extracted by performing a measurement on the reflected signal and and idler system (Fig.~\ref{fig::QIModel}). This corresponds to computing the Fisher information of $\rho_\theta \coloneqq \Tr_B[U_\theta \ket{\psi}_{\text{SI}}\bra{\psi}\bra\otimes \rho_B U^\dag_\theta]$. Using Definition~\ref{def::genQFI}, we can get the expression:
\begin{align*}
I_Q(\rho, \theta) = \frac{4}{1+\expval{n_B}_\rho} \sum_{i,j} \frac{\lambda_i \lambda_j}{\lambda_j + \lambda_i \frac{\expval{n_B}_\rho}{\expval{n_B}_\rho+1}} \abs{\mel{w_i}{s}{w_j}_S}^2.
\end{align*}
In particular, we see that for a product state $\ket{\psi}_\text{SI} =  \ket{w}_S\ket{v}_I,$ this simplifies to $$I_Q(\rho, \theta) = \frac{4 \abs{\mel{w}{s}{w}_I}^2}{1+2\expval{n_B}_\rho}.$$ Furthermore, if the signal is a coherent state $\ket{w} = \ket{\alpha}$, this gives $$I_Q(\rho, \theta) = \frac{4 \expval{n_S}_\rho}{1+2\expval{n_B}_\rho}.$$ This further suggests that if the signal-idler state is a classical state $\rho_\text{cl} = \int \dd[2]{\alpha}\dd[2]{\beta} P_\text{cl}(\alpha,\beta) \ketbra{\alpha} \otimes \ketbra{\beta}$, where $P_\text{cl}(\alpha,\beta)$ is a positive probability distribution function, then, from the convexity of the quantum Fisher information, we have the following classical bound for the quantum illumination problem $$I_Q(\rho_\text{cl}, \theta) \leq \frac{4 \expval{n_S}_{\rho_\text{cl}}}{1+2\expval{n_B}_{\rho_B}}.$$ The quantum Cram{\'e}r-Rao bound then gives $$\Delta t \geq \sqrt{\frac{1+2\expval{n_B}_{\rho_B}}{4 \expval{n_S}_{\rho_\text{cl}}}}.$$ If the environment is sufficiently cool, corresponding to $\expval{n_B}_{\rho_B} \ll 1$, then it turns out that the classical bound is the optimal quantum mechanical bound, so no advantage can be extracted from a nonclassical probe. If the mean photon numbers of the probe $\rho_\text{SI}$ is moderate and the environment $\rho_B$ is sufficiently warm, then it can be demonstrated that entangled coherent states or two mode squeezed vacuum can beat the classical limit.

We are careful to note that the scenario in Fig.~\ref{fig::QIModel} is a highly idealized one. In an actual implementation of quantum illumination, the signal will acquire an additional unknown phase relative to the idler, which is not modelled here. Nonetheless, it is sufficient to establish some fundamental limits on the performance to be gained from using nonclassical states. Also related is Refs.~\onlinecite{Jakeman1986,SabinesChesterking2019}, where nonclassical states are used to improve measurements of the transmissivity of an object in scenarios where photons can be lost.

\subsection{Sources of nonclassical light}

It should be clear at this stage that nonclassical states are a valuable resource of metrological power. All the previously discussed schemes exploiting nonclassical effects assumes that some source of nonclassical states are readily available. In practice however, not all states are created equal, and some nonclassical states are more readily produced than others. In this section we touch upon some methods of generating travelling nonclassical states of light.

Squeezed states may be may be produced by passing light through a nonlinear optical medium via a process called spontaneous parametric down-conversion (SPDC)\cite{Burnham1970, Shen1984, Hong1987, Shih1988, Shih1994, Kwiat1995}. If the process produces a single beam, we say that it is degenerate. If it produces two correlated beams, we say that it is non-degenerate. Degenerate SPDCs produce single mode squeezed vacuum states while non-degenerate SPDCs produce two mode squeezed vacuum. Two mode squeezed vacuum states may also be produced by passing two single mode squeezed vacuum states through a beam splitter.

Two mode squeezed states are also a source of heralded single photons. At low intensities, they emit correlated photon pairs. By detecting one photon in a beam, we know that the other beam must contain a single photon. Heralded low number Fock states may also be similarly generated in this way\cite{Cooper2013}. Other single photon sources include quantum dots\cite{Ohnesorge1997,Gerard1998} and single atoms in resonant microcavities\cite{Kuhn2002,McKeever1992}. Superconducting quantum circuits\cite{Hofheinz2008} are a source of low number Fock states.

A two-photon NOON state can be generated from two single photon sources using the Hong-Ou-Mandel effect\cite{Hong1987}. SPDCs can also be exploited to generate NOON states with low photon numbers\cite{Mitchell2004,Walther2004,Afek2010}. They can also be generated via superconducting circuits \cite{Wang2011}.

Cat states and entangled coherent states may be produced via a nonlinear medium\cite{Yurke1986, Sanders1992, Mecozzi1987, Gerry1999, Jeong2004}, but the required levels of nonlinearities are extremely demanding. Approximate cat states states may also be generated by probabilistically subtracting photons from a squeezed state \cite{Wenger2004, Ourjoumtsev2006, NeergaardNielsen2006, Wakui2007}, by squeezing a single photon\cite{Lund2004}, and by postselecting with a homodyne measurement performed on a number state\cite{Ourjoumtsev2007}. Entangled coherent states may then be generated by passing a cat state through a beam splitter, or be probabilistically prepared via two cat states and postselecting on a photon subtraction event\cite{Ourjoumtsev2009}.  

There are also schemes to generate any arbitrary superpositions of Fock states~\cite{Vogel1993, Hofheinz2009} for low photon numbers. This in principle allows for an infinite variety of nonclassical states with low photon numbers to be approximated. Exotic states involving superpositions/entanglement of classical and nonclassical states of light have also be produced in the laboratory\cite{Jeong2014,Morin2014}.

There is a huge variety of possible nonclassical states\cite{Dodonov2002} and a plethora of possible techniques to generate them. We point the interested reader to other dedicated reviews of the subject for more in-depth discussions\cite{Davidovich1996, Braunstein2005, Sanders2012,DellAnno2006, Boyd2019}.

\section{Conclusion}

In this review, we discussed the notion of nonclassicality in light. By arguing that the most classical states of light are the set of coherent states, we discussed many approaches of identifying and quantifying the nonclassicality of a system. We then discussed how nonclassical states may be exploited to beat classical bounds. This elevates the notion of nonclassicality from something that is purely of fundamental interest, to a resource with practical utility.

The primary goal of the authors is to provide a convincing and thorough demonstration of the utility of nonclassical states in one very specific application: parameter estimation. We have mostly done this by discussing several idealized scenarios, typically under noiseless conditions and under perfect conditions. Naturally, we can ask whether the quantum advantages persist even under more realistic assumptions of noise and imperfect detection. The answer is yes, but rather unsurprisingly, the quantum advantage is  severely diminished. In particular, for phase estimation problems, when the noise models are identically and independently distributed, the quadratic scaling of the Heisenberg limit is not reachable and the quantum advantage is limited to a constant factor. See Ref.~\onlinecite{Escher2011,Dobrzanski2012, Dobrzanski2015} for more discussion of the achievable quantum limits under noisy scenarios.

We have also restricted most of our discussion to the estimation of a single parameter $\theta$. More generally we can also consider situations where multiple parameters are estimated at the same time. See Ref.~\onlinecite{Sidhu2019} for a recent discussion on such generalized quantum parameter estimation problems. However, while a Cram{\'e}r-Rao bound for multi-parameter estimation problems can also be proven, this bound cannot be saturated in general so its interpretation is not as strong as in the single parameter case. 

On the topic of nonclassicality, it is also worth mentioning that there is considerable interest in generating macroscopic superpositions of quantum states. The primary motivation behind this is to push the boundaries of quantum mechanics to the macroscopic regime and to continue to test its validity in the macroscopic limit. Macroscopic superposition is a separate topic in its own right and encompasses more than just systems of light. However, when limited to optical systems, such macroscopic superpositions are necessarily a subset of nonclassicality. See Refs.~\onlinecite{Jeong2015, Frowis2018} for recent reviews of the subject.

Finally, the authors hope that the collection of topics discussed in this review proved helpful to the reader in understanding some of the key concepts concerning nonclassicality as well as the extraction of useful metrological power from quantum states.

\begin{acknowledgments}
This work was supported by the National Research Foundation of Korea (NRF) through a grant funded by the the Ministry of Science and ICT (Grant No. NRF-2019R1H1A3079890). K.C. Tan was supported by Korea Research Fellowship Program through the National Research Foundation of Korea (NRF) funded by the Ministry of Science and ICT (Grant No. 2016H1D3A1938100).
\end{acknowledgments}

\bibliography{aipsamp}

\end{document}